\keywords{Fully abstract compilation, cross-language logical relation, modular compilation}
\newcommand{\mi}[1]{\ensuremath{\mathit{#1}}}
\newcommand{\mr}[1]{\ensuremath{\mathrm{#1}}}
\newcommand{\mtt}[1]{\ensuremath{\mathtt{#1}}}
\newcommand{\mf}[1]{\ensuremath{\mathbf{#1}}}
\newcommand{\mk}[1]{\ensuremath{\mathfrak{#1}}}
\newcommand{\mc}[1]{\ensuremath{\mathcal{#1}}}
\newcommand{\ms}[1]{\ensuremath{\mathsf{#1}}}
\newcommand{\mb}[1]{\ensuremath{\mathbb{#1}}}
\newcommand{\isdef}{\ensuremath{\mathrel{\overset{\makebox[0pt]{\mbox{\normalfont\tiny\sffamily def}}}{=}}}}
\newcommand{\relmiddle}[1]{\mathrel{}\middle#1\mathrel{}}
\DeclareMathOperator\mydefsym{\ensuremath{\triangleq}}
\newcommand\bnfdef{\ensuremath{\mathrel{::=}}}
\newcommand{\llb}{\llbracket}
\newcommand{\rrb}{\rrbracket}
\newcommand{\ra}{\rightarrow}
\newcommand{\Ra}{\Rightarrow}
\newcommand{\myset}[2]{\ensuremath{\{#1 ~|~ #2\}}}
\newcommand{\Da}[1]{\ensuremath{\Downarrow^{#1}}}
\newcommand{\Dan}[2]{\ensuremath{\Downarrow^{#1}\indexx{#2}}}
\newcommand{\compskel}[3]{{\bl{\left\llbracket \tl{#1} \right\rrbracket}^{\bl{#2}}_{\bl{#3}}}}
\newcommand{\comp}[1]{\compskel{\tl{#1}}{}{}}
\newcommand{\compgen}[1]{\compskel{#1}{\tl{\mc{S}}}{\ul{\mc{T}}}}
\newcommand{\compsu}[1]{\compskel{#1}{\stlc}{\ulc}}
\newcommand{\funname}[1]{\mtt{#1}}
\newcommand{\fun}[2]{\ensuremath{{\bl{\funname{#1}(}#2{\bl{)}}}}\xspace}
\newcommand{\dom}[1]{\fun{dom}{#1}}
\newcommand{\erase}[1]{\fun{erase}{\tl{#1}}}
\newcommand{\erasen}{\funname{erase}\xspace}
\newcommand{\prot}[1]{\ensuremath{\funname{\ul{protect}_{\tl{#1}}}}}
\newcommand{\conf}[1]{\ensuremath{\funname{\ul{confine}_{\tl{#1}}}}}
\newcommand{\protd}[1]{\ensuremath{\funname{\ul{protect\mbox{-}alt}_{\tl{#1}}}}}
\newcommand{\confd}[1]{\ensuremath{\funname{\ul{confine\mbox{-}alt}_{\tl{#1}}}}}
\newcommand{\ulcname}[0]{\ensuremath{\mbox{untyped } \lambda\mbox{-calculus}}\xspace}
\newcommand{\ulc}[0]{\ensuremath{\bl{\lambda}^{\ul{\ms{u}}}}\xspace}
\newcommand{\stlc}[0]{\ensuremath{\bl{\lambda}^{\tl{\tau}}}\xspace}
\newcommand{\stlcmu}[0]{\ensuremath{\bl{\lambda}^{\tl{\tau};\tl{\mu}}}\xspace}
\newcommand{\lseal}[0]{\ensuremath{\bl{\lambda}^{seal}}\xspace}
\newcommand{\pstlc}[0]{\ensuremath{\bl{\lambda}^{\tltau}}\xspace}
\newcommand{\lc}[0]{\ensuremath{\lambda}-calculus\xspace}
\newcommand{\stlccol}[0]{NavyBlue}
\newcommand{\ulccol}[0]{WildStrawberry}
\newcommand{\neutcol}[0]{black}
\newcommand{\col}[2]{{\color{#1}{#2}}}
\newcommand{\tl}[1]{\col{\stlccol}{\mf{#1}}}				
\newcommand{\ul}[1]{\col{\ulccol }{\ms{#1}}}				
\newcommand{\bl}[1]{\ensuremath{{\col{\neutcol}{#1}}}}
\newcommand{\batype}[0]{\mc{B}\xspace}
\newcommand{\wrong}[0]{\ul{wrong}}
\newcommand{\unitv}[0]{\mtt{unit}\xspace}
\newcommand{\truev}[0]{\mtt{true}\xspace}
\newcommand{\falsev}[0]{\mtt{false}\xspace}
\newcommand{\Bool}[0]{\mtt{Bool}\xspace}
\newcommand{\Unit}[0]{\mtt{Unit}\xspace}
\newcommand{\blto}[0]{\,\bl{\to}\,}
\newcommand{\lam}[2]{\lambda #1\ldotp #2}
\newcommand{\pair}[1]{\langle#1\rangle}
\newcommand{\projone}[1]{\ensuremath{#1.1}}
\newcommand{\projtwo}[1]{\ensuremath{#1.2}}
\newcommand{\fix}[1]{{\mr{fix}}_{#1}}
\newcommand{\case}{\ensuremath{\mr{case}}}
\newcommand{\of}{\ensuremath{\mr{of}}}
\newcommand{\unit}{\ensuremath{\mr{unit}}}
\newcommand{\ifte}[3]{\mr{if}~#1~\mr{then}~#2~\mr{else}~#3}
\newcommand{\casE}[2]{\ensuremath{\mr{case}~#1~\mr{of}~#2}}
\newcommand{\caseof}[3]{\ensuremath{\mr{case}~#1~\mr{of}~\inl{x_1}\mapsto #2\mid\inr{x_2}\mapsto #3}}
\newcommand{\inl}[1]{\ensuremath{\mr{inl}~#1}}
\newcommand{\inr}[1]{\ensuremath{\mr{inr}~#1}}
\newcommand{\tlGamma}[0]{\ensuremath{\tl{\Gamma}}}
\newcommand{\ulGamma}[0]{\ensuremath{\ul{\Gamma}}}			
\newcommand{\tlgamma}[0]{\ensuremath{\tl{\gamma}}}
\newcommand{\ulgamma}[0]{\ensuremath{\ul{\gamma}}}
\newcommand{\tlt}[0]{\ensuremath{\tl{t}}\xspace}
\newcommand{\ult}[0]{\ensuremath{\ul{t}}\xspace}
\newcommand{\tlv}[0]{\ensuremath{\tl{v}}\xspace}
\newcommand{\ulv}[0]{\ensuremath{\ul{v}}\xspace}
\newcommand{\tle}[0]{\ensuremath{\tl{\emptyset}}\xspace}
\newcommand{\ule}[0]{\ensuremath{\ul{\emptyset}}\xspace}
\newcommand{\tltau}[0]{\ensuremath{\tl{\tau}}\xspace}
\newcommand{\tlhat}[1]{\ensuremath{\tl{\hat{#1}}}}
\newcommand{\ulfix}[0]{\ul{\mi{fix}}}
\newcommand{\tlfix}[0]{\tl{\fix{}}}
\newcommand{\red}[0]{\ensuremath{\hookrightarrow}}
\DeclareMathOperator\stlcto{\ensuremath{\tl{\red}}}
\DeclareMathOperator\ulcto{\ensuremath{\ul{\red}}}
\newcommand{\ctx}[0]{\ensuremath{\mb{E}}}
\newcommand{\tlC}[0]{\ensuremath{\tl{\ctx}}}
\newcommand{\tlH}[1]{\ensuremath{\tl{[#1]}}}
\newcommand{\ulC}[0]{\ensuremath{\ul{\ctx}}}
\newcommand{\ulH}[1]{\ensuremath{\ul{[#1]}}}
\newcommand{\progctx}{\mk{C}}
\newcommand{\upc}[0]{\ul{\progctx}}
\newcommand{\tpc}[0]{\tl{\progctx}}
\newcommand{\subst}[2]{\ensuremath{\bl{[}#1\bl{/}#2\bl{]}}}	
\newcommand{\tlsub}[2]{\subst{\tl{#1}}{\tl{#2}}}
\newcommand{\ulsub}[2]{\subst{\ul{#1}}{\ul{#2}}}
\newcommand{\langsp}[1]{\ensuremath{\ms{#1}}\xspace}
\newcommand{\langspfun}[2]{\ensuremath{\langsp{#1}(#2)}}
\newcommand{\tllsfun}[2]{\langspfun{\tl{#1}}{\tl{#2}}}
\newcommand{\ullsfun}[2]{\langspfun{\ul{#1}}{\ul{#2}}}
\newcommand{\stepsfungen}[2]{\ensuremath{\langspfun{lev^{#1}}{#2}}}
\newcommand{\latergen}[1]{\ensuremath{\triangleright^{#1}}}
\newcommand{\obswfungen}[2]{\langspfun{O^{#1}}{#2}}
\newcommand{\futwgen}[1]{\ensuremath{\sqsupseteq^{#1}}}
\newcommand{\strfutwgen}[1]{\ensuremath{\sqsupset^{#1}}}
\newcommand{\stepsfun}[1]{\stepsfungen{}{#1}}
\DeclareMathOperator\later{\latergen{}}
\newcommand{\obswfun}[1]{\obswfungen{}{#1}}
\DeclareMathOperator\futw{\futwgen{}}
\DeclareMathOperator\strfutw{\strfutwgen{}}
\newcommand{\WW}[0]{\ms{\underline{W}}\xspace}
\newcommand{\CWstepsfun}[1]{\stepsfungen{\CW}{#1}}
\DeclareMathOperator\CWlater{\latergen{\CW}}
\newcommand{\CWobswfun}[1]{\obswfungen{\CW}{#1}}
\DeclareMathOperator\CWfutw{\futwgen{\CW}}
\DeclareMathOperator\CWfutwpub{\futwpubgen{\CW}}
\newcommand{\logrelgen}[1]{\ensuremath{\operatorname{\approx}^{#1}}}
\DeclareMathOperator\logrel{\logrelgen{}}
\newcommand{\genlogrel}[0]{\ensuremath{\square}} 
\newcommand{\genrel}[3]{\ensuremath{\mc{#1}\llb\tl{#2}\rrb^{#3}}}
\newcommand{\valrel}[1]{\genrel{V}{#1}{\tlrho}}
\newcommand{\contrel}[1]{\genrel{K}{#1}{\tlrho}}
\newcommand{\termrel}[1]{\genrel{E}{#1}{\tlrho}}
\newcommand{\envrel}[1]{\genrel{G}{#1}{\tlrho}}
\newcommand{\arbsim}{\ensuremath{\mathrel{\square}}}
\newcommand{\UVal}{\ensuremath{\tl{\mr{UVal}}}\xspace}
\newcommand{\myomega}{\ensuremath{\tl{\mr{omega}}}}
\newcommand{\downgrade}{\ensuremath{\tl{\mr{downgrade}}}}
\newcommand{\upgrade}{\ensuremath{\tl{\mr{upgrade}}}}
\newcommand{\emulate}{\ensuremath{\tl{\mr{emulate}}}}
\newcommand{\ef}[1]{\ensuremath{\tl{\langle\!\langle}\ul{#1}\tl{\rangle\!\rangle}}}
\newcommand{\inDV}[1]{\funname{in_{\ms{#1}}}}
\newcommand{\unkUVal}{\funname{unkUVal}}
\newcommand{\caseDV}[1]{\funname{case_{#1}}}
\newcommand{\EmulDV}{\ensuremath{\tl{\mtt{EmulDV}}}\xspace}
\newcommand{\replemul}[1]{\ensuremath{\fun{repEmul}{\tl{#1}}}}
\newcommand{\toemul}[2]{\ensuremath{\fun{toEmul}{{#1}}\indexx{#2}}}
\newcommand{\precise}{\ensuremath{\mtt{precise}}}
\newcommand{\imprecise}{\ensuremath{\mtt{imprecise}}}
\newcommand{\inject}[1]{\funname{inject\indexx{{#1}}}}
\newcommand{\extractf}[1]{\funname{extract\indexx{{#1}}}}
\newcommand{\indexx}[1]{\ensuremath{_{\ms{#1}}}}
\newcommand{\nn}[0]{\indexx{n}}
\newcommand{\np}[0]{\indexx{n;p}}
\newcommand{\tlnn}[0]{\tl{\indexx{n}}}
\newcounter{typerule}
\crefname{typerule}{rule}{rules}
\newcommand{\typeruleInt}[5]{
	\def\thetyperule{#1}%
	\refstepcounter{typerule}%
	\label{tr:#4}%
  \ensuremath{\begin{array}{c}#5 \inference{#2}{#3}\end{array}} 
}
\newcommand{\typerulenolabel}[4]{
  \typeruleInt{#1}{#2}{#3}{#4}{}
}
\newcommand{\typeruleAlt}[3]{
    \expandafter\gdef\csname trname@#1\endcsname{#2}%
    \hyperdef{rule}{#1}{#3}%
    }
\newcommand{\reftyperuleAlt}[1]{%
  \hyperref{}{rule}{#1}{\csname trname@#1\endcsname}%
  }
\def\botrule{\vspace{-1mm}\hrule\vspace{1mm}}
\DeclareMathOperator\nsimeq{\ensuremath{\mathrel{\not\simeq}}}
\DeclareMathOperator\nequiv{\ensuremath{\equiv\!\!\!\!\!/\ }}
\DeclareMathOperator\ceq{\ensuremath{\mathrel{\simeq_{\mi{ctx}}}}}
\DeclareMathOperator\nceq{\mathrel{\nsimeq_{\mi{ctx}}}}
\DeclareMathOperator\ceqstlc{\tl{\ceq}}
\DeclareMathOperator\cequlc{\ul{\ceq}}
\def\teqaux#1{\vcenter{\hbox{\ooalign{\hfil
       \raise6pt \hbox{\scriptsize{T}}\hfil\cr\hfil
       $=$}}}}
\def\ceqwaux#1{\vcenter{\hbox{\ooalign{\hfil
       \raise6pt \hbox{\scriptsize{w}}\hfil\cr\hfil
       $\ceq$}}}}
\crefname{lem}{Lemma}{Lemmas}
\crefname{thm}{Theorem}{Theorems}
\crefname{defi}{Definition}{Definitions}
\crefname{exa}{Example}{Examples}
\def\renewtheorem#1{%
  \expandafter\let\csname#1\endcsname\relax
  \expandafter\let\csname c@#1\endcsname\relax
  \gdef\renewtheorem@envname{#1}
  \renewtheorem@secpar
}
\def\renewtheorem@secpar{\@ifnextchar[{\renewtheorem@numberedlike}{\renewtheorem@nonumberedlike}}
\def\renewtheorem@numberedlike[#1]#2{\newtheorem{\renewtheorem@envname}[#1]{#2}}
\def\renewtheorem@nonumberedlike#1{  
\def\renewtheorem@caption{#1}
\edef\renewtheorem@nowithin{\noexpand\newtheorem{\renewtheorem@envname}{\renewtheorem@caption}}
\renewtheorem@thirdpar
}
\def\renewtheorem@thirdpar{\@ifnextchar[{\renewtheorem@within}{\renewtheorem@nowithin}}
\def\renewtheorem@within[#1]{\renewtheorem@nowithin[#1]}
\theoremstyle{plain}\renewtheorem{lem}[thm]{Lemma}
\theoremstyle{definition}\renewtheorem{exa}[thm]{Example}
\theoremstyle{definition}\renewtheorem{defi}[thm]{Definition}
\renewcommand{\CWfutwpub}{\futw}
\renewcommand{\CWfutw}{\futw}
\renewcommand{\CWstepsfun}{\stepsfun}
\renewcommand{\CWlater}{\later}
\renewcommand{\CWobswfun}{\obswfun}
\renewcommand{\valrel}[1]{\genrel{V}{#1}{}}
\renewcommand{\contrel}[1]{\genrel{K}{#1}{}}
\renewcommand{\termrel}[1]{\genrel{E}{#1}{}}
\renewcommand{\envrel}[1]{\genrel{G}{#1}{}}
\renewcommand{\unkUVal}{\mr{unk}}
\renewcommand{\mydefsym}{\isdef}
\begin{document}

\title[Modular, Fully-Abstract Compilation by Approximate Back-Translation]{Modular, Fully-Abstract Compilation \\by Approximate Back-Translation\rsuper*}
\titlecomment{{\lsuper*}extended version of the paper in POPL'16.}

\author[D.~Devriese]{Dominique Devriese\rsuper{a}}	
\address{\lsuper{a}imec-DistriNet, KU Leuven, Belgium}	
\email{first.last@cs.kuleuven.be}  

\author[M.~Patrignani]{Marco Patrignani\rsuper{b}$^{,\dagger}$}	
\address{\lsuper{b}MPI-SWS, Saarbr\"ucken, Germany}	
\email{\lsuper{b}first.last@mpi-sws.org}  
\thanks{$\lsuper{\dagger}$Currently at CISPA} 

\address{\lsuper{c}UGent, Belgium}	
\email{first.last@ugent.be}  

\author[F.~Piessens]{Frank Piessens\rsuper{a}}	
\author[S.~Keuchel]{Steven Keuchel\rsuper{c}}	

\begin{abstract}
A compiler is {\em fully-abstract} if the compilation from source language programs to target language programs reflects and preserves behavioural equivalence.
Such compilers have important security benefits, as they limit the power of an attacker interacting with the program in the target language to that of an attacker interacting with the program in the source language.
Proving compiler full-abstraction is, however, rather complicated.
A common proof technique is based on the \emph{back-translation} of target-level program contexts to behaviourally-equivalent source-level contexts.
However, constructing such a back-translation is problematic when the source language is not strong enough to embed an encoding of the target language.
For instance, when compiling from a simply-typed \lc (\pstlc) to an untyped \lc (\ulc), the lack of recursive types in \pstlc prevents such a back-translation.

We propose a general and elegant solution for this problem.
The key insight is that it suffices to construct an \emph{approximate} back-translation. The approximation is only accurate up to a certain number of steps and conservative beyond that, in the sense that the context generated by the back-translation may diverge when the original would not, but not vice versa.
Based on this insight, we describe a general technique for proving compiler full-abstraction and demonstrate it on a compiler from \pstlc to \ulc.
The proof uses asymmetric cross-language logical relations and makes innovative use of step-indexing to express the relation between a context and its approximate back-translation.
The proof extends easily to common compiler patterns such as modular compilation and, to the best of our knowledge, it is the first compiler full abstraction proof to have been fully mechanised in Coq.
We believe this proof technique can scale to challenging settings and enable simpler, more scalable proofs of compiler full-abstraction.
\end{abstract}

\maketitle

\smallskip

We typeset source and target language terms in \tl{blue} resp.\ \ul{pink}; we recommend to view/print this paper in colour for maximum clarity.

\section{Introduction}
\label{sec:introduction}

A compiler is {\em fully-abstract} if the compilation from source language programs to target language programs preserves and reflects behavioural equivalence~\citep{abadiFa,faEHM}.
Such compilers have important security benefits. It is often realistic to assume that attackers can interact with a program in the target language, and depending on the target language this can enable attacks such as  improper stack manipulation, breaking control flow guarantees,
reading from or writing to private memory of other components,
inspecting or modifying the implementation of a function etc.~\citep{abadiFa,Kennedy,scoo-j,protOnLayRand,fstar2js,Agten2012SecComp}.
A fully-abstract compiler is sufficiently defensive to rule out such attacks: the power of an attacker interacting with the program in the target language is limited to attacks that could also be performed by an attacker interacting with the program in the source language.

Formally, we model a compiler as a function $\comp{\cdot}$ that maps source language terms $\tl{t}$ to target language terms $\comp{\tl{t}}$.
Elements of the source language are typeset in a \tl{blue}, \tl{bold} font, while elements of the target language are typeset in a \ul{pink}, \ul{sans\mbox{-}serif} font.
Roughly, the compiler is fully-abstract, if for any
two source language terms $\tl{t_1}$ and $\tl{t_2}$, we have that they
are behaviourally equivalent ($\tl{t_1 \ceqstlc t_2}$) if and only if their compiled counterparts
are behaviourally equivalent ($\comp{\tl{t_1}} \cequlc\comp{\tl{t_2}}$)~\citep{abadiFa}.
The notion of behavioural equivalence used here is the canonical notion of contextual
equivalence: two terms are equivalent if they behave the same when
plugged into any valid context.
Specifically, we take contextual equivalence to be equi-termination: $t\ceq t' \mydefsym \forall\progctx,\progctx[t]\Da{} \iff \progctx[t']\Da{}$.
The universal quantification over contexts \progctx\ ensures that the results produced by $t$ and $t'$ are the same~\citep{lcfConsidered,definFA}.

The full-abstraction property can be split into two parts: the right-to-left implication and the
left-to-right implication, which we call
(contextual) equivalence \emph{reflection} and \emph{preservation} respectively.

\subsubsection*{Equivalence reflection}
($\tl{t_1 \ceqstlc t_2} \Leftarrow \comp{\tl{t_1}}
\cequlc\comp{\tl{t_2}}$)
requires that if the compiler produces equivalent target programs,
then the source programs must have been equivalent. In other words, non-equivalent source programs must be compiled to non-equivalent target programs. Intuitively, this
property captures an aspect of compiler correctness: if programs with
different source language behaviour become equivalent after
compilation, the compiler must have incorrectly compiled at least one
of them (this is also called adequacy of the translation by~\citet{SchmidtSchauss201598}).

\begin{figure}
  \centering
  \includegraphics{proving-compiler-correctness.tikz}
  \caption{Proving one half of full-abstraction: compiler correctness. Only one direction of this half is presented ($\Rightarrow$), the other one follows by symmetry.}
  \label{fig:proving-compiler-correctness}
\end{figure}
We build on cross-language logical relations: a technique that has been proposed for proving compiler correctness~\citep{Hur:2011:KLR:1926385.1926402,bistcc,realizability}.
The general idea of this approach is depicted in
\cref{fig:proving-compiler-correctness} (purposely ignoring
language-specific things such as the types of the terms involved). The
proof starts from the knowledge that
$\comp{\tl{t_1}}\cequlc\comp{\tl{t_2}}$ and sets out to prove that $\tl{t_1 \ceqstlc t_2}$.
That is, for an arbitrary valid context $\tpc$, it shows that
$\tl{\progctx\tlH{t_1} \Da{}}$ if and only if
$\tl{\progctx\tlH{t_2}\Da{}}$. By symmetry, it suffices to show that
$\tl{\progctx\tlH{t_1} \Da{}} \Ra\tl{\progctx\tlH{t_2}\Da{}}$.

The idea of the approach is to define a cross-language logical relation
$ \tl{t} \logrel \ul{t}$ that expresses when a compiled term
$\ul{t}$ behaves as a target-level version of source-level term
$\tl{t}$.
This logical relation is not compiler-specific: it should be understood as a specification of a target-level calling convention rather than precise representation choices for a specific compiler.
If we can then prove that any term is logically related to
its compilation ($ \tl{t} \logrel \comp{\tl{t}}$), and that the
same result holds for contexts ($ \tpc \logrel \comp{\tpc}$), then equivalence reflection follows.%
\footnote{
    As contexts $\tpc$ are also programs, they can be compiled with the same compiler for terms.
}
Starting from $ \tl{t_1} \logrel \comp{\tl{t_1}}$ and
$ \tl{t_2} \logrel \comp{\tl{t_2}}$ and
$ \tpc \logrel \comp{\tpc}$, the proof uses the
inherent compositionality of logical relations to know
$ \tl{\progctx\tlH{t_1}} \logrel \comp{\tpc}\ulH{\bl{\comp{\tl{t_1}}}}$
and the same for $\tl{t_2}$.
If the logical relations are constructed adequately, then related terms necessarily equi-terminate.
Thus, $\tl{\progctx\tlH{t_1} \Da{}}$ iff $\comp{\tpc}\ulH{\bl{\comp{\tl{t_1}}}} \ul{\Da{}}$ and similarly for $\tl{t_2}$.
In particular, this yields the implications (1) and (3) in \cref{fig:proving-compiler-correctness}.
Since implication (2) follows directly from the hypothesis of (contextual) equivalence
for $\comp{\tl{t_1}}$ and $\comp{\tl{t_2}}$, the proof for equivalence reflection is finished.

This direction of compiler full-abstraction is often called \emph{compiler correctness}, as a compiler that is correct trivially has this property.
However, it is important to note that full abstraction alone does not yield correctness as it only talks about equivalence classes and not about respecting a cross-language relation that encodes the meaning of compiled code.
For example, a fully-abstract compiler can swap how \tl{\truev} and \tl{\falsev} are compiled (and how compiled boolean operators use booleans) and still be fully abstract.
However, this intuitively violates correctness, that intuitively tells that \tl{\truev} is compiled to \ul{\truev}.
This is why often, fully-abstract compilers are also proven to be correct (as we also do), to both get one half of full abstraction and be sure to respect the intended meaning of compiled programs.

\subsubsection*{Equivalence preservation}
($\tl{t_1 \ceqstlc t_2}\Ra \comp{\tl{t_1}} \cequlc\comp{\tl{t_2}}$)
requires that equivalent programs remain equivalent after compilation.
This means that no matter what target-level manipulations are done on
compiled programs, the programs must behave equivalently if the source
programs were equivalent. This precludes all sorts
of target-level attacks that break source-level guarantees.

\begin{figure}
  \centering
  \includegraphics{proving-compiler-security-exact.tikz}
  \caption{Proving the other half of full-abstraction: compiler security.}
  \label{fig:proving-compiler-security-exact}
\end{figure}
If the source language is strong enough, it is possible to apply a
strategy analogous to proving equivalence reflection for
proving preservation, as depicted in
\cref{fig:proving-compiler-security-exact}.\footnote{Actually, both
  \cref{fig:proving-compiler-security-exact,fig:proving-compiler-security-approx}
  are simplifications. Perceptive readers may notice that the proof
  depicted here would falsely imply equivalence preservation
  for \emph{any} correct compiler. We correct the simplifications in
  \cref{sec:approx-final}.} Given an arbitrary target-level context
$\upc$, we need to prove that $\upc\ulH{\comp{\tl{t_1}}}\ul{\Da{}}$
implies $\upc\ulH{\comp{\tl{t_2}}}\ul{\Da{}}$. In a
sufficiently-powerful source language, we can construct a
\emph{back-translation} $\ef{ \upc }$ for any target-level context
$\upc$. Using the same logical relation as above, it then suffices to
prove that $\ef{\upc}$ is a valid source-level context and that
$ \ef{\upc} \logrel \upc$ for any valid context $\upc$. Together with
$\tl{t_1}\logrel \comp{\tl{t_1}}$, and similarly for $\tl{t_2}$,
compositionality and adequacy of the logical relation then yield
implications (1) and (3) in the Figure. The remaining implication (2)
follows from the assumed (contextual) equivalence of $\tl{t_1}$ and
$\tl{t_2}$.

Constructing a back-translation of contexts is not easy, but it can be done if the source language is sufficiently expressive.
Consider, for example, a compiler that translates terms from
a simply-typed \lc \emph{with} recursive types (\stlcmu) to an \ulcname (\ulc).
Constructing a back-translation of target-level contexts can be done based on a \stlcmu type that can represent
arbitrary \ulc values. Particularly, we can encode the unitype of \ulc values in a type $\UVal$ as follows:
\begin{align*}
  \UVal &\isdef \tl{\mu\alpha \ldotp \batype \uplus (\alpha\times\alpha) \uplus (\alpha \uplus \alpha) \uplus (\alpha \to \alpha)}
\end{align*}
given that \ulc has base values of type \tl{\batype}, pairs, coproducts and lambdas. In other words, all \ulc values can be
represented as \stlcmu values of type $\UVal$. We can then construct a
back-translation of \ulc contexts to \stlcmu contexts such that the latter work with values
in $\UVal$ wherever the original \ulc contexts work with arbitrary \ulc values.

\subsubsection*{Contributions of this paper}
If the types of the source language are not powerful enough to embed
an encoding of target terms, is it possible to have a fully-abstract
compiler between those languages? In this paper we answer positively
to this question and develop a general technique for proving this. We
instantiate this proof technique and develop a fully-abstract compiler
from a simply-typed \lc \emph{without} recursive types (\pstlc) to an
untyped \lc (\ulc). With such a source language, we cannot construct a
type like $\UVal$ to represent the values that a \ulc context works
with. Fortunately, we can solve this problem by observing that a fully
accurate back-translation is sufficient for the proof but in fact not
necessary. An \emph{approximate} back-translation is enough for the
full-abstraction proof to work, without sacrificing the overall
simplicity and elegance of the proof technique. The basic idea is
depicted in \cref{fig:proving-compiler-security-approx}. The
differences from \cref{fig:proving-compiler-security-exact} are the
use of asymmetric logical relations $\lesssim$ and $\gtrsim$ (also
known as logical approximations) to express (roughly) that a term (or
context) $\tl{t}$ terminates whenever $\ul{t}$ does
($\tl{t}\gtrsim\ul{t}$) and vice versa ($\tl{t} \lesssim\ul{t}$) and
the addition of subscripts $n$ where logical approximations hold only
up to a limited number of steps $n$. Note that $n$ in the figure is
defined as the number of steps in the evaluation
$\upc\ulH{\comp{\tl{t_1}}} \ul{\Downarrow_n}$ and that we write $\_$
for an unknown number of steps.

\begin{figure}
  \centering
  \includegraphics{proving-compiler-security-approx.tikz}
  \caption{Proving equivalence preservation using an $n$-\emph{approximate} back-translation. An \_ subscript indicates \emph{any} number of steps.}
  \label{fig:proving-compiler-security-approx}
\end{figure}
The proof starts, again, from an arbitrary target-level context $\upc$
and the knowledge that $\ul{\upc\ulH{\comp{\tl{t_1}}}\Dan{}{n}}$.
We then construct a \pstlc context $\tl{\ef{\upc}\nn}$ that satisfies two conditions.
First, it approximates $\upc$ \emph{up to $n$ steps}: $ \tl{\ef{\upc}\nn} \gtrsim\nn \upc$.
This means that if $\upc\ulH{t}$ terminates in less than $n$ steps then $\tl{\ef{\upc}\nn}\tlH{t}$ will also terminate for a term $\tl{t}$ related to $\ul{t}$.
This, together with the knowledge that
$ \tl{t} \gtrsim \comp{\tl{t}}$, allows us to deduce
implication (1) in the figure. As before, implication (2) follows
directly from the (contextual) equivalence of $\tl{t_1}$ and $\tl{t_2}$.

Then we use a second condition on the
$n$-approximation $\tl{\ef{\upc}\nn}$, namely that it is \emph{conservative}, to deduce implication (3).
Intuitively, the source-level context produced by the $n$-approximation may diverge in situations where the original did not, but not vice versa.
Intuitively, the divergence will occur when the precision $n$ of the approximate back-translation $\tl{\ef{\upc}_n}$ is not sufficient for the context to accurately simulate the behaviour of $\upc$.
This is expressed by the logical approximation
$ \tl{\ef{ \upc}\nn} \lesssim \upc$ which
implies that if $\tl{\ef{\upc}\nn}\tlH{t}$ terminates (in any number of steps), then so must $\upc\ulH{t}$.
This allows us to deduce implication (3).

The advantage of this approximate back-translation approach is that it
can be easier to construct a conservative approximate back-translation
than a full one. For example, considering \pstlc without
recursive types, we can construct a family of \pstlc types $\tl{\UVal\nn}$,
indexed by non-negative numbers $n$:
\begin{align*}
  \tl{\UVal\indexx{0}} &\isdef \tl{\Unit}\\
  \tl{\UVal\indexx{n\mbox{+}1}} &\isdef \tl{
                       \begin{multlined}\Unit \uplus \batype \uplus (\UVal\nn\times\UVal\nn) \uplus\\
                          (\UVal\nn \uplus \UVal\nn) \uplus (\UVal\nn \to \UVal\nn) \text{.}
                       \end{multlined}}
\end{align*}
Without giving full details here, $\tl{\UVal_n}$ is an $n$-level
unfolding of $\tl{\UVal}$ with additional unit values at every level
to represent failed approximations. This approximate version of
$\UVal$ is enough to construct a conservative $n$-approximate
back-translation of an untyped program context, and as such, it allows
us to circumvent the lack of expressiveness of \pstlc without
recursive types.

In order to make this approximate back-translation approach work, we
need a way to formalise the relation between an untyped context and
its approximate back-translation. However, it turns out that existing
well-known techniques from the field of logical relations are almost
directly applicable. Asymmetric logical relations (like
$ \tl{\ef{ \upc}\nn} \lesssim \upc$ above) are a well-established
technique. More interestingly, the approximateness of the relation can
very naturally be expressed using step-indexed logical relations.
Despite this naturality, it appears that this use of step-indexing is
novel. The technique is normally used as a way to construct
well-founded logical relations and one is not actually interested in
terms being related only up to a limited number of steps.

An earlier version of this paper, published at POPL 2016,
introduced the technique of approximate back-translations, and applied
it to prove full abstraction for a whole-program compiler from \pstlc to  \ulc ~\citep{Devriese2016FullyAbsApprox}.
This journal version extends the conference version in two ways.
First, we extend the full abstraction proof to a {\em modular} compiler from \pstlc to  \ulc; i.e., one that operates on open programs (or, components) and links them together after compilation.
This is how most modern compilers operate for efficiency reasons, so this furthers our belief that this proof technique scales to real-world compilers.
Moreover, the work required to extend the proof to a modular compiler is relatively small, so this provides evidence of the broader applicability of our proof technique.
Finally, proving modular full abstraction yields that the compiler ensures component-based compartmentalisation, which provides more fine-grained security guarantees than plain full abstraction.

Second, the original proof has been completely mechanised in Coq, providing additional assurance about the correctness of our results.
Additionally, this highlights that the reasoning principle behind our proof technique is amenable to mechanisation.
To the best of our knowledge, this is the first fully mechanised proof of compiler full-abstraction.

To summarise, the contributions of this work are:
\begin{itemize}
\item a new and general proof technique for proving compiler modular full-abstraction using asymmetric, cross-language logical relations and targeting untyped languages;
\item a fully mechanised  instantiation of that proof technique showing full abstraction of a modular compiler from  a simply-typed \lc without recursive types to the untyped \lc ;
\item a novel application of step-indexed logical relations for expressing approximateness of a back-translation.
\end{itemize}

\smallskip 

This paper is structured as follows.
First, we formalise the source and target languages \pstlc and \ulc (\cref{sec:source-target}).
Second, we present the cross-language logical relations that we use to express the relation between \pstlc terms and their compilations as well as between \ulc contexts and their back-translation (\cref{sec:logical-relations}).
We define the compiler in \cref{sec:compiler}. It applies type erasure and dynamic type wrappers that enforce the requirements and guarantees of \pstlc types during execution.
We then present the approximate back-translation (\cref{sec:appr-back-transl}) which we use to prove compiler full-abstraction (\cref{sec:comp-fa}).
Then we present how to scale the proof technique to modular compilers (\cref{sec:modular}).
Finally, we discuss the mechanisation of the proofs (\cref{sec:coq-proof}).
We then offer some discussion (\cref{sec:disc}), compare with related work (\cref{sec:related-work}) and conclude (\cref{sec:conclusion}).

\section{Source and Target Languages}
\label{sec:source-target}

\begin{figure*}[t]
  \begin{align*}
    \tl{t} &\bnfdef \tl{\unitv} \mid \tl{\truev} \mid \tl{\falsev}\mid \tl{\lambda x:\tau.\ t}\mid \tl{x} \mid \tl{t~t} \mid \tl{\projone{t}} \mid \tl{\projtwo{t}} \mid \tl{\pair{t,t}} \mid\tl{\inl{t}} \mid \tl{\inr{t}}
    \\ &\quad \mid \tl{\caseof{t}{t}{t}}\mid\tl{t;t} \mid\ \tl{\ifte{t}{t}{t}}\mid \tl{\fix{\tau \ra \tau}~t}
    \\
    \tl{v} &\bnfdef \tl{\unitv} \mid \tl{\truev} \mid \tl{\falsev}\mid \tl{\lambda x:\tau.\ t}\mid \tl{\pair{v,v}} \mid \tl{\inl{v}}\mid\tl{\inr{v}}
    \\
    \tl{\tau} &\bnfdef \tl{\Unit} \mid \tl{\Bool} \mid \tl{\tau\to\tau}\mid\tl{\tau\times\tau}\mid \tl{\tau\uplus\tau}
    \\
    \tlGamma&\bnfdef \tle \mid \tlGamma,\tl{x}:\tltau
    \\
    \tlC &\bnfdef \tlH{\cdot} \mid \tlC~\tl{t} \mid \tl{v}~\tlC \mid \tl{\projone{\tlC}} \mid \tl{\projtwo{\tlC}} \mid \tl{\pair{\tlC,t}} \mid \tl{\pair{v,\tlC}} \mid \tl{\inl{\tlC}}\mid \tl{\inr{\tlC}}\mid \tl{\casE{\tlC}{\inl{x_1}\mapsto t_1 \mid \inr{x_2}\mapsto t_2}} 
    \\
    &\quad \mid \tl{\tlC;t} \mid \tl{\ifte{\tlC}{t}{t}} \mid \tl{\fix{\tau \ra \tau}\ \tlC}
  \end{align*}

\botrule
  \begin{mathpar}


    \typerulenolabel{\stlc-unit}{ }{
        \tlGamma\vdash\tl{\unitv}:\tl{\Unit}
    }{stlc-unit} \and

    \typerulenolabel{\stlc-true}{ }{
        \tlGamma\vdash\tl{\truev}:\tl{\Bool}
    }{stlc-true} \and


    \typerulenolabel{\stlc-Type-var}{
        (\tl{x}:\tltau)\in\tlGamma
    }{
        \tlGamma\vdash\tl{x}:\tltau
    }{stlc-var} \and

    \typerulenolabel{\stlc-Type-fun}{
        \tlGamma, (\tl{x}:\tltau)\vdash\tl{t}:\tl{\tau'}
    }{
        \tlGamma\vdash\tl{\lambda x:\tau.~t}:\tl{\tau\to\tau'}
    }{stlc-lambda} \and

    \typerulenolabel{\stlc-Type-pair}{
        \tlGamma\vdash\tl{t_1}:\tl{\tau_1}                      \\
        \tlGamma\vdash\tl{t_2}:\tl{\tau_2}
    }{
        \tlGamma\vdash\tl{\pair{t_1,t_2}}:\tl{\tau_1\times\tau_2}
    }{stlc-pair} \and

    \typerulenolabel{\stlc-Type-proj1}{
        \tl\Gamma\vdash\tl{t}:\tl{\tau_1\times\tau_2}
    }{
        \tlGamma\vdash\tl{\projone{t}}:\tl{\tau_1}
    }{stlc-projone} \and


    \typerulenolabel{\stlc-Type-app}{
        \tlGamma\vdash\tl{t}:\tl{\tau'\to\tau}                &
        \tlGamma\vdash\tl{t'}:\tl{\tau'}
    }{
        \tlGamma\vdash\tl{t~t'}:\tltau
    }{stlc-app} \and



    \typerulenolabel{\stlc-Type-inl}{
        \tlGamma\vdash\tl{t}:\tltau
    }{
        \tlGamma\vdash\tl{\inl{t}} : \tl{\tau\uplus\tau'}
    }{stlc-inl} \and


    \typerulenolabel{\stlc-Type-Fix}{
      \tlGamma\vdash \tl{t} : \tl{(\tau_1 \ra \tau_2) \ra (\tau_1 \ra \tau_2)}
    }{
      \tlGamma\vdash\tl{\fix{\tau_1 \ra \tau_2}}\ \tl{t} : \tl{\tau_1\to\tau_2}
    }{stlc-fix} \and

    \typerulenolabel{\stlc-Type-case}{
        \tlGamma\vdash\tl{t}:\tl{\tau_1\uplus\tau_2}           \\
        \tlGamma, (\tl{x_1}:\tl{\tau_1}) \vdash \tl{t_1} : \tltau   &
        \tlGamma, (\tl{x_2}:\tl{\tau_2}) \vdash \tl{t_2} : \tltau
    }{
        \tlGamma\vdash\tl{\casE{t}{\inl{x_1}\mapsto t_1 \mid \inr{x_2}\mapsto t_2}} : \tltau
    }{stlc-case} \and

    \typerulenolabel{\stlc-Type-if}{
        \tlGamma\vdash\tl{t}:\tl{\Bool}          \\
        \tlGamma \vdash \tl{t_1} : \tltau  &
        \tlGamma \vdash \tl{t_2} : \tltau
    }{
        \tlGamma\vdash\tl{\ifte{t}{t_1}{t_2}} : \tltau
    }{stlc-if} \and

    \typerulenolabel{\stlc-Type-seq}{
        \tlGamma \vdash \tl{t_1} : \tl{\Unit}  \\
        \tlGamma \vdash \tl{t_2} : \tltau
    }{
        \tlGamma\vdash\tl{t_1;t_2} : \tltau
    }{stlc-seq}
  \end{mathpar}

 \botrule
  \begin{mathpar}
	\typerulenolabel{\stlc-Eval-ctx}{
		\tl{t}\stlcto\tl{t'}
	}{
		\tlC\tlH{t}\stlcto\tlC\tlH{t'}
	}{stlc-evalctx} \and

	\typerulenolabel{\stlc-Eval-beta}{
	}{
		\tl{(\lambda x:\tau.~t)~v} \stlcto \tl{t}\tlsub{v}{x}
	}{stlc-evalbeta} \and

	\typerulenolabel{\stlc-Eval-proj1}{
	}{
		\tl{\projone{\pair{v_1,v_2}}} \stlcto \tl{v_1}
	}{stlc-evalprojone} \and



	\typerulenolabel{\stlc-Eval-case-inl}{
	}{
          \tl{\case{~\inl{v}~\of~}\left|
              \begin{aligned}[c]&\inl{x_1}\mapsto t_1\\
              &  \inr{x_2}\mapsto t_2
            \end{aligned}\right.
          \stlcto \tl{t_1}\tlsub{v}{x_1}}
	}{stlc-evalcase-inl} \and


	\typerulenolabel{\stlc-Eval-if-v}{
      \tl{v} \equiv \tl{\truev} \Ra \tl{t'} \equiv \tl{t_1} \\
      \tl{v} \equiv \tl{\falsev} \Ra \tl{t'} \equiv \tl{t_2}
	}{
		\tl{\ifte{v}{t_1}{t_2}} \stlcto \tl{t'}
	}{stlc-evalif-v} \and


	\typerulenolabel{\stlc-Eval-seq-next}{
	}{
		\tl{\unit;t} \stlcto \tl{t}
	}{stlc-evalseq-next} \and

    \typerulenolabel{\stlc-Eval-fix}{
    }{
      \begin{multlined}
        \tl{\fix{\tau_1 \ra \tau_2}\ (\lambda x:\tau_1 \ra \tau_2\ldotp t)} \stlcto\\
          \tl{t\tlsub{(\lambda\ y: \tau_1\ldotp \fix{\tau_1 \ra \tau_2}\ (\lambda x:\tau_1 \to \tau_2\ldotp t)\ y)}{x}}
      \end{multlined}
    }{stlc-eval-fix}
  \end{mathpar}

  \caption{Syntax, static and dynamic semantics of the source language \pstlc (selection of). 
  We denote syntactic equivalence as $\equiv$.
  } 
\label{fig:stlc-fix}
\end{figure*}

The source language \pstlc is presented in \cref{fig:stlc-fix}. It is a
strict, simply-typed \lc with \tl{\Unit}, \tl{\Bool}, lambdas,
product and sum types and a $\tlfix$ operator providing
general recursion. The figure presents the syntax of terms $\tlt$,
values $\tlv$, types $\tltau$, typing contexts $\tlGamma$ and
evaluation contexts $\tlC$. 
We indicate the reduction relation with $\stlcto$; we define that a term \tl{t} terminates (\tl{t\Downarrow}) if it reduces in a finite number of steps to a value: $\tl{t\Downarrow} \isdef \exists \tl{n}, \tl{v}. \tl{t\stlcto^n v}$.
Apart from the type and evaluation rules
for $\tl{\fix{\tau_1\to \tau_2}}$, the typing rules and evaluation
rules are standard. The evaluation rules use evaluation
contexts to impose a strict evaluation order. The type and evaluation
rule for $\tl{\fix{\tau_1\to\tau_2}}$ are somewhat special compared to a more standard definition (see e.g.~\cite{pierce2002types}): the
operator is restricted to function types and an additional
$\eta$-expansion occurs during evaluation. This is because we have
chosen to make $\tl{\fix{}}$ model the Z fixed-point combinator (also known as the call-by-value Y combinator)~\cite[\S 5]{pierce2002types} rather
than the Y combinator.
The reason revolves around the compiler devised in this paper.
The target language of that compiler is a \emph{strict} untyped lambda calculus, where Y does not work but Z does and using Z in \pstlc as well keeps the compiler simpler.
Working with the more standard Y fixpoint combinator in \pstlc is probably possible but would require the compiler to use an encoding that would be pervasive but irrelevant to the subject of this paper.

\pstlc program contexts $\tpc$ are \pstlc terms that contain exactly
one hole~$\tlH{\cdot}$ in place of a subterm. We also omit the typing judgement for program contexts
$\vdash \tpc : \tl{\Gamma'},\tl{\tau'} \to \tl{\Gamma},\tl{\tau}$,
defined by inductive rules close to those for terms in
\cref{fig:stlc-fix}. The judgement guarantees that substituting a
well-typed term $\tl{\Gamma'} \vdash \tl{t} : \tl{\tau'}$ in a
well-typed context
$\vdash \tpc : \tl{\Gamma'},\tl{\tau'} \to \tl{\Gamma},\tl{\tau}$
produces a well-typed term
$\tl{\Gamma} \vdash \tpc\tlH{t} : \tl{\tau}$.

\begin{figure*}[t]
  \begin{align*}
    \ul{t} &\bnfdef \ul{\unitv}\mid\ul{\truev}\mid\ul{\falsev}\mid\ul{\lambda x\ldotp t} \mid \ul{x} \mid \ul{t~t} \mid \ul{\projone{t}} \mid \ul{\projtwo{t}} \mid \ul{\pair{t,t}} \mid\ul{\inl{t}} \mid \ul{\inr{t}}
    \\
    &\quad \mid\ul{\casE{t}{\inl{x}\mapsto t\mid\inr{x}\mapsto t}}\mid\ul{t;t}\mid\ul{\ifte{t}{t}{t}} \mid \wrong
    \\
    \ul{v} &\bnfdef \ul{\unitv} \mid \ul{\truev} \mid \ul{\falsev}\mid \ul{\lambda x\ldotp t}\mid \ul{\pair{v,v}} \mid \ul{\inl{v}}\mid\ul{\inr{v}}
    \\
    \ulGamma&\bnfdef \ule \mid \ulGamma,\ul{x}
    \\
    \ulC &\bnfdef \ulH{\cdot} \mid \ulC~\ul{t} \mid \ul{v}~\ulC \mid \ul{\projone{\ulC}} \mid \ul{\projtwo{\ulC}} \mid \ul{\pair{\ulC,t}} \mid \ul{\pair{v,\ulC}} \mid \ul{\inl{\ulC}}\mid \ul{\inr{\ulC}}
    \\
    &\quad\mid \ul{\casE{\ulC}{\inl{x_1}\mapsto t_1 \mid \inr{x_2}\mapsto t_2}} \mid \ul{\ulC;t} \mid \ul{\ifte{\ulC}{t}{t}}
  \end{align*}
  \botrule
  \begin{mathpar}
	\typerulenolabel{\ulc-Eval-ctx}{
		\ul{t}\ulcto\ul{t'}
	}{
		\ulC\ulH{t}\ulcto\ulC\ulH{t'}
	}{ulc-evalctx} \and

	\typerulenolabel{\ulc-Eval-ctx-wrong}{
      \ulC \neq \ul{\ulH{\cdot}}
    }{
    	\ulC\ulH{\wrong}\ulcto\wrong
	}{ulc-evalctxwrong} \and

	\typerulenolabel{\ulc-Eval-beta}{}{
      \ul{(\lambda x.~t)~v} \ulcto \ul{t}\ulsub{v}{x}
	}{ulc-evalbeta} \and

	\typerulenolabel{\ulc-Eval-proj1}{}{
      \ul{\projone{\pair{v_1,v_2}}} \ulcto \ul{v_1}
	}{ulc-evalprojone} \and

	\typerulenolabel{\ulc-Eval-seq-next}{
      \ul{v} \equiv \ul{\unitv} \Ra \ul{t'} \equiv \ul{t}\\
      \ul{v} \nequiv \ul{\unitv} \Ra \ul{t'} \equiv \wrong
	}{
		\ul{v;t} \ulcto \ul{t'}
	}{ulc-evalseq-next}


	\typerulenolabel{\ulc-Eval-case-inl}{}{
      \ul{\case{~\inl{v}~\of~}
        \left|\begin{aligned}
            &\inl{x_1}\mapsto t_1\\
            &\inr{x_2}\mapsto t_2
        \end{aligned}\right.}
		\ulcto \ul{t_1}\ulsub{v}{x_1}
	}{ulc-evalcase-inl} \and





	\typerulenolabel{\ulc-Eval-if-v}{
      \ul{v} \equiv \ul{\truev} \Ra \ul{t'} \equiv \ul{t_1} & \ul{v} \equiv \ul{\falsev} \Ra \ul{t'} \equiv \ul{t_2}\\
      (\ul{v} \not\equiv \ul{\truev} \wedge \ul{v}\not\equiv\ul{\falsev}) \Ra \ul{t'} \equiv \ul{\wrong}
	}{
		\ul{\ifte{v}{t_1}{t_2}} \ulcto \ul{t'}
	}{ulc-evalif-v} \and




  \end{mathpar}
  \caption{Syntax and dynamic semantics of the target language \ulc (selection of).} 
  \label{fig:ulc}
\end{figure*}
\Cref{fig:ulc} presents the syntax, well-scopedness and evaluation
rules for the target language \ulc: a standard untyped \lc. The
calculus has unit, booleans, lambdas, product and sum values,
and produces a kind of unrecoverable exception in case of type errors
(e.g. projecting from a non-pair value, case splitting on a non-sum
value etc.). Such an unrecoverable exception is represented in a
standard way (see, e.g., \cite[\S 14.1]{pierce2002types}) as a
non-value term $\wrong$ with a special reduction rule. We omit unsurprising 
well-scopedness rules. 
The reduction relation $\ulcto$ and termination for terms \ul{t\Downarrow} are defined analogously to \stlc.
The evaluation rules again
use evaluation contexts to impose a strict evaluation order. Note that
the termination judgement $\ul{t\Da{}}$ requires termination with a
value, i.e. not $\ul{\wrong}$. Again, we omit the well-scopedness judgement for contexts
$\vdash \upc : \ul{\Gamma'} \to \ul{\Gamma}$, which is inductively defined and guarantees that substituting a well-scoped term
$\ul{\Gamma'} \vdash \ul{t}$ for the hole produces a well-scoped
result term $\ul{\Gamma} \vdash \upc\ulH{t}$.

The interested reader can find all proofs in the companion tech report~\citep{Devriese2017ModularFullyAbsApproxTR}.

\section{Logical Relations} \label{sec:logical-relations}
This section presents the Kripke, step-indexed logical relations that we use to prove compiler full-abstraction.
First, this section describes the specifications of the world used by the logical relation (\cref{fig:logrels-worlds}).
Then, it defines the logical relations (\cref{fig:logrels}) and finally it proves standard properties that the relations enjoy.
Part of the logical relation is postponed until \cref{sec:ulc-values-vs}, where we define the back-translation infrastructure that this part depends on.
The goal of this section is to provide an understanding of what it means for two terms to be related; this will be needed for understanding properties of the compiler in the following sections.

\smallskip

The cross-language logical relations used in this paper are roughly based on one by \cite{Hur:2011:KLR:1926385.1926402}.
Essentially, we instantiate their language-generic logical relations to \pstlc and \ulc and simplify them by removing complexities deriving from the System~F type system, public/private transitions, references and garbage collection.

\begin{figure}
  \centering
  \begin{minipage}{.5\linewidth}
    \begin{align*}
      \WW \bnfdef &\ (k) \text{ with } k \in \mathbb{N}\\
      \CWstepsfun{\WW} \mydefsym &\ \WW.k \\
      \CWlater{(0)} \mydefsym &\ (0)
    \end{align*}
  \end{minipage}%
  \begin{minipage}{.5\linewidth}
    \begin{align*}
      \CWlater{(k+1)} \mydefsym &\ (k) \\
      (k) \CWfutw (k') \mydefsym &\ k\leq k' \\
      (k)\strfutw (k')\isdef&\ k < k'
    \end{align*}
  \end{minipage}
  \begin{align*}
    \CWobswfun{\WW}_\lesssim \mydefsym &
                                         \left\{ (\tl{t},\ul{t}) \relmiddle|
                                         \begin{multlined}
                                           \exists k \leq \CWstepsfun{\WW},\tl{v}\ldotp \tl{t\stlcto^k v} \Ra
                                           \exists k',\ul{v}\ldotp \ul{t \ulcto^{k'} v})
                                         \end{multlined}
    \right\}\\
    \CWobswfun{\WW}_\gtrsim \mydefsym &
                                         \left\{ (\tl{t},\ul{t}) \relmiddle|
                                         \begin{multlined}
                                           \exists k \leq \CWstepsfun{\WW},\ul{v}\ldotp \ul{t\ulcto^k v} \Ra
                                           \exists k',\tl{v}\ldotp \tl{t \stlcto^{k'} v})
                                         \end{multlined}
    \right\}
  \end{align*}

  \caption{Logical relations: Worlds.}
  \label{fig:logrels-worlds}
\end{figure}
Since we do not deal with mutable references, we use a very simple
notion of worlds, consisting just of a step-index $k$ that can be accessed with the \CWstepsfun{\cdot} function (\cref{fig:logrels-worlds}).
We define a $\CWlater{}$ modality and a
future world relation $\CWfutw$, expressing that future worlds allow
less reduction steps to be taken.
We define two different observation
relations $\CWobswfun{\WW}_\lesssim$ and $\CWobswfun{\WW}_\gtrsim$.
The former defines that a \pstlc term $\tl{t}$ approximates a \ulc term
$\ul{t}$ if termination of the first in less than $\CWstepsfun{\WW}$
steps implies termination of the second (in an unknown number of
steps).
The latter requires the reverse.
All of our logical relations will be defined in
terms of either $\CWobswfun{\WW}_\lesssim$ or
$\CWobswfun{\WW}_\gtrsim$. For definitions and lemmas or theorems that
apply for both instantiations, we use the symbol $\genlogrel$ as a
metavariable that can be instantiated to either $\lesssim$ and
$\gtrsim$.

\begin{figure*}
  Pseudo-types \tl{\tlhat{\tau}}, pseudo-contexts $\tlhat{\Gamma}$, $\fun{oftype}{\cdot}$ and $\replemul{\cdot}$.
  \begin{align*}
    \tl{\tlhat{\tau}} &\bnfdef\tl{\Bool} \mid \tl{\Unit} \mid \tl{\tlhat{\tau} \times \tlhat{\tau}} \mid \tl{\tlhat{\tau} \uplus \tlhat{\tau}} \mid \tl{\tlhat{\tau}\to \tlhat{\tau}} \mid \tl{\EmulDV\np}\\
    \tl{\tlhat{\Gamma}} &\bnfdef \tl{\emptyset} \mid \tl{\tlhat{\Gamma}},\tl{x} : \tl{\tlhat{\tau}}\\
    \replemul{\tlhat{\tau}} &\isdef \cdots \text{ (to be defined later, in \cref{fig:helper-emul})}\\
    \tllsfun{oftype}{\tl{\tlhat{\tau}}} &\isdef \{ \tl{v} \mid \tle \vdash \tl{v} : \replemul{\tl{\tlhat{\tau}}} \}\\
    \ullsfun{oftype}{\tl{\tlhat{\tau}}} &\isdef \left\{ \ul{v} \relmiddle|
                               \begin{aligned}
                                &\ul{v} = \ul{\unitv} &&\text{ if } \tl{\tlhat{\tau}} = \tl{\Unit}\\
                                &\ul{v} = \ul{\truev} \text{ or } \ul{v} = \ul{\falsev} &&\text{ if } \tl{\tlhat{\tau}} = \tl{\Bool}\\
                                &\exists \ul{t}\ldotp \ul{v} = \ul{\lam{x}{t}}&&\text{ if } \exists \tl{\tlhat{\tau_1},\tlhat{\tau_2}}\ldotp \tl{\tlhat{\tau}} = \tl{\tlhat{\tau_1}\to \tlhat{\tau_2}}\\
                                &\exists \ul{v_1} \in \ullsfun{oftype}{\tl{\tlhat{\tau_1}}}, \ul{v_2} \in\ullsfun{oftype}{\tl{\tlhat{\tau_2}}} \ldotp \ul{v} = \ul{\pair{v_1,v_2}}&&\text{ if } \exists \tl{\tlhat{\tau_1},\tlhat{\tau_2}}\ldotp \tl{\tlhat{\tau}} = \tl{\tlhat{\tau_1}\times \tlhat{\tau_2}}\\
                                &\exists \ul{v_1} \in \ullsfun{oftype}{\tl{\tlhat{\tau_1}}}\ldotp \ul{v} = \ul{\inl{v_1}} 
                                \\
                                &\ \text{ or } \exists\ul{v_2} \in\ullsfun{oftype}{\tl{\tlhat{\tau_2}}} \ldotp \ul{v} = \ul{\inr{v_2}}&&\text{ if } \exists \tl{\tlhat{\tau_1},\tlhat{\tau_2}}\ldotp \tl{\tlhat{\tau}} = \tl{\tlhat{\tau_1}\uplus \tlhat{\tau_2}}
                               \end{aligned}\right\}\\
    \fun{oftype}{\tl{\tlhat{\tau}}} &\isdef \left\{ (\tl{v},\ul{v}) \relmiddle| \tl{v} \in \tllsfun{oftype}{\tl{\tlhat{\tau}}} \wedge \ul{v} \in \ullsfun{oftype}{\tl{\tlhat{\tau}}} \right\}
  \intertext{Logical relations for values (\valrel{\cdot}), contexts (\contrel{\cdot}), terms (\termrel{\cdot}) and substitutions (\envrel{\cdot}).}
    \later R &\isdef \myset{(\WW,\tl{v},\ul{v})}{ \CWstepsfun{\WW} > 0 \Ra (\later\WW,\tl{v},\ul{v}) \in R}
    \\
    \valrel{\Unit}_\genlogrel &\isdef \myset{ (\WW, \tlv, \ulv) }{\tl{v} = \tl{\unitv} \text{ and } \ul{v} = \ul{\unitv} }
    \\
    \valrel{\Bool}_\genlogrel &\isdef \myset{ (\WW, \tlv, \ulv) }{\exists \bl{v}\in\{\truev,\falsev\}\ldotp \tl{v} = \bl{v} \text{ and } \ul{v} = \bl{v} }
    \\
    \valrel{\tlhat{\tau'}\to\tlhat{\tau}}_\genlogrel &\mydefsym \left\{ (\WW, \tl{v}, \ul{v}) \relmiddle|
                                              \begin{aligned}
                                                &(\tl{v},\ul{v}) \in \fun{oftype}{\tl{\tlhat{\tau'}\to\tlhat{\tau}}} \text{ and } 
                                                \\
                                                &\exists \tl{t},\ul{t}\ldotp \tl{v} = \tl{\lam{x: \replemul{\tlhat{\tau'}}}{t}} \text{ and } \ul{v} = \ul{\lam{x}{t}} \text{ and } 
                                                \\
                                                &\forall \WW'\strfutw \WW, (\WW', \tl{v'},\ul{v'}) \in\valrel{\tlhat{\tau'}}_\genlogrel \ldotp (\WW', \tl{t}\tlsub{v'}{x}, \ul{t}\ulsub{v'}{x}) \in \termrel{\tl{\tlhat{\tau}}}_\genlogrel
                                              \end{aligned}
                                                  \right\} \\
    \valrel{\tlhat{\tau_1}\times\tlhat{\tau_2}}_\genlogrel &\isdef\left\{ (\WW, \tl{v}, \ul{v}) \relmiddle|
                                                       \begin{aligned}
                                                         &(\tl{v},\ul{v}) \in \fun{oftype}{\tl{\tlhat{\tau_1} \times \tlhat{\tau_2}}} \text{ and } 
                                                         \\
                                                         &\exists \tl{v_1,v_2},\ul{v_1,v_2}\ldotp \tl{v} = \tl{\pair{v_1,v_2}} \text{ and } \ul{v} = \ul{\pair{v_1,v_2}} \text{ and }
                                                         \\
                                                         &(\WW, \tl{v_1}, \ul{v_1}) \in \later\valrel{\tlhat{\tau_1}}_\genlogrel \text{ and } (\WW, \tl{v_2}, \ul{v_2}) \in \later\valrel{\tlhat{\tau_2}}_\genlogrel
                                                       \end{aligned}
                                                     \right\}\\
    \valrel{\tlhat{\tau_1}\uplus \tlhat{\tau_2}}_\genlogrel &\isdef\left\{ (\WW, \tl{v}, \ul{v}) \relmiddle|
                                                      \begin{aligned}
                                                        &(\tl{v},\ul{v}) \in \fun{oftype}{\tl{\tlhat{\tau_1}\uplus\tlhat{\tau_2}}} \text{ and either }
                                                        \\
                                                        &\exists \tl{v'},\ul{v'}.\  (\WW,\tl{v'},\ul{v'}) \in \later\valrel{\tlhat{\tau_1}}_\genlogrel \text{ and } \tl{v} = \tl{\inl{v'}} \text{ and } \ul{v} = \ul{\inl{v'}} \text{ or }
                                                        \\
                                                        &\exists \tl{v'},\ul{v'}.\	(\WW,\tl{v'},\ul{v'}) \in \later\valrel{\tlhat{\tau_2}}_\genlogrel \text{ and } \tl{v} = \tl{\inr{v'}} \text{ and } \ul{v} = \ul{\inr{v'}}
                                                      \end{aligned}
                                                          \right\}
    \\
    \valrel{\EmulDV\np}_\genlogrel &\isdef \cdots  \text{ (to be defined later, in \cref{fig:emuldv})}
    \\
    \contrel{\tl{\tlhat{\tau}}}_\genlogrel &\isdef\myset{ (\WW,\tlC, \ulC) }{ \forall\WW'\CWfutwpub\WW, (\WW', \tl{v},\ul{v})\in\valrel{\tlhat{\tau}}_\genlogrel \ldotp (\tlC\tlH{v},\ulC\ulH{v})\in\CWobswfun{\WW'}_\genlogrel }
    \\
    \termrel{\tl{\tlhat{\tau}}}_\genlogrel &\isdef\myset{ (\WW,\tl{t},\ul{t}) }{ \forall(\WW,\tlC,\ulC) \in \contrel{\tl{\tlhat{\tau}}}_\genlogrel\ldotp (\tlC\tlH{t},\ulC\ulH{t})\in\CWobswfun{\WW}_\genlogrel}\\
    \envrel{\tle}_\genlogrel &\isdef\{ (\WW, \tle, \ule) \}
    \\
    \envrel{\tl{\tlhat{\Gamma}}, (\tl{x}:\tl{\tlhat{\tau}})}_\genlogrel &\isdef \myset{ (\WW, \tlgamma[\tl{x\mapsto v}], \ulgamma[\ul{x \mapsto v}]) }{ (\WW,\tlgamma,\ulgamma) \in \envrel{\tl{\tlhat{\Gamma}}}_\genlogrel \text{ and } (\WW, \tl{v}, \ul{v}) \in \valrel{\tlhat{\tau}}_\genlogrel}
  \end{align*}
  \caption{\protect Logical relations (partial, the missing definition can be found in \cref{fig:emuldv,fig:helper-emul}).}
\label{fig:logrels}
\end{figure*}

\begin{figure*}
  Logical relations for open terms and program contexts.
  \begin{align*}
    \tl{\tlhat{\Gamma}}\vdash \tl{t}\arbsim\nn\ul{t} : \tl{\tlhat{\tau}} &\isdef \left\{
                                                                           \begin{multlined}
                                                                           \replemul{\tl{\tlhat{\Gamma}}}\vdash\tl{t}:\replemul{\tl{\tlhat{\tau}}} \text{ and } \dom{\tlhat{\Gamma}} \vdash \ul{t} \text{ and }
    \\
     \forall \WW\ldotp \CWstepsfun{\WW} \leq n \Ra \forall (\WW, \tlgamma, \ulgamma)\in \envrel{\tl{\tlhat{\Gamma}}}_\genlogrel\ldotp (\WW, \tl{t}\tlgamma, \ul{t}\ulgamma) \in \termrel{\tl{\tlhat{\tau}}}_\genlogrel
  \end{multlined}
      \right.
    \\
    \tl{\tlhat{\Gamma}}\vdash \tl{t}\arbsim\ul{t} : \tl{\tlhat{\tau}} &\isdef \tl{\tlhat{\Gamma}}\vdash \tl{t}\arbsim\nn\ul{t} : \tl{\tlhat{\tau}} \text{ for all } n
    \\
    \vdash\tpc \arbsim\nn\upc : \tl{\tlhat{\Gamma}'},\tl{\tlhat{\tau}'\to\tlhat{\Gamma}},\tl{\tlhat{\tau}} &\isdef\
                                                                                                             \left\{
                                                                                                             \begin{multlined}
                                                                                                               \vdash\tpc :\replemul{\tl{\tlhat{\Gamma}'}},\replemul{\tl{\tlhat{\tau}'}} \to \replemul{\tl{\tlhat{\Gamma}}},\tl{\replemul{\tlhat{\tau}}}\\
                                                                                                              \text{ and } \vdash\upc :\dom{\tl{\tlhat{\Gamma}'}} \to \dom{\tlhat{\Gamma}}
                                                                                                               \\
                                                                                                               \text{and for all }\tlt,\ult\ldotp \text{ if } \tl{\tlhat{\Gamma'}} \vdash \tl{t}\arbsim\nn\ul{t} : \tl{\tlhat{\tau'}} \text{, then } \tl{\tlhat{\Gamma}} \vdash \tpc\tlH{t} \arbsim\nn\upc\ulH{t} :\tl{\tlhat{\tau}}
                                                                                                             \end{multlined} \right.
  \end{align*}
  \caption{\protect Definition of logically related terms and contexts.}
\label{fig:logrel-termcontext}
\end{figure*}
\Cref{fig:logrels} contains the definition of the logical relations.
The first thing to note is that our logical relations are not indexed
by \pstlc types~\tltau, but by \emph{pseudo-types}
$\tl{\tlhat{\tau}}$. The syntax for these pseudo-types contains all the constructs of \pstlc types, plus an
additional token type $\tl{\EmulDV\np}$, indexed by a
non-negative number $n$ and a value $p ::= \precise \mid \imprecise$.
This token type is not a \pstlc type; it is needed because of the approximate back-translation.
When necessary, we use a
function $\replemul{}$ for converting a pseudo-type to a \pstlc type.
The function replaces all occurrences of $\tl{\EmulDV\np}$ with a
concrete \pstlc type.
We postpone the definitions and explanations of $\tl{\EmulDV\np}$ and of $\valrel{\EmulDV\np}_\genlogrel$ to \cref{sec:ulc-values-vs}, after we have given some more information about the back-translation.
We will sometimes silently use normal types where pseudo-types are expected, which makes sense since the latter are a superset of the former.

The value relation $\valrel{\tlhat{\tau}}_\genlogrel$ is defined by induction on
the pseudo-type. Most definitions are quite standard. All cases require related terms to be in the $\funname{oftype}$ relation, which requires well-typedness of the \pstlc term and an appropriate shape for the \ulc value. \tl{\Unit} and \tl{\Bool}
values are related in any world iff they are the same base value. Pair
values are related if both are pairs
and the corresponding components are related in strictly future worlds
at the appropriate pseudo-type. Similarly, sum values are related if
they are both of either the form $\inl{\cdots}$ or $\inr{\cdots}$ and
if the contained values are related in strictly future worlds at the
appropriate pseudo-type. Finally, function values are related if they have the right type, if both are lambdas and if substituting
related values in the body yields related terms in any strictly future world.

The relation on values, evaluation contexts and terms are defined mutually
recursively, using a technique known as biorthogonality (see, e.g., \cite{bistcc}).
So, evaluation contexts are related in a world if plugging in
related values in any future world yields related observations.
Similarly, terms are related if plugging the terms in
related evaluation contexts yields related observations.
Relation $\envrel{\tlGamma}_\genlogrel$ relates substitutions instantiating a context
$\tlGamma$, which simply requires that substitutions for all variables
in the context are related at their types.

\Cref{fig:logrel-termcontext} contains the definition of logically-related open and closed terms as well as contexts.
For open terms, we define a logical relation $\tl{\tlhat{\Gamma}}\vdash \tl{t}\arbsim\nn\ul{t} : \tl{\tlhat{\tau}}$.
This relation expresses that an open \pstlc term $\tl{t}$ is related up to $n$ steps to an open \ulc term $\ul{t}$ at pseudo-type $\tl{\tlhat{\tau}}$ in pseudo-context $\tl{\tlhat{\Gamma}}$ if the first is well-typed, the second is well-scoped and if closing $\tl{t}$ and $\ul{t}$ with substitutions related at pseudo-context $\tl{\tlhat{\Gamma}}$ produces terms related at pseudo-type $\tl{\tlhat{\tau}}$, in any world $\WW$ such that $\CWstepsfun{\WW} \leq n$.
If $\tl{\tlhat{\Gamma}}\vdash \tl{t}\arbsim\nn\ul{t} : \tl{\tlhat{\tau}}$ for any $n$, then we write $\tl{\tlhat{\Gamma}}\vdash \tl{t}\arbsim\ul{t} : \tl{\tlhat{\tau}}$.
Finally, we define a logical relation for program contexts
$\vdash\tpc \arbsim\upc :
\tl{\tlhat{\Gamma'}},\tl{\tlhat{\tau'}} \to
\tl{\tlhat{\Gamma}},\tl{\tlhat{\tau}}$
which requires that substituting terms related at the appropriate
pseudo-type produces terms related at the appropriate
pseudo-type.

It is interesting to note that the simple type system of our source
calculus does not actually present a technical need for the use of
step-indexing. Because there are no recursive types or general
references, it is a simple enough system that we can give well-founded
logical relations without any step-indexing. However, as mentioned
before, we use step-indexing for a different reason than other work:
not for constructing a well-founded logical relation, but for
stating that two terms are related \emph{only up to a certain number of
steps}. More details follow in \cref{sec:appr-back-transl}.

These logical relations are constructed so that termination of one
implies termination of the other, according to the direction of the
approximation ($\lesssim$ or $\gtrsim$, \cref{lem:adequacy}).
\begin{lem}[Adequacy for $\lesssim$ and $\gtrsim$] \label{lem:adequacy}
  \hfill
  \begin{itemize}
  \item If $\tle \vdash \tl{t} \lesssim\nn \ul{t} : \tltau$ and $\tl{t \stlcto^m
      v}$ with $n \geq m$, then $\ul{t \Downarrow}$.
  \item If $\tle \vdash \tl{t} \gtrsim\nn \ul{t} : \tltau$ and $\ul{t \ulcto^m v}$ with $n \geq m$, then $\tl{t \Downarrow}$.
  \end{itemize}
\end{lem}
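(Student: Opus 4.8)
The plan is to unfold the open-term logical relation at the empty typing context, instantiate it at a world whose step budget is exactly $n$ and at the empty substitution, and then feed the resulting related closed terms through the empty evaluation context using biorthogonality. I would carry out the $\lesssim$ statement in detail and obtain the $\gtrsim$ statement as its mirror image, swapping the two languages throughout. As a minor preliminary, note that since $\tltau$ is an ordinary \pstlc type we have $\replemul{\tau} = \tltau$, so $\tltau$ may be used directly wherever the logical relations expect a pseudo-type, and the value relation $\valrel{\tau}$ at a plain type does not depend on the postponed $\valrel{\EmulDV\np}$ clause.

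First I would unfold the hypothesis $\tle \vdash \tl{t} \lesssim\nn \ul{t} : \tltau$ via \cref{fig:logrel-termcontext}: it gives, in particular, that for every world $\WW$ with $\CWstepsfun{\WW} \leq n$ and every $(\WW, \tlgamma, \ulgamma) \in \envrel{\emptyset}_\lesssim$ one has $(\WW, \tl{t}\tlgamma, \ul{t}\ulgamma) \in \termrel{\tau}_\lesssim$. Instantiating with the world $(n)$ (whose step budget is $n$, hence $\leq n$) and the empty substitutions, which are the unique element of $\envrel{\emptyset}_\lesssim$ and act as the identity, this yields $((n), \tl{t}, \ul{t}) \in \termrel{\tau}_\lesssim$. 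Next I would check that the pair of empty evaluation contexts $(\tlH{\cdot}, \ulH{\cdot})$ lies in $\contrel{\tau}_\lesssim$ at the world $(n)$: unfolding $\contrel{\cdot}$ (\cref{fig:logrels}) this reduces to showing that for every $\WW' \CWfutwpub (n)$ and every $(\WW', \tl{v}, \ul{v}) \in \valrel{\tau}_\lesssim$ the pair $(\tl{v}, \ul{v})$ lies in $\CWobswfun{\WW'}_\lesssim$ (\cref{fig:logrels-worlds}), which is immediate because $\tl{v}$ and $\ul{v}$ are values and so each reduces to itself in $0 \leq \CWstepsfun{\WW'}$ steps. Plugging $(\tlH{\cdot}, \ulH{\cdot})$ into the term relation just established gives $(\tl{t}, \ul{t}) \in \CWobswfun{(n)}_\lesssim$, and unfolding this observation relation with witness step count $m$ --- using $m \leq n = \CWstepsfun{(n)}$ together with the hypothesis $\tl{t \stlcto^m v}$ --- I conclude $\ul{t \Downarrow}$. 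The $\gtrsim$ case is the identical sequence of unfoldings with $\CWobswfun{\cdot}_\gtrsim$, where the hypothesis $\ul{t \ulcto^m v}$ yields $\tl{t \Downarrow}$.

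I do not expect a real obstacle: the argument is three successive unfoldings of the definitions. The one point I would take care over is the self-relatedness of the empty evaluation context, which is exactly where biorthogonality and the precise shape of the observation relations pay off --- they only demand that termination of one side within the step budget imply termination of the other, and a value trivially terminates in zero steps on both sides, so the argument never relies on the step index being positive and goes through uniformly for every $n$, in particular for $n = 0$.
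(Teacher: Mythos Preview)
Your proposal is correct and is exactly the standard adequacy argument for a biorthogonal, step-indexed logical relation: instantiate the open relation at the empty context and the world $(n)$, show that the pair of identity evaluation contexts is in $\contrel{\tau}_\genlogrel$ (because related values are already terminated on both sides), and read off the conclusion from the observation relation. The paper itself does not spell out a proof of this lemma in the main text (it is deferred to the companion technical report), but the argument you give is the expected one and matches how biorthogonality is designed to deliver adequacy.
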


\section{The Compiler} \label{sec:compiler}
This section presents our compiler from \pstlc to \ulc.
The compiler proceeds in two passes: type erasure (\cref{fig:erase}) and dynamic typechecking wrappers (\cref{fig:dynamictypechecks}).

\begin{figure}
  \centering
  \begin{align*}
    \ul{\mi{fix}} \isdef \ul{\lambda f\ldotp (\lambda x\ldotp f\ (\lambda y\ldotp x\ x\ y))\ (\lambda x\ldotp f\ (\lambda y\ldotp x\ x\ y))}
  \end{align*}
  \begin{minipage}{.3\linewidth}%
    \begin{align*}
      \erase{\unitv}  &\isdef \ul{\unitv}\\
      \erase{\falsev} &\isdef \ul{\falsev}\\
      \erase{x} &\isdef \ul{x}
    \end{align*}
  \end{minipage}\,
  \begin{minipage}{.6\linewidth}
    \begin{align*}
      \erase{\pair{t_1,t_2}}  &\isdef \ul{\pair{\erase{t_1},\erase{t_2}}} \\
      \erase{t_1;t_2} &\isdef \ul{\erase{t_1};\erase{t_2}}\\
      \erase{t_1~t_2} &\isdef   \erase{t_1}~\erase{t_2}
    \end{align*}
  \end{minipage}
  \begin{minipage}{.4\linewidth}%
    \begin{align*}
      \erase{\projone{t}} &\isdef \ul{\projone{\erase{t}}}\\
      \erase{\projtwo{t}} &\isdef \ul{\projtwo{\erase{t}}} \\
      \erase{\inl{t}} &\isdef \ul{\inl{\erase{t}}}\\
      \erase{\inr{t}} &\isdef \ul{\inr{\erase{t}}}
    \end{align*}
  \end{minipage}
  \begin{minipage}{.5\linewidth}
    \begin{align*}
      \erase{\truev} &\isdef \ul{\truev}\\
      \erase{\lambda x:\tau.~t} &\isdef \ul{\lambda x.~}\erase{t} \\
      \erase{\tl{\fix{\tau_1\ra\tau_2}\ t}} &\isdef \ul{\mi{fix}\ \erase{t}}
    \end{align*}
  \end{minipage}
  \begin{align*}
    \erase{\ifte{t}{t_1}{t_2}} &\isdef 
      \ul{\mr{if}~\erase{t}~}
      \ul{\mr{then}~\erase{t_1}~}
      \ul{\mr{else}~\erase{t_2}}
  \end{align*}
  \vspace{-.5cm}
  \begin{align*}
    \erase{\caseof{t}{t_1}{t_2}} &\isdef
    \ul{\case~\erase{t}~\of}~\color{\ulccol}\left|\begin{aligned}
        &\ul{\inl{x_1}\mapsto\erase{t_1}}\\
        &\ul{\inr{x_2}\mapsto\erase{t_2}}
      \end{aligned}\right.
  \end{align*}
  \caption{Type erasure: the first pass of the compiler.}
  \label{fig:erase}
\end{figure}
The erasure function is called \erasen; it converts all \pstlc constructs to the corresponding \ulc constructs.
$\tl{\fix{\tau_1\to\tau_2}}$ is erased to a \ulc definition of the Z combinator $\ulfix$.

The $\erasen$ function can be considered as a compiler, but it is only
a correct compiler, not a fully-abstract one, as explained in \cref{ex:erase-is-not-fa}.

\begin{exa}[Erasure is correct but not secure~\citep{scoo-j,fstar2js}]\label{ex:erase-is-not-fa}
Consider the following, contextually equivalent \pstlc functions of type $\tl{\Unit \to \Unit}$:
\begin{equation*}
  \tl{\lam{ x: \Unit}{ x} \qquad \ceq \qquad \lam{x: \Unit}{ \unitv}}
\end{equation*}
The $\erasen$ function will map these to the following \ulc functions:
\begin{equation*}
  \ul{\lam{ x}{ x} \qquad \nceq \qquad \lam{x}{ \unitv}}
\end{equation*}
The results of \erasen are \emph{not} contextually equivalent, essentially
because applying them to a non-unit value like $\ul{\truev}$ will
produce $\ul{\truev}$ for the left lambda and $\ul{\unitv}$ for the
right lambda. In this example, contextual equivalence is not preserved
because the original functions are only defined for $\Unit$ values,
but their compilations can be applied to other values too.
\end{exa}

The following lemma states that every \pstlc term is related to its erased term at its type.
\begin{lem}[Erase is semantics-preserving (for terms)]\label{lem:erase-correct}\ \\
  If $\tl{\Gamma \vdash t : \tau}$, then $\tl{\Gamma} \vdash \tlt \arbsim \erase{\tlt} : \tltau$.
\end{lem}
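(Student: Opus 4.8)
The plan is to prove this by induction on the typing derivation $\tl{\Gamma \vdash t : \tau}$, following the standard ``fundamental lemma'' / compatibility-lemma structure used for logical relations, but adapted to the asymmetric, step-indexed, cross-language setting here. Concretely, I would first establish, for each typing rule of \pstlc in \cref{fig:stlc-fix}, a \emph{compatibility lemma} stating that the logical relation $\arbsim$ is closed under the corresponding term former: e.g.\ if $\tl{\tlhat\Gamma} \vdash \tl{t_1} \arbsim \ul{t_1} : \tl{\tlhat\tau'\to\tlhat\tau}$ and $\tl{\tlhat\Gamma} \vdash \tl{t_2} \arbsim \ul{t_2} : \tl{\tlhat\tau'}$, then $\tl{\tlhat\Gamma} \vdash \tl{t_1~t_2} \arbsim \ul{t_1~t_2} : \tl{\tlhat\tau}$, and analogously for pairs, projections, injections, case, if-then-else, sequencing, and $\tl{\fix{}}$. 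Since $\erasen$ is defined homomorphically on all syntactic forms (\cref{fig:erase}), the theorem then follows by a routine induction: each case of the derivation is discharged by invoking the induction hypotheses on subterms together with the matching compatibility lemma, observing that $\erase{t}$ is built from the $\erase{\cdot}$ of the subterms by exactly the \ulc term former corresponding to the \pstlc one. Note the statement is parametric in $\genlogrel \in \{\lesssim,\gtrsim\}$ via the $\arbsim$ notation, so each compatibility lemma must be proved for both instantiations; fortunately the observation relations $\CWobswfun{\WW}_\lesssim$ and $\CWobswfun{\WW}_\gtrsim$ are symmetric enough in structure that a single argument typically handles both, modulo swapping which side's termination is assumed.

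The proofs of the individual compatibility lemmas are the standard biorthogonality manipulations: to show two terms are term-related, unfold the definition of $\termrel{\cdot}_\genlogrel$, take an arbitrary related evaluation context, and show the plugged configurations are in $\CWobswfun{\WW}_\genlogrel$; this is done by pushing the evaluation context through the elimination forms (using the \pstlc and \ulc evaluation-context grammars, which are designed to match) and appealing to the induction hypotheses, with the $\later$/step-index bookkeeping discharged by the future-world conditions $\strfutw$ and $\CWfutwpub$ in the value and context relations. For the $\beta$-like reduction steps (application, projection, case, if, seq) one needs to account for the reduction step consuming one index; the $\later$ modality in the value relation for pairs/sums and the strictly-future-world quantification in the function relation are exactly what make this go through. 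For base values ($\tl{\unitv}$, $\tl{\truev}$, $\tl{\falsev}$) and variables the lemmas are immediate from the definitions of $\valrel{\Unit}$, $\valrel{\Bool}$ and $\envrel{\cdot}$.

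I expect the only genuinely non-routine case to be $\tl{\fix{\tau_1\to\tau_2}}$, which is erased to the Z-combinator term $\ulfix$. The compatibility lemma here must show that $\ul{\mi{fix}~\erase{t}}$ is related to $\tl{\fix{\tau_1\to\tau_2}~t}$ at type $\tl{\tau_1\to\tau_2}$ given the IH on $\tl{t}$ at type $\tl{(\tau_1\to\tau_2)\to(\tau_1\to\tau_2)}$. This requires an auxiliary induction on the step index (L\"ob-style): one proves that for all worlds $\WW$, the fixpoint terms are value-related at $\tl{\tau_1\to\tau_2}$, using the IH at all strictly-smaller indices — the $\eta$-expansion built into both the \pstlc $\tl{\fix{}}$ reduction rule (\reftyperuleAlt{stlc-eval-fix}) and the \ulc $\ulfix$ definition is what guarantees that a reduction step is consumed before each recursive unfolding, so the index strictly decreases and the inner induction is well-founded. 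Matching up the unfolding behaviour of the Z-combinator on the \ulc side with the special $\tl{\fix{}}$ reduction rule on the \pstlc side, and tracking exactly how many \ulc administrative reduction steps correspond to one \pstlc step, is the delicate part; everything else is bookkeeping. (In the mechanised development this would presumably be packaged as a standalone lemma about $\ulfix$ versus $\tl{\fix{}}$, reused verbatim.)
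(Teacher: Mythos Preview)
Your proposal is correct and follows essentially the same approach as the paper: the proof is the standard fundamental lemma for the cross-language logical relation, established by induction on the typing derivation via one compatibility lemma per \pstlc typing rule, with the $\tl{\fix{}}$/$\ulfix$ case handled by an inner L\"ob-style induction on the step index exploiting the $\eta$-expansion in both reduction rules. The paper defers the details to the companion technical report and the Coq mechanisation, but your outline matches the structure one finds there.
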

An analogous result applies to program contexts:
\begin{lem}[Erase is semantics-preserving (for contexts)]\label{lem:erase-sempres-ctx}~\\
  If $\vdash\tpc : \tl{\Gamma'},\tl{\tau'} \to \tlGamma,\tltau$,
  then
  $\vdash \tpc \arbsim\erase{\tpc} : \tl{\Gamma'},\tl{\tau'} \to \tlGamma,\tltau$.
\end{lem}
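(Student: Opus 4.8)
The plan is to unfold the definition of the logical relation on program contexts from \cref{fig:logrel-termcontext}, which has three ingredients, and discharge each in turn. Since the pseudo-types and pseudo-contexts occurring here are ordinary \pstlc types and contexts, $\replemul{\cdot}$ is the identity, so the first ingredient --- well-typedness of $\tpc$ at $\tl{\Gamma'},\tl{\tau'} \to \tlGamma,\tltau$ --- is precisely the hypothesis. The second --- well-scopedness of $\erase{\tpc}$ at $\dom{\tl{\Gamma'}} \to \dom{\tlGamma}$ --- follows by a routine structural induction showing that $\erasen$, extended to program contexts in the evident homomorphic way (the hole is mapped to the hole), turns a well-typed \pstlc term/context into a well-scoped \ulc term/context; this is the context-level companion of the standard (omitted) fact that erasure preserves typing, which is already needed for \cref{lem:erase-correct}. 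The substance of the lemma is the third ingredient: for an arbitrary fixed $n$ and all $\tlt,\ult$ with $\tl{\Gamma'} \vdash \tlt \arbsim\nn \ult : \tl{\tau'}$, we must establish $\tlGamma \vdash \tpc\tlH{t} \arbsim\nn \erase{\tpc}\ulH{t} : \tltau$.

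I would prove this by induction on the derivation of $\vdash \tpc : \tl{\Gamma'},\tl{\tau'} \to \tlGamma,\tltau$, i.e.\ on the structure of $\tpc$, whose productions mirror the term grammar of \cref{fig:stlc-fix} with one subterm replaced by the hole. In the base case $\tpc = \tlH{\cdot}$ we have $\tl{\Gamma'} = \tlGamma$, $\tl{\tau'} = \tltau$, $\erase{\tpc} = \ulH{\cdot}$, so $\tpc\tlH{t} = \tlt$ and $\erase{\tpc}\ulH{t} = \ult$ and the goal is literally the hypothesis. In each inductive case $\tpc$ is a term former applied to a strictly smaller sub-context $\tpc'$ together with some ordinary \pstlc subterms $\tl{t_i}$ that are well typed in the appropriate typing contexts (for instance $\tpc = \tpc'~\tl{t_2}$, $\tpc = \tl{\lambda x:\tau.\ \tpc'}$, $\tpc = \tl{\pair{\tpc',t_2}}$, $\tpc = \tl{\ifte{\tpc'}{t_1}{t_2}}$, $\tpc = \tl{\fix{\tau_1\to\tau_2}\ \tpc'}$, and so on). The induction hypothesis applied to the sub-derivation for $\tpc'$ yields $\tl{\Gamma_0} \vdash \tpc'\tlH{t} \arbsim\nn \erase{\tpc'}\ulH{t} : \tl{\tau_0}$ at the appropriate intermediate context and type, while \cref{lem:erase-correct} yields $\tl{\Gamma_i} \vdash \tl{t_i} \arbsim \erase{t_i} : \tl{\tau_i}$, which in particular holds up to $n$ steps, for each side subterm. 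Plugging these facts into the compatibility (congruence) lemma for the constructor at hand --- the same family of lemmas that drives the proof of \cref{lem:erase-correct} --- gives exactly $\tlGamma \vdash \tpc\tlH{t} \arbsim\nn \erase{\tpc}\ulH{t} : \tltau$, once one notes that plugging commutes with the constructors and with $\erasen$, e.g.\ $(\tpc'~\tl{t_2})\tlH{t} = (\tpc'\tlH{t})~\tl{t_2}$ and $\erase{\tpc'~\tl{t_2}}\ulH{t} = (\erase{\tpc'}\ulH{t})~\erase{t_2}$, both by definition. Since $n$ was arbitrary this establishes $\vdash \tpc \arbsim \erase{\tpc} : \tl{\Gamma'},\tl{\tau'} \to \tlGamma,\tltau$.

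I do not anticipate a genuine obstacle: the argument is bookkeeping layered on the machinery already built for \cref{lem:erase-correct}, and it is structurally the same as proving the ``fundamental lemma'' of a logical relation for a term language, with the context's hole playing the role of a distinguished variable that is discharged by the hypothesis rather than by a compatibility lemma. The two spots that want slight care are: (i) having the full suite of compatibility lemmas available --- in particular the one for $\tlfix$, whose erasure $\erase{\tl{\fix{\tau_1\to\tau_2}\ t}} = \ulfix\ \erase{t}$ introduces the \ulc encoding of the Z combinator, so its congruence proof is a touch less mechanical than the rest --- and (ii) checking that reasoning at a single fixed $n$ suffices, which it does because \cref{lem:erase-correct} may be invoked at full (unindexed) strength for the side subterms while the induction hypothesis and the compatibility lemmas are used only at that same $n$, so no step-index arithmetic is required. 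If these compatibility lemmas are not already isolated as standalone statements in the development, the preliminary step would be to factor them out of the proof of \cref{lem:erase-correct}. Finally, note that $\arbsim$ abbreviates either $\lesssim$ or $\gtrsim$; as every ingredient above (the compatibility lemmas, \cref{lem:erase-correct}, the observation relations) is stated uniformly in $\genlogrel$, a single induction covers both directions.
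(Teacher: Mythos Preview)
Your proposal is correct and is exactly the standard argument one expects here: induction on the typing derivation of $\tpc$, discharging the hole case by the hypothesis, side subterms by \cref{lem:erase-correct}, and each constructor by the corresponding compatibility lemma for the logical relation. The paper does not spell out a proof for this lemma in the main text (proofs are deferred to the companion tech report and the Coq development), but your outline matches how the paper positions the result---as the context-level analogue of \cref{lem:erase-correct}, driven by the same compatibility lemmas---so there is no meaningful divergence to report.
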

One should intuitively understand this result as ``$\tlt$ behaves the
same as $\erase{\tlt}$ when both are treated as values of type
$\tltau$''. The result does not specify what happens when we treat
$\ul{t}$ as a value of a different type, like we did in
\cref{ex:erase-is-not-fa} to demonstrate a full abstraction failure.
Intuitively, it only specifies a kind of \emph{equivalence
  reflection} for the $\erasen$ function, not \emph{preservation}.

\begin{figure}
  \centering
  \begin{align*}
  \prot{\Unit} &\isdef \ul{\lambda x\ldotp x}
    \qquad \qquad
  \prot{\Bool} \isdef \ul{\lambda x\ldotp x}
    \\
  \prot{\tl{\tau_1\times\tau_2}} &\isdef \ul{\lambda y\ldotp \ul{\pair{\prot{\tl{\tau_1}}~\projone{y},\prot{\tl{\tau_2}}~\projtwo{y}}}}
    \\
  \prot{\tl{\tau_1\uplus\tau_2}} &\isdef \ul{\lambda y\ldotp \case~y~\of~\left|
                                          \begin{aligned}
                                            &\ul{\inl{x}\mapsto \inl{(\prot{\tl{\tau_1}}\ x)}}\\
                                            &\ul{\inr{x}\mapsto \inr{(\prot{\tl{\tau_2}}\ x)}}
                                          \end{aligned}\right.}
    \\
  \prot{\tl{\tau_1\to\tau_2}} &\isdef \ul{\lambda y\ldotp \lambda x. \prot{\tl{\tau_2}}~(y~(\conf{\tl{\tau_1}}~x))}
    \\
    \\
    \conf{\Unit} &\isdef \ul{\lam{ y}{ (y;\unitv)}}
    \\
    \conf{\Bool} &\isdef \ul{\lam{ y}{ \ifte{y}{\truev}{\falsev} }}
    \\
  \conf{\tl{\tau_1\times\tau_2}} &\isdef \ul{\lambda y\ldotp \ul{\pair{\conf{\tl{\tau_1}}~{\projone{y}},\conf{\tl{\tau_2}}~\projtwo{y}}}}
    \\
  \conf{\tl{\tau_1\uplus\tau_2}} &\isdef \ul{\lambda y\ldotp \case~y~\of~\left|
                                          \begin{aligned}
                                            &\ul{\inl{x}\mapsto \inl{(\conf{\tl{\tau_1}}\ x)}}\\
                                            &\ul{\inr{x}\mapsto \inr{(\conf{\tl{\tau_2}}\ x)}}
                                          \end{aligned}\right.}
    \\
  \conf{\tl{\tau_1\to\tau_2}} &\isdef \ul{\lambda y\ldotp \lambda x\ldotp \conf{\tl{\tau_2}}~(y~(\prot{\tl{\tau_1}}~x))}
  \end{align*}

  \caption{Dynamic type checking wrappers: the second pass of the compiler.}\label{fig:dynamictypechecks}
\end{figure}
Remember that a fully-abstract compiler must protect terms from being used in ways that are not allowed by their
type, as in \cref{ex:erase-is-not-fa}.
This is taken care of by the second pass of the compiler.

We construct a family of dynamic typechecking
wrappers $\prot{\tltau}$ and $\conf{\tltau}$.
$\prot{\tltau}$ is a \ulc term that wraps an argument to
enforce that it can only \emph{be used} in ways that are valid according to type
$\tltau$, as often done in secure compilation work~\citep{scoo-j,nonintfree,fstar2js,Ahmed:2008:TCC:1411203.1411227}.
Dually, $\conf{\tltau}$ wraps its argument so that it
can only \emph{behave} in ways that are valid according to type $\tltau$.
In the
definition, the cases for product and coproduct types simply
recursively descend on their subterms preserving the expected syntax of a product or coproduct argument.
Protecting at a function type means wrapping the function to confine its arguments and protect its results, and dually for confining at a function type.
Finally,
protecting at a base type (i.e., \tl{\Unit} or \tl{\Bool}) does nothing, simply because there is nothing
one can do to a base value that is not allowed by its type.
Confining
a value at a base type is more interesting. Both for \tl{\Unit} and \tl{\Bool}
values, we use the value in such a way that will only work when the
value is actually of the correct type. If it is, we return the
original value, otherwise the term will reduce to \wrong.\footnote{It would also be valid to diverge in this case, if \ulc had some form of dynamic type test which allowed us to do that.}

\paragraph{\emph{Remark on different inhabitants for \tl{\Unit}}}
In some lambda calculi, any value can be given type \tl{\Unit} and the sequencing operator does not require the term before the semicolon to be \tl{\unitv} in order to reduce, but an arbitrary value.
Thus, two syntactically-different source terms \tl{\truev} and \tl{\unitv} can be semantically equivalent at type \tl{\Unit} in this language.
Interestingly, in such a language, the definitions of \prot{\tl{\Unit}} and \conf{\Unit} would need to be swapped:
\begin{align*}
  \protd{\Unit} &\isdef \ul{\lam{ y}{ (y;\unitv)}}
  &
  \confd{\Unit} &\isdef \ul{\lambda x\ldotp x}
  \end{align*}
In fact, in the target language a context can distinguish what is syntactically different but semantically equivalent in the source.
Thus the compiler needs to enforce that the chosen representation for terms of type \tl{\Unit} in the compiled code is unique.
Conversely, any value received from a target context is now valid at type \tl{\Unit}, so no checks are made.



\begin{exa}[Protect and confine make a term secure]\label{ex:protect}
  Consider the protect wrapper $\prot{\Unit \to \Unit}$ for type \tl{\Unit\to\Unit}, which is (roughly) equal to
  $\ul{\lam{y}{\lam{x}{y~(x;\unitv)}}}$. Applying that wrapper to a
  function \ul{f} (i.e.
  $\prot{\Unit \to \Unit}~\ul{f}$) reduces to
  $\ul{\lam{x}{f~(x;\unitv)}}$. Applying this value to a non-\unitv value
  will simply evaluate to $\ul{\wrong}$, therefore addressing the
  issues of \cref{ex:erase-is-not-fa}.
\end{exa}
For the second pass of the compiler, \cref{lem:protect-compatibility-approx} holds.
\begin{lem}[Protect and confine are semantics-preserving]\label{lem:protect-compatibility-approx}~\\
  If $\tl{\Gamma} \vdash \tlt \arbsim\nn \ul{t} : \tltau$, then
       $\tlGamma \vdash \tlt \arbsim\nn \prot{\tltau}~\ul{t} : \tltau$ and
       $\tlGamma \vdash \tlt \arbsim\nn \conf{\tltau}~\ul{t} : \tltau$.
\end{lem}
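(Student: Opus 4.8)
\emph{Proof strategy.}
The plan is first to peel off the wrapper by biorthogonality and then do the real work by induction on the type. Since $\prot{\tltau}$ and $\conf{\tltau}$ are closed \ulc terms, substituting a closing environment into $\prot{\tltau}\,\ul{t}$ (resp.\ $\conf{\tltau}\,\ul{t}$) produces $\prot{\tltau}\,(\ul{t}\ulgamma)$ (resp.\ $\conf{\tltau}\,(\ul{t}\ulgamma)$), and the well-typedness/well-scopedness side conditions in the definition of $\arbsim\nn$ (\cref{fig:logrel-termcontext}) are immediate because the \pstlc term is untouched and wrapping a well-scoped \ulc term by a closed one stays well-scoped. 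Unfolding the definition, it therefore suffices to show, uniformly for $\genlogrel\in\{\lesssim,\gtrsim\}$ and for each world $\WW$: if $(\WW,\tl{t},\ul{t})\in\termrel{\tltau}_\genlogrel$ then $(\WW,\tl{t},\prot{\tltau}\,\ul{t})\in\termrel{\tltau}_\genlogrel$ and $(\WW,\tl{t},\conf{\tltau}\,\ul{t})\in\termrel{\tltau}_\genlogrel$. Now $\prot{\tltau}\,[\cdot]$ and $\conf{\tltau}\,[\cdot]$ are \ulc evaluation contexts, so composing an arbitrary related continuation with such a context and using the biorthogonal definition of the term relation (together with downward-closure of the continuation relation in the world) reduces this to the following value-level statement.

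\emph{Claim.} For every \pstlc type $\tltau$, world $\WW$, direction $\genlogrel$, and $(\WW,\tl{v},\ul{v})\in\valrel{\tltau}_\genlogrel$: $(\WW,\tl{v},\prot{\tltau}\,\ul{v})\in\termrel{\tltau}_\genlogrel$ and $(\WW,\tl{v},\conf{\tltau}\,\ul{v})\in\termrel{\tltau}_\genlogrel$.

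I would prove the Claim by induction on the structure of $\tltau$, establishing the $\prot{}$ and $\conf{}$ halves simultaneously — this is forced, since $\prot{\tl{\tau_1\to\tau_2}}$ is defined in terms of $\conf{\tl{\tau_1}}$ and $\prot{\tl{\tau_2}}$, and dually for $\conf{\tl{\tau_1\to\tau_2}}$, at the structurally smaller types $\tl{\tau_1},\tl{\tau_2}$. For $\tltau\in\{\tl{\Unit},\tl{\Bool}\}$ the value relation pins $\ul{v}$ down to the matching base value, both wrappers reduce back to $\ul{v}$ in a constant number of \ulc steps, and one concludes by closure of the term relation under target (anti-)reduction plus the standard fact that related values are related terms. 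For $\tltau=\tl{\tau_1\times\tau_2}$ (resp.\ $\tl{\tau_1\uplus\tau_2}$) we destruct the related pair (resp.\ injection): the wrapper reduces to a pair (resp.\ injection) built from $\prot{\tl{\tau_i}}/\conf{\tl{\tau_i}}$ applied to the component value(s), which are related at $\later\valrel{\tl{\tau_i}}_\genlogrel$; the few administrative steps are absorbed by this $\later$-modality (via downward-closure of $\valrel{}$), the induction hypothesis supplies relatedness of each wrapped component, and the compatibility lemma for pairing (resp.\ injection) reassembles the result. For $\tltau=\tl{\tau_1\to\tau_2}$ the wrapper $\beta$-reduces to a fresh $\lambda$, and it suffices (again by anti-reduction closure of the term relation and the fact that related values are related terms) to place that $\lambda$ in $\valrel{\tl{\tau_1\to\tau_2}}_\genlogrel$; given a strictly-future world and a related argument one (i) feeds the argument through $\conf{\tl{\tau_1}}$ (resp.\ $\prot{\tl{\tau_1}}$) using the induction hypothesis and the structural fact that a wrapper applied to a correctly-shaped value terminates to a correctly-shaped value, (ii) applies the function clause of $(\WW,\tl{v},\ul{v})\in\valrel{\tl{\tau_1\to\tau_2}}_\genlogrel$ to the resulting related value, and (iii) wraps the outcome with $\prot{\tl{\tau_2}}$ (resp.\ $\conf{\tl{\tau_2}}$) using the term-level reformulation of the induction hypothesis at $\tl{\tau_2}$ obtained exactly as in the first paragraph, discharging the book-keeping with (anti-)reduction closure.

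The main obstacle I anticipate is the step-index / world accounting, above all in the arrow case: one must verify that every administrative reduction the wrappers perform (each $\beta$-step, projection, case split, sequencing, conditional) is covered either by a $\later$-modality appearing in the $\times$/$\uplus$ clauses of $\valrel{}$ or by the ``strictly future world'' quantifier in the $\to$ clause, so that the induction hypothesis and the function clause of $\valrel{}$ remain applicable at the decremented world; and, crucially, that this accounting goes through \emph{uniformly} for $\genlogrel=\lesssim$ and $\genlogrel=\gtrsim$, whose observation relations bound the number of reduction steps on opposite sides and hence demand the matching variant of each (anti-)reduction and compatibility lemma. These monotonicity, (anti-)reduction, value-implies-term and compatibility properties are precisely the ``standard properties'' of the relations alluded to in \cref{sec:logical-relations}; granting them, the induction itself is routine.
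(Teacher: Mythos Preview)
Your overall strategy is the standard one and matches what the paper (via its tech report and Coq development) does: reduce to a value-level statement by biorthogonality, then proceed by simultaneous induction on $\tltau$ for $\prot{\tltau}$ and $\conf{\tltau}$, relying on the usual monotonicity, (anti-)reduction and compatibility lemmas for the logical relation.

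There is, however, one point in the arrow case that does not quite close with the Claim as you have stated it. Your induction hypothesis at $\tl{\tau_1}$ gives only
\[
  (\WW',\tl{v'},\conf{\tl{\tau_1}}\,\ul{v'}) \in \termrel{\tl{\tau_1}}_\genlogrel,
\]
i.e.\ \emph{term}-relatedness of the wrapped argument. But in step~(ii) you invoke the function clause of $\valrel{\tl{\tau_1\to\tau_2}}$, which demands a \emph{value} $\ul{v''}$ with $(\WW',\tl{v'},\ul{v''})\in\valrel{\tl{\tau_1}}_\genlogrel$. The biorthogonal term relation does not let you recover value-relatedness from term-relatedness plus termination; and the obvious attempt to absorb $\ul{v}\,(\conf{\tl{\tau_1}}\,[\cdot])$ into a $\contrel{\tl{\tau_1}}$ context fails because the \pstlc side has no matching evaluation context (the argument is already substituted into the body). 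The ``structural fact that a wrapper applied to a correctly-shaped value terminates to a correctly-shaped value'' gives you a value of the right \emph{shape}, but not one that is \emph{related} to $\tl{v'}$.

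The fix is routine: strengthen the Claim to the value-to-value form
\[
  (\WW,\tl{v},\ul{v})\in\valrel{\tltau}_\genlogrel \;\Rightarrow\; \exists\,\ul{v'}.\ \prot{\tltau}\,\ul{v}\ulcto^*\ul{v'} \text{ and } (\WW,\tl{v},\ul{v'})\in\valrel{\tltau}_\genlogrel
\]
(and likewise for $\conf{\tltau}$). This is exactly what you need contravariantly at $\tl{\tau_1}$, it implies your original Claim via the lift you describe in the first paragraph (so step~(iii) at $\tl{\tau_2}$ still works), and the base, product and sum cases go through as before. With this adjustment your sketch is complete, and the step-index bookkeeping you flag as the main obstacle is indeed the only remaining work.
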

\Cref{lem:protect-compatibility-approx} states that if $\tlt$ is related to $\ult$ at type
$\tltau$, then adding a $\prot{\tltau}$ or $\conf{\tltau}$ wrapper around
$\ul{t}$ does not change that. In other words, the wrappers do not
change the behaviour of $\ul{t}$ as long as they are treated as values of type $\tltau$. In \cref{sec:injextr}, we will have more to say about the security of the wrappers.

\smallskip

This section concludes with the definition of the compiler used in this paper.
\begin{defi}[The $\comp{\cdot}$ compiler]\label{def:comp}
If $\tlGamma \vdash \tlt : \tltau$, then $\tlt$ is compiled to $\comp{\tlt}$ and:
 $ \comp{\tlt} \isdef \prot{\tltau}~(\erase{\tlt})\text{.}$
\end{defi}
\Cref{lem:erase-correct,lem:protect-compatibility-approx} about the first and second pass of the compiler can be combined into \cref{thm:sempres} to obtain that a \pstlc term of type \tltau behaves like its compilation when both are treated as terms of type~$\tltau$.
\begin{lem}[$\comp{\cdot}$ is semantics-preserving]\label{thm:sempres}
  If  $\tlGamma\vdash\tlt:\tltau$, then $\tlGamma\vdash\tlt\arbsim\comp{\tlt} : \tltau$.
\end{lem}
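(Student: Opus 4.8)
The statement is essentially a one-line composition of the two semantics-preservation lemmas for the individual compiler passes, glued together by the definition of $\comp{\cdot}$. The plan is therefore:

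\begin{enumerate}
\item Unfold \cref{def:comp}, so that the goal becomes $\tlGamma\vdash\tlt\arbsim\prot{\tltau}~(\erase{\tlt}) : \tltau$ under the hypothesis $\tlGamma\vdash\tlt:\tltau$.
\item Apply \cref{lem:erase-correct} to the hypothesis $\tlGamma\vdash\tlt:\tltau$, obtaining $\tlGamma\vdash\tlt\arbsim\erase{\tlt} : \tltau$. By the definition of the unindexed relation in \cref{fig:logrel-termcontext}, this unpacks to: for every $n$, $\tlGamma\vdash\tlt\arbsim\nn\erase{\tlt} : \tltau$.
\item Fix an arbitrary $n$. Instantiate \cref{lem:protect-compatibility-approx} with $\ult := \erase{\tlt}$ and the premise $\tlGamma\vdash\tlt\arbsim\nn\erase{\tlt} : \tltau$ just obtained; its first conclusion gives $\tlGamma\vdash\tlt\arbsim\nn\prot{\tltau}~(\erase{\tlt}) : \tltau$.
\item Since $n$ was arbitrary, re-pack the family of indexed statements into the unindexed one: $\tlGamma\vdash\tlt\arbsim\prot{\tltau}~(\erase{\tlt}) : \tltau$, i.e.\ $\tlGamma\vdash\tlt\arbsim\comp{\tlt} : \tltau$, as required.
\end{enumerate}

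\textbf{Remarks on subtleties.} Because both \cref{lem:erase-correct} and \cref{lem:protect-compatibility-approx} are phrased with the metavariable $\arbsim$, the argument goes through uniformly for both instantiations $\lesssim$ and $\gtrsim$; there is no need to treat the two approximation directions separately. The only mildly delicate point is the bookkeeping between the indexed relation $\arbsim\nn$ and the unindexed relation $\arbsim$ (which is by definition the conjunction over all $n$): one must quantify $n$ on the outside before appealing to the second-pass lemma, and then discharge it again. I also need the side conditions on well-typedness/well-scopedness that are built into the open-term relation to line up — but these are immediate, since $\comp{\tlt} = \prot{\tltau}~(\erase{\tlt})$ is well-scoped in $\dom{\tlGamma}$ by construction of $\erasen$ and of the wrappers, exactly as required by the definition in \cref{fig:logrel-termcontext}, and this is precisely what \cref{lem:protect-compatibility-approx} already certifies.

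\textbf{Main obstacle.} Honestly, there is no real obstacle here: all the work has been done in \cref{lem:erase-correct} (correctness of type erasure, whose proof is a standard compatibility-lemma induction on the typing derivation using the logical relation of \cref{fig:logrels}) and in \cref{lem:protect-compatibility-approx} (which requires the genuinely interesting reasoning about the $\prot{\cdot}$/$\conf{\cdot}$ wrappers). This lemma is just the composition step. The one thing to get right is not to lose the ``for all $n$'' quantifier when chaining the two results, so that the conclusion is the unindexed $\arbsim$ and not merely some $\arbsim\nn$.
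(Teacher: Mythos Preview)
Your proposal is correct and follows exactly the approach the paper indicates: the text immediately preceding \cref{thm:sempres} states that \cref{lem:erase-correct,lem:protect-compatibility-approx} ``can be combined into \cref{thm:sempres}'', and your unfolding of \cref{def:comp} followed by chaining the two pass lemmas (with the $n$-index bookkeeping) is precisely that combination.
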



\section{Approximate Back-Translation}
\label{sec:appr-back-transl}
This section presents the core idea of our proof technique: the approximate back-translation.
As explained in \cref{sec:introduction},
the idea is to translate a target language program context $\upc$ to a
source language program context $\tl{\ef{\upc}\nn}$ which
conservatively $n$-approximates \upc.
Intuitively, this means that
$\tl{\ef{\upc}\nn}$ behaves like $\upc$ for up to $n$ steps but it may
diverge in cases where the original did not if $\upc$ takes more than
$n$ steps.
We will make this more precise in \cref{sec:ulc-values-vs}.

At the core of the approximate back-translation is the \pstlc type
$\tl{\UVal\nn}$.
The type is essentially a \pstlc encoding of the unitype of \ulc.
Where the untyped context $\upc$ manipulates
arbitrary \ulc values, its back-translation $\tl{\ef{\upc}\nn}$ manipulates values of type $\tl{\UVal\nn}$.
\cref{sec:uval} defines \tl{\UVal\nn} and the basic tools (constructors and destructors) for working with it.
To explain how values in \tl{\UVal\nn} model values in \ulc, \cref{sec:ulc-values-vs} fills in the missing piece of the logical relations of \cref{fig:logrels} by defining $\valrel{EmulDV\np}_\genlogrel$.

The type $\tl{\UVal\nn}$ is sufficiently large to contain $n$-approximations of \ulc values.
However, it also contains approximations of \ulc values \emph{up to less than $n$ steps}.
Sometimes, values of type $\tl{\UVal\nn}$ will be \emph{downgraded} to a type $\tl{\UVal\indexx{m}}$ with $m < n$.
Dually, there will be cases where some values need to \emph{upgrade}.
\cref{sec:updown} defines functions to perform value upgrading and downgrading.

With $\tl{\UVal\nn}$ and the related machinery introduced,
\cref{sec:emulate} constructs the function $\tl{\emulate\nn}$, responsible for emulating a context such that it translates a \ulc term $\ul{t}$ into a \pstlc term of type $\tl{\UVal\nn}$.
This function is easily extended to work with program contexts, producing contexts with hole of type $\tl{\UVal\nn}$ as expected.

However, remember from \cref{fig:proving-compiler-security-approx} in \cref{sec:introduction} that the goal of the back-translation is generating a context $\tl{\ef{\upc}\nn}$ whose hole can be filled with \pstlc terms $\tl{t_1}$ and $\tl{t_2}$.
Their type is not $\tl{\UVal\nn}$ but an arbitrary \pstlc type $\tltau$.
Thus, there is a type mismatch between the hole of the emulated context $\tl{\emulate\indexx{n}(\upc)}$ and the terms that we want to plug in there.
Since the emulated contexts work with $\tl{\UVal\nn}$ values, we need a function that wraps terms of an arbitrary type $\tltau$ into a value of type $\tl{\UVal\nn}$.
This is precisely what \cref{sec:injextr} defines, namely a function $\tl{\inject{\tau;n}}$ of type $\tl{\tau \to \UVal\nn}$.

Finally, \cref{sec:approx-final} defines the approximate
back-translation function $\tl{\ef{\cdot}\indexx{\tau;n}}$, mapping a
\ulc context \upc\ to a \pstlc context $\tl{\ef{\upc}\indexx{\tau;n}}$.
The additional index $\tau$ w.r.t. earlier discussions is needed to
introduce an appropriate call to $\tl{\inject{\tau;n}}$ as discussed
above, so that the hole of $\tl{\ef{\upc}\indexx{\tau;n}}$ is of type
$\tltau$. Plugging a term $\tl{t_1}$ in $\tl{\ef{\upc}\indexx{\tau;n}}$
$n$-approximates plugging in the compilation $\comp{\tl{t_1}}$ in
context $\upc$.

\smallskip

Immediately after the definition of each of the concepts discussed above ($\tl{\downgrade}$, $\tl{\upgrade}$, $\tl{\inject{\tau;n}}$ and $\tl{\emulate\nn}$), this section formalises the results about their behaviour.
These results are expressed in terms of the logical relations of \cref{fig:logrels} and of the $\tl{\EmulDV\np}$ pseudo-type; they will be used to prove equivalence preservation in \cref{sec:comp-fa}.

\subsection{\UVal and its Tools}\label{sec:uval}
The family of types \tl{\UVal\tlnn} is defined as follows:
\begin{align*}
  \tl{\UVal\indexx{0}} &\mydefsym \tl{\Unit}\\
  \tl{\UVal\indexx{n+1}} &\mydefsym
                     \begin{multlined}
                       \tl{\Unit \uplus \Unit \uplus \Bool \uplus (\UVal\nn \times \UVal\nn)\uplus }\\
                       \tl{(\UVal\nn \uplus \UVal\nn)\uplus (\UVal\nn \ra \UVal\nn) }
                     \end{multlined}
\end{align*}
\tl{\UVal\nn} is the type that emulated \ulc terms have when
back-translated into
\pstlc. 
For every $n$, $\tl{\UVal\nn}$ is clearly a valid \pstlc type. At
non-zero levels, the type $\tl{\UVal\indexx{n+1}}$ is a disjunct sum
of base values (the second occurrence of $\Unit$ and $\Bool$),
products and coproducts of $\tl{\UVal\nn}$s and functions mapping a
$\tl{\UVal\nn}$ to a $\tl{\UVal\nn}$. All of these cases are used to
emulate a corresponding \ulc value. Additionally, at every level
including $n=0$, the type $\tl{\UVal\nn}$ contains a $\Unit$ case which is
needed to represent an arbitrary \ulc value in cases where the
precision of the approximate emulation is insufficient to provide more
information. Note that the two occurrences of $\Unit$ in the
definition of $\tl{\UVal\indexx{n+1}}$ are not a typo. The first is
used for imprecisely representing arbitrary \ulc terms while the
second accurately represents \ulc $\ul{\unitv}$ values.

\begin{figure}
  \begin{minipage}{0.5\linewidth}
  \begin{align*}
    \tl{\inDV{\mr{unk};n}} &: \tl{\UVal\indexx{n+1}}\\
    \tl{\inDV{\Unit;n}} &: \tl{\Unit \ra \UVal\indexx{n+1}}\\
    \tl{\inDV{\Bool;n}} &: \tl{\Bool \ra \UVal\indexx{n+1}}\\
    \tl{\inDV{\times;n}} &: \tl{(\UVal\nn \times \UVal\nn) \ra \UVal\indexx{n+1}}\\
    \tl{\inDV{\uplus;n}} &: \tl{(\UVal\nn \uplus \UVal\nn) \ra \UVal\indexx{n+1}}\\
    \tl{\inDV{\ra;n}} &: \tl{(\UVal\nn \ra \UVal\nn) \ra \UVal\indexx{n+1}}
  \end{align*}
  \end{minipage}%
  \begin{minipage}{0.3\linewidth}
  \begin{align*}
    \tl{\unkUVal\nn} &: \tl{\UVal\nn}\\
    \tl{\unkUVal\indexx{0}} &\mydefsym \tl{\unitv}\\
    \tl{\unkUVal\indexx{n+1}} &\mydefsym \tl{\inDV{\mr{unk};n}}
  \end{align*}
  \end{minipage}
  \begin{align*}
    \tl{\myomega_\tau} &: \tl{\tau} \\
    \tl{\myomega_\tau} &\mydefsym \tl{\fix{\Unit \to \tau}\ (\lambda x:\Unit \to \tau\ldotp x)\ \unitv}
  \end{align*}
  \begin{align*}
  \tl{\caseDV{\Unit;n}} &: \tl{\UVal\indexx{n+1} \ra \Unit}\\
    \tl{\caseDV{\Bool;n}} &: \tl{\UVal\indexx{n+1} \ra \Bool}\\
    \tl{\caseDV{\times;n}} &: \tl{\UVal\indexx{n+1} \ra (\UVal\nn\times\UVal\nn)}\\
    \tl{\caseDV{\uplus;n}} &: \tl{\UVal\indexx{n+1} \ra (\UVal\nn\uplus\UVal\nn)}\\
    \tl{\caseDV{\ra;n}} &: \tl{\UVal\indexx{n+1} \ra \UVal\nn\ra\UVal\nn}\\
    \tl{\caseDV{\Unit;n}} &\mydefsym \tl{\lambda x:\UVal\indexx{n+1}\ldotp \case\ x\ \of\ \{ \inDV{\Unit;n}\ x \mapsto x; \_ \mapsto \myomega \}}\\
    \tl{\caseDV{\Bool;n}} &\mydefsym \tl{\lambda x:\UVal\indexx{n+1}\ldotp \case\ x\ \of\ \{ \inDV{\Bool;n}\ x \mapsto x; \_ \mapsto \myomega \}}\\
    \tl{\caseDV{\times;n}} &\mydefsym \tl{\lambda x:\UVal\indexx{n+1}\ldotp \case\ x\ \of\ \{ \inDV{\times;n}\ x \mapsto x; \_ \mapsto \myomega \}}\\
    \tl{\caseDV{\uplus;n}} &\mydefsym \tl{\lambda x:\UVal\indexx{n+1}\ldotp \case\ x\ \of\ \{ \inDV{\uplus;n}\ x \mapsto x; \_ \mapsto \myomega \}}\\
    \tl{\caseDV{\ra;n}} &\mydefsym
                            \tl{\lambda x:\UVal\indexx{n+1}\ldotp \lambda y: \UVal\nn\ldotp \case\ x\ \of}
                            \tl{\{ \inDV{\ra;n}\ z \mapsto z~y; \_ \mapsto \myomega \}}
  \end{align*}
    \caption{Basic tools for working with $\tl{\UVal\nn}$. The subscript of \myomega{} is omitted when it is clear from the context.}
    \label{fig:uval-basic-tools}
\end{figure}
To work with $\tl{\UVal\nn}$ values, we need basic tools for dealing with sum types: tag injections and case extractions (\cref{fig:uval-basic-tools}).
Functions $\tl{\inDV{\mr{unk};n}}$,
$\tl{\inDV{\Unit;n}}$, $\tl{\inDV{\Bool;n}}$, $\tl{\inDV{\times;n}}$,
$\tl{\inDV{\uplus;n}}$, $\tl{\inDV{\to;n}}$ are convenient names for
nested applications of coproduct injection functions for the nested
coproduct in the definition of $\tl{\UVal\indexx{n+1}}$. The term
$\tl{\unkUVal\nn}$ produces either the single value of $\tl{\UVal\indexx{0}}$
or uses $\tl{\inDV{\mr{unk};n}}$ to produce a $\tl{\UVal\indexx{n+1}}$ value
representing a 0-precision approximate back-translation of an
arbitrary untyped term. For using $\tl{\UVal\nn}$ values, we define functions $\tl{\caseDV{\Unit;n}}$,
$\tl{\caseDV{\Bool;n}}$, $\tl{\caseDV{\times;n}}$,
$\tl{\caseDV{\uplus;n}}$, $\tl{\caseDV{\to;n}}$ using a
somewhat liberal pattern matching syntax that can be easily desugared to
nested $\tl{\case}$ expressions.
The functions are lambdas that
inspect their $\tl{\UVal\indexx{n+1}}$ argument and return the contained
value if it is in the appropriate branch of the coproduct, or diverge
otherwise.
To achieve divergence, we use a term $\tl{\myomega_\tau}$
constructed using $\tl{\fix}$. We simply write $\tl{\myomega}$
when the type $\tltau$ can be inferred from the context.

\subsection{\ulc Values vs. $\UVal$} \label{sec:ulc-values-vs}

To make the correspondence between a \ulc term and its emulation in $\tl{\UVal\nn}$ more exact, this section fills in the definition of $\valrel{\EmulDV\np}_\genlogrel$, the missing piece of the logical relations of \cref{fig:logrels}.
Intuitively, the previously presented cases of the logical relations define the relation between a \pstlc term and its compilation.
The $\valrel{\EmulDV\np}_\genlogrel$ case defines the relation between a \ulc term and its $\tl{\UVal\nn}$-typed back-translation, as motivated in \cref{ex:explain-emuldv}.
This relation depends on the index $n$ of type $\tl{\UVal\nn}$ and additionally on a parameter $p::=\precise \mid \imprecise$, that is explained below.

\begin{exa}[The need for \tl{\EmulDV}]\label{ex:explain-emuldv}
Consider the term $\tlt\equiv\tl{\inDV{\Bool;1}~\truev}$.
Since $\tl{\UVal\nn}$ is a sum type, according to the definition of $\valrel{\tau\uplus\tau'}$, it can be related only to terms that have the same tag.
However, for the back-translation we do not want this, we want that term to be related to the \ult term that \tlt approximates (in this case, \ul{\truev}). 
Type $\tl{\EmulDV\indexx{n;p}}$ serves the purpose of expressing this $\ult$-emulates-$\tlt$ relation (as opposed to the $\ult$-is-the-compilation-of-$\tlt$ relation expressed by the other types).
In other words, $\tl{\inDV{\Bool;1}~\truev}$ and $\ul{\truev}$ will be related at pseudo-type $\tl{\EmulDV\indexx{2;p}}$.
\end{exa}

Before explaining the definition of the logical relations for $\tl{\EmulDV\np}$, we should elaborate on the approximateness of the correspondence.
\begin{exa}[Approximate values \tl{\unkUVal\nn}]\label{ex:}
Consider the $\tl{\UVal\indexx{6}}$ value
\[\tl{\inDV{\times;5}~\pair{\inDV{\uplus;4}~(\inl{\unkUVal\indexx{4}}),
    \unkUVal\indexx{5}}}\]
This value might be used by the approximate back-translation to represent the \ulc term $\ul{\pair{\inl{\pair{\unit,\truev}},\lam{x}{x}}}$.
Our $\valrel{\EmulDV\np}_\square$ specification will enforce that terms of the form $\tl{\inDV{\times;n}~\pair{\cdot,\cdot}}$ or $\tl{\inDV{\uplus;n}~(\inl{\cdot})}$ represent the corresponding \ulc constructs, but terms $\tl{\unkUVal\indexx{4}}$ and $\tl{\unkUVal\indexx{5}}$ can represent arbitrary terms (in this case: a pair of base values and a lambda).
\end{exa}
The limited size of the type $\tl{\UVal\nn}$ sometimes forces us to resort to $\tl{\unkUVal\nn}$ values in the back-translation, making it approximate.
However, $\valrel{\EmulDV\np}_\square$ does not allow these $\tl{\unkUVal\nn}$ values to occur just anywhere, because they could compromise the required precision of our approximate back-translation.

In fact, $\valrel{\EmulDV\np}_\square$ provides two different specifications for the occurrence of $\tl{\unkUVal\nn}$, depending on the value of $p$.
The case where $p = \imprecise$ is used when we are proving $\tl{\ef{\upc}\nn} \lesssim \upc$, which means roughly that termination of $\tl{\ef{\upc}\nn}$ in \emph{any} number of steps implies termination of $\upc$.
In this case, $\valrel{\EmulDV\np}_\square$ allows $\tl{\unkUVal\nn}$ values to occur everywhere in a back-translation term, and they can correspond to arbitrary \ulc terms.
These mild requirements on the correspondence of \ulc terms place a large burden on the code in a back-translation $\tl{\ef{\upc}\nn}$.
This code must be able to deal with $\tl{\unkUVal\nn}$ values and produce behaviour for them that approximates the behaviour of $\upc$ for the arbitrary values that the $\tl{\unkUVal\nn}$s correspond with.
Luckily, when we are proving $\tl{\ef{\upc}\nn} \lesssim \upc$, we can achieve this by simply making all the functions in our back-translation diverge whenever they try to use a $\tl{\UVal\nn}$ value that happens to be an $\tl{\unkUVal\nn}$.
This is sufficient because the approximation $\tl{\ef{\upc}\nn} \lesssim \upc$ trivially holds when $\tl{\ef{\upc}\nn}$ diverges: it essentially only requires that $\upc$ terminates whenever $\tl{\ef{\upc}\nn}$ does, but nothing needs to be shown when the latter diverges.

\begin{exa}[Relatedness with \imprecise]\label{ex:imprecise}
Consider the term $\tlt\equiv\tl{\inDV{\times;42}~\pair{\unkUVal\indexx{42},\unkUVal\indexx{42}}}$.
This term will be related to $\ul{\pair{t_1,t_2}}$ at pseudo-type $\tl{\EmulDV\indexx{43;\imprecise}}$ for any terms $\ul{t_1}$ and $\ul{t_2}$ and in any world.
\end{exa}

The case when $p = \precise$ specifies where values $\tl{\unkUVal\nn}$ are allowed when we are proving that $\tl{\ef{\upc}\nn} \gtrsim\nn \upc$, meaning roughly that termination of $\upc$ in less than $n$ steps implies termination of $\tl{\ef{\upc}\nn}$.
In this case, the requirements on the back-translation correspondence are significantly stronger: $\tl{\unkUVal\nn}$ is simply ruled out by the definition of $\valrel{\EmulDV\np}_\square$.
That does not mean, however, that $\tl{\unkUVal\nn}$ cannot occur inside related terms, rather that $\tl{\unkUVal\nn}$ can only occur at depths that cannot be reached using the number of steps in the world.
\begin{exa}[Relatedness with \precise]\label{ex:precise}
Consider again the term $\tl{t} \isdef \tl{\inDV{\times;42}~\pair{\unkUVal\indexx{42},\unkUVal\indexx{42}}}$.
This term will still
be related by $\tl{\EmulDV\indexx{43;\precise}}$ to $\ul{t} \isdef \ul{\pair{t_1,t_2}}$ for any terms $\ul{t_1}$ and $\ul{t_2}$, but only in worlds $\WW$ such that $\CWstepsfun{\WW} = 0$. More precisely, our specification will state that $(\WW,\tl{t},\ul{t}) \in \valrel{\EmulDV\indexx{43;\precise}}_\square$ iff
\begin{equation*}
  (\WW,\tl{\pair{\unkUVal\indexx{42},\unkUVal\indexx{42}}},\ul{\pair{t_1,t_2}}) \in \valrel{\EmulDV\indexx{42;\precise}\times \EmulDV\indexx{42;\precise}}_\square\text{.}
\end{equation*}
By the definition in \cref{fig:logrels}, this requires in turn that $(\WW,\tl{\unkUVal\indexx{42}},\ul{t_1})$ and $(\WW,\tl{\unkUVal\indexx{42}},\ul{t_2})$ are in $\later \valrel{\EmulDV\indexx{42,\precise}}_\square$.
However if $\CWstepsfun{\WW} = 0$, then this is vacuously true by definition of the $\later$ operator, independent of the requirements of $\valrel{\EmulDV\indexx{42,\precise}}_\square$.
\end{exa}

Intuitively, it is sufficient to only forbid $\tl{\unkUVal\nn}$ at depths lower than the number of steps left in the world because we are proving $\tl{\ef{\upc}\nn} \gtrsim\nn \upc$ (emphasis on the index $_{\ms{n}}$ of $\gtrsim\nn$).
So, if $\upc$ terminates \emph{in less than $n$ steps}, then the evaluation of $\upc$ cannot have used values that are deeper than level $n$ in any $\tl{\UVal\nn}$.
The corresponding execution of $\tl{\ef{\upc}\nn}$ will also not have had a chance to encounter the $\tl{\unkUVal\nn}$s.
Therefore, the executions must have behaved identically.

\begin{figure}
  \centering
\begin{align*}
  \valrel{\EmulDV\indexx{0;p}}_\genlogrel \mydefsym& \left\{(\WW,\tl{v},\ul{v}) \relmiddle| \tl{v} = \tl{\unitv} \text{ and } \ms{p} = \imprecise\right\}\\
  \valrel{\EmulDV\indexx{n+1;p}}_\genlogrel \mydefsym&\left\{(\WW,\tl{v},\ul{v}) \relmiddle|
  \begin{aligned}
                                &\tl{v} \in \tllsfun{oftype}{\UVal\indexx{n+1}} \text{ and one of the following holds: }\\
                                &\qquad
                                \left\{\begin{aligned}
                                  &\tl{v} = \tl{\inDV{\mr{unk};n}} \text{ and } \ms{p} = \imprecise\\
                                  &\exists \tl{v'}\ldotp \tl{v} = \tl{\inDV{\tl{\Unit};n}~v'} \text{ and } (\WW,\tl{v'},\ul{v}) \in \valrel{\Unit}_\genlogrel\\
                                  &\exists \tl{v'}\ldotp \tl{v} = \tl{\inDV{\tl{\Bool};n}~v'} \text{ and } (\WW,\tl{v'},\ul{v}) \in \valrel{\Bool}_\genlogrel\\
                                  &
                                  \begin{multlined}
                                    \exists \tl{v'}\ldotp \tl{v} = \tl{\inDV{\times;n}~v'} \text{ and } \\
                                    (\WW,\tl{v'},\ul{v}) \in \valrel{\EmulDV\np \times \EmulDV\np}_\genlogrel
                                  \end{multlined}\\
                                  &
                                  \begin{multlined}
                                    \exists \tl{v'}\ldotp \tl{v} = \tl{\inDV{\uplus;n}~v'} \text{ and } \\
                                    (\WW,\tl{v'},\ul{v}) \in \valrel{\EmulDV\np \uplus \EmulDV\np}_\genlogrel
                                  \end{multlined}\\
                                  &
                                  \begin{multlined}
                                    \exists \tl{v'}\ldotp \tl{v} = \tl{\inDV{\to;n}~v'} \text{ and }\\
                                    (\WW,\tl{v'},\ul{v}) \in \valrel{\EmulDV\np \ra \EmulDV\np}_\genlogrel
                                  \end{multlined}\\
                                \end{aligned}\right.
                              \end{aligned}
                              \right\}
\end{align*}
  \caption{Specifying the relation between \ulc values and their emulation in $\valrel{\EmulDV\np}_\genlogrel$.}  \label{fig:emuldv}
\end{figure}

With this approximation aspect explained, \cref{fig:emuldv} presents the definition of $\valrel{\EmulDV\np}_\genlogrel$.
For relating terms $\tl{v}$ and $\ul{v}$ in a world $\WW$, the definition requires that $\tl{v}$ has the right type and that $p = \imprecise$ if $\tl{v}$ is $\tl{\unkUVal\nn}$.
Additionally, the structure of the \pstlc term stripped of its \tl{\UVal\nn} tag and the structure of the \ulc term must coincide.
Formally, this is expressed by the following conditions: $(\WW,\tl{v'},\ul{v})$ are in $\valrel{\batype}_\genlogrel$ (recall that \tl{\batype} are ground types), $\valrel{\EmulDV\np \times \EmulDV\np}_\genlogrel$, $\valrel{\EmulDV\np \uplus \EmulDV\np}_\genlogrel$ or $\valrel{\EmulDV\np \to \EmulDV\np}_\genlogrel$ if $\tl{v} = \tl{\inDV{\batype;n}~\tl{v'}}$, $\tl{v} = \tl{\inDV{\times;n}~\tl{v'}}$, $\tl{v} = \tl{\inDV{\uplus;n}~\tl{v'}}$ or $\tl{v} = \tl{\inDV{\to;n}~\tl{v'}}$ respectively.

In addition to $\tl{\EmulDV\np}$, we still need to define two helper functions (\cref{fig:helper-emul}).
The first, \replemul{\cdot}, was left open in \cref{fig:logrels}.
It re-maps all variables of a $\tlGamma$ that are of type $\tl{\EmulDV\np}$ to type $\tl{\UVal\nn}$.
A second function, $\toemul{\cdot}{n;p}$, turns an untyped $\ulGamma$ into one where all variables are mapped to $\tl{\EmulDV\np}$.
\begin{figure}[t]
  \centering
  \begin{align*}
    \toemul{\ule}{n;p} &= \tle \\
    \toemul{\ulGamma, \ul{x}}{n;p} &= \toemul{\ulGamma}{n;p}, (\tl{x}:\tl{\EmulDV\np})
      \\
    \replemul{\tle} &= \tle \\
    \replemul{\tlGamma, (\tl{x}:\tlhat{\tltau})} &= \replemul{\tlGamma}, (\tl{x}:\replemul{\tlhat{\tltau}})
  \end{align*}
  \begin{align*}
    \replemul{\tl{\hat{\tau}}\times\tl{\hat{\tau'}}} &= \tl{\replemul{\tl{\hat{\tau}}}\times\replemul{\tl{\hat{\tau'}}}}\\
    \replemul{\tl{\hat{\tau}}\uplus\tl{\hat{\tau'}}} &= \tl{\replemul{\tl{\hat{\tau}}}\uplus\replemul{\tl{\hat{\tau'}}}}\\
    \replemul{\tl{\hat{\tau}}\to\tl{\hat{\tau'}}} &= \tl{\replemul{\tl{\hat{\tau}}}\to\replemul{\tl{\hat{\tau'}}}}\\
      \replemul{\tl{\EmulDV\indexx{n;p}}} &= \tl{\UVal\nn}\\
  \replemul{\tl{\Bool}} &= \tl{\Bool} \\
    \replemul{\tl{\Unit}} &= \tl{\Unit}
  \end{align*}%
  \caption{Helper functions for $\tl{\EmulDV\np}$.}  \label{fig:helper-emul}
\end{figure}

The adequacy property of the logical relations (\cref{lem:adequacy}) holds for the complete definition of the logical relations, including the definition for $\valrel{\EmulDV\np}$.

\subsection{Upgrading and Downgrading Values}\label{sec:updown}
\begin{figure}[t]
  \centering
  \begin{align*}
  \tl{\downgrade\indexx{n;d}} &: \tl{\UVal\indexx{n+d} \ra \UVal\nn}\\
  \tl{\downgrade\indexx{0;d}} &\mydefsym \tl{\lam{v : \UVal\indexx{d}}{\unkUVal\indexx{0}}}\\
  \tl{\downgrade\indexx{n+1;d}} &\mydefsym \tl{\lambda x : \UVal\indexx{n+d+1}\ldotp \mr{case}\ x\ \mr{of}\ }\\
    &\hspace{1.3cm}\tl{\left|
    \begin{aligned}
              \inDV{\mr{unk};n+d} &\mapsto \inDV{\mr{unk};n} \\
              \inDV{\Unit;n+d}\ \tl{y} &\mapsto \inDV{\Unit;n}\ \tl{y} \\
              \inDV{\Bool;n+d}\ \tl{y} &\mapsto \inDV{\Bool;n}\ \tl{y} \\
              \inDV{\times;n+d}\ \tl{y} &\mapsto
                \inDV{\times;n}\ \tl{ \pair{\downgrade\indexx{n;d}\ \projone{y}, \downgrade\indexx{n;d}\ \projtwo{y}}} \\
              \inDV{\uplus;n+d}\ \tl{y} &\mapsto
                \inDV{\uplus;n}\ \case~\tl{y}~\of~ \left| \begin{aligned}
                    &\tl{\inl{x}\mapsto \inl{(\downgrade\indexx{n;d}\ x)}}\\
                    &\tl{\inr{x} \mapsto \inr{(\downgrade\indexx{n;d}\ x)}}
                  \end{aligned}\right.
                \\
                \inDV{\ra;n+d}\ \tl{y} &\mapsto
                  \inDV{\ra;n}\ \tl{(\lambda z: \UVal\nn\ldotp \downgrade\indexx{n;d}}\tl{(y\ (\upgrade\indexx{n;d}\ z)))}
    \end{aligned}\right.}
  \end{align*}

  \begin{align*}
  \tl{\upgrade\indexx{n;d}} &: \tl{\UVal\nn \ra \UVal\indexx{n+d}}\\
  \tl{\upgrade\indexx{0;d}} &\mydefsym \tl{\lambda x : \UVal\indexx{0}\ldotp \unkUVal\indexx{d}}\\
  \tl{\upgrade\indexx{n+1;d}} &\mydefsym \tl{\lambda x : \UVal\indexx{n+1}\ldotp \mr{case}\ x\ \mr{of}\ }\\
    &\hspace{1.5cm}\tl{\left|
    \begin{aligned}
              \inDV{\mr{unk};n} &\mapsto \inDV{\mr{unk};n+d} ;\\
              \inDV{\Unit;n}\ \tl{y} &\mapsto \inDV{\Unit;n+d}\ \tl{y} ;\\
              \inDV{\Bool;n}\ \tl{y} &\mapsto \inDV{\Bool;n+d}\ \tl{y} ;\\
              \inDV{\times;n}\ \tl{y} &\mapsto
                \inDV{\times;n+d}\ \tl{\pair{ \upgrade\indexx{n;d}\ \tl{\projone{y}}, \upgrade\indexx{n;d}\ \tl{\projtwo{y}} }}
              \\
              \inDV{\uplus;n}\ \tl{y} &\mapsto
                \inDV{\uplus;n+d}\ \case~y~\of~ \left|\begin{aligned}
                    &\tl{\inl{x}\mapsto \inl{(\upgrade\indexx{n;d}\ x)}}\\
                    &\tl{\inr{x} \mapsto \inr{(\upgrade\indexx{n;d}\ x)}}
                  \end{aligned}\right.
              \\
              \inDV{\ra;n}\ \tl{y} &\mapsto
                \inDV{\ra;n+d}\ \tl{(\lambda z: \UVal\nn\ldotp \upgrade\indexx{n;d}}\tl{(y\ (\downgrade\indexx{n;d}\ z)))}
    \end{aligned}\right.}
  \end{align*}

  \caption{Upgrade and downgrade for $\tl{\UVal\nn}$.}
  \label{fig:upgrade-downgrade}
\end{figure}
\Cref{fig:upgrade-downgrade} defines the functions
$\tl{\downgrade\indexx{n;d}} : \tl{\UVal\indexx{n+d} \to \UVal\nn}$ and
$\tl{\upgrade\indexx{n;d}} : \tl{\UVal\nn \to \UVal\indexx{n+d}}$ (by induction on
$n$) that we talked about before. Most cases simply work structurally over the type, but some are
more interesting. There is a contravariance in the cases for
function values in both $\tl{\downgrade\indexx{n;d}}$ and
$\tl{\upgrade\indexx{n;d}}$: a function $\tl{\UVal\nn\to\UVal\nn}$ is turned
into a function of type $\tl{\UVal\indexx{n+d}\to\UVal\indexx{n+d}}$ by
constructing a wrapper that downgrades the argument and upgrades the
result and vice versa. Unknown values are always mapped to unknown values, but
additionally, the case for $\tl{\downgrade\indexx{n;d}}$ when $n = 0$ will
throw away the information contained in its argument of type
$\tl{\UVal\indexx{d}}$ and simply returns the single unknown value in
$\tl{\UVal\indexx{0}}$. Note that $\tl{\downgrade\indexx{n;d}}$ and
$\tl{\upgrade\indexx{n;d}}$ are not inverse functions, since
$\tl{\downgrade\indexx{n;d}}$ throws away information that was previously
there.
Informally, while $\tlt \sim \tl{\downgrade\indexx{n;d}~(\upgrade\indexx{n;d}~t)}$, the reverse ($\tlt \sim \tl{\upgrade\indexx{n;d}~(\downgrade\indexx{n;d}~t)}$) is not true, since applying downgrade first reduces precision.

\begin{exa}[Downgrading terms]\label{ex:downgrade}
Suppose that we want to emulate a \ulc term $\ul{\lam{x}{\pair{x,x}}}$
in $\tl{\UVal\indexx{\cdot}}$ for a sufficiently-large $n$. We would expect roughly the following \pstlc term:
\begin{equation*}
 \tl{\inDV{\to;n-1}~(\lam{x : \UVal\indexx{n-1}}{\inDV{\times;n-2}~\pair{x,x}})}
\end{equation*}
Indices $n-1$ and $n-2$ of the $\tl{\UVal\nn}$ constructors are
imposed by the well-typedness constraints.
However, even this is not
enough to guarantee well-typedness.
With a closer inspection, the variable $\tl{x}$ of type $\tl{\UVal\indexx{n-1}}$ is used where a term of type $\tl{\UVal\indexx{n-2}}$ is required (it is inside a pair tagged with $\tl{\inDV{\times;n-2}}$).
This is a problem of type safety, not precision of approximation. Since \tl{x} appears inside a pair, inspecting \tl{x} for any number of steps requires at least one additional step to first project it out of the pair.
In other words, for the pair to be a precise approximation up to $\leq n-1$ steps, \tl{x} needs only to be precise up to $n-2$ steps.
It is then safe to throw away one level of precision and downgrade $\tl{x}$ from type $\tl{\UVal\indexx{n-1}}$ to $\tl{\UVal\indexx{n-2}}$.
\end{exa}

We will use the function \downgrade{} for the situation of
\cref{ex:downgrade} and similar ones in the next sections. In dual
situations we will need to upgrade terms from type $\tl{\UVal\nn}$ to
$\tl{\UVal\indexx{n+d}}$. This will neither increase precision of the
approximation, nor decrease it.

The correctness property for downgrade and upgrade is stated in the following lemma.
\begin{lem}[Compatibility lemma for $\tl{\upgrade\indexx{n;d}}$ and
  $\tl{\downgrade\indexx{n;d}}$]
  \label{lem:downgrade-upgrade-compat}
  Suppose that 
  \begin{itemize}
    \item either ($\ms{n} < \ms{m}$ and $\ms{p} = \precise$) 
    \item or ($\genlogrel = \lesssim$ and $\ms{p} = \imprecise$),
  \end{itemize}
  then
  \begin{itemize}
  \item If
    $\tlGamma \vdash \tl{t} \arbsim\nn \ul{t} : \tl{\EmulDV\indexx{m+d;p}}$, then
    $\tlGamma \vdash \tl{\downgrade\indexx{m;d}~t} \arbsim\nn \ul{t} :
    \tl{\EmulDV\indexx{m;p}}$.
  \item If $\tlGamma \vdash \tl{t} \arbsim\nn \ul{t} : \tl{\EmulDV\indexx{m;p}}$,
    then
    $\tlGamma \vdash \tl{\upgrade\indexx{m;d}~t} \arbsim\nn \ul{t} :
    \tl{\EmulDV\indexx{m+d;p}}$.
  \end{itemize}
\end{lem}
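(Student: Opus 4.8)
The plan is to prove the statement by induction on $m$, establishing simultaneously an auxiliary \emph{value-level} claim: for every world $\WW$ satisfying the side conditions (i.e.\ $\CWstepsfun{\WW} < m$ and $\ms{p}=\precise$, or $\genlogrel = \lesssim$ and $\ms{p}=\imprecise$), if $(\WW,\tl{v},\ul{v})\in\valrel{\EmulDV\indexx{m+d;p}}_\genlogrel$ then $\tl{\downgrade\indexx{m;d}~v}$ reduces to a value $\tl{v'}$ with $(\WW,\tl{v'},\ul{v})\in\valrel{\EmulDV\indexx{m;p}}_\genlogrel$, and dually for $\tl{\upgrade\indexx{m;d}}$. (Note $\tl{\downgrade\indexx{m;d}~v}$ and $\tl{\upgrade\indexx{m;d}~v}$ always terminate on a value $\tl{v}$: they recurse structurally on the finite term $\tl{v}$, wrapping rather than descending into function bodies.) First I would reduce the lemma to this value-level claim. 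Unfolding the definition of $\arbsim\nn$ and using that $\tl{\downgrade\indexx{m;d}}$, $\tl{\upgrade\indexx{m;d}}$ are closed, so substitution commutes with them, it suffices to show that $(\WW,\tl{t'},\ul{t'})\in\termrel{\EmulDV\indexx{m+d;p}}_\genlogrel$ implies $(\WW,\tl{\downgrade\indexx{m;d}~t'},\ul{t'})\in\termrel{\EmulDV\indexx{m;p}}_\genlogrel$ whenever $\CWstepsfun{\WW}$ meets the side conditions (and dually). By biorthogonality, for $(\WW,\tlC,\ulC)\in\contrel{\EmulDV\indexx{m;p}}_\genlogrel$, one checks that $\tlC[\tl{\downgrade\indexx{m;d}}~[\cdot]]$ together with $\ulC$ is an element of $\contrel{\EmulDV\indexx{m+d;p}}_\genlogrel$: given a related value $(\WW',\tl{v},\ul{v})$ one runs $\tl{\downgrade\indexx{m;d}~v}\stlcto^* \tl{v'}$, lands in $\valrel{\EmulDV\indexx{m;p}}_\genlogrel$ at the \emph{same} world $\WW'$ by the value-level claim, applies $\tlC$, and absorbs the evaluation steps of $\downgrade$ using closure of $\CWobswfun{\cdot}_\genlogrel$ under source reduction. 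Plugging $\tl{t'}$ into this continuation and using the hypothesis closes the goal.

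Second, the value-level claim is proved by induction on $m$. The base case $m=0$ is vacuous in the precise branch ($\CWstepsfun{\WW}<0$ is impossible) and immediate in the imprecise branch, since $\tl{\downgrade\indexx{0;d}~v}\stlcto\tl{\unkUVal\indexx{0}}=\tl{\unitv}$ and $\tl{\upgrade\indexx{0;d}~\unitv}\stlcto\tl{\unkUVal\indexx{d}}$, both of which are related to an arbitrary $\ul{v}$ at $\tl{\EmulDV\indexx{\cdot;\imprecise}}$. For $m=m'{+}1$ I case-split on the head tag of $\tl{v}$. The $\tl{\inDV{\mr{unk};\cdot}}$ case arises only when $\ms{p}=\imprecise$ and maps unknowns to unknowns; the $\tl{\Unit}$ and $\tl{\Bool}$ cases are immediate from the world-independent relations $\valrel{\Unit}_\genlogrel$, $\valrel{\Bool}_\genlogrel$. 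For $\tl{\inDV{\times;\cdot}~y}$ and $\tl{\inDV{\uplus;\cdot}~y}$ the components of $\tl{y}$ are related only in $\later\WW$: when $\CWstepsfun{\WW}=0$ the $\later$ is vacuous, and when $\CWstepsfun{\WW}=j{+}1$ the side condition descends to $(m',\later\WW)$ since $\CWstepsfun{\later\WW}=j<m'$ in the precise branch, so the value-level induction hypothesis at $m'$ applies to the recursive $\tl{\downgrade\indexx{m';d}}$ calls. For $\tl{\inDV{\ra;\cdot}~y}$, $\tl{\downgrade\indexx{m;d}}$ produces the already-valued wrapper $\tl{\inDV{\ra;m'}~(\lambda z\ldotp \downgrade\indexx{m';d}(y~(\upgrade\indexx{m';d}~z)))}$; to place it in $\valrel{\EmulDV\indexx{m';p}\ra\EmulDV\indexx{m';p}}_\genlogrel$ I take $\WW'\strfutw\WW$ and $(\WW',\tl{v''},\ul{v''})\in\valrel{\EmulDV\indexx{m';p}}_\genlogrel$, run $\tl{\upgrade\indexx{m';d}~v''}$ to a value related at $\tl{\EmulDV\indexx{m'+d;p}}$ by the value-level hypothesis at $m'$ (its side condition holds because $\CWstepsfun{\WW'}<\CWstepsfun{\WW}\le m'$, hence $\le m'{-}1<m'$), apply the value relation of $\tl{y}$ together with the standard compatibility property of the logical relation for application, then apply $\tl{\downgrade\indexx{m';d}}$ to the result via the term-level statement of this lemma at $m'$ (the outer induction hypothesis), and finally close under reduction on both sides. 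The $\tl{\upgrade\indexx{m;d}}$ direction is entirely symmetric, exchanging the roles of $\tl{\downgrade}$ and $\tl{\upgrade}$.

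The main obstacle is the step-index bookkeeping, and in particular verifying in the precise branch that the recursion of $\downgrade$/$\upgrade$ never discards structure that is still ``visible'' in the world. The only places where information is thrown away and a $\tl{\unkUVal}$ is manufactured are the base cases $\tl{\downgrade\indexx{0;d}}$ and $\tl{\upgrade\indexx{0;d}}$; the hypothesis $\CWstepsfun{\WW}<m$ — which for the lemma statement follows from $n<m$ together with the fact that $\arbsim\nn$ only quantifies over worlds with $\CWstepsfun{\WW}\le n$ — guarantees precisely that these base cases are reached only below depth $\CWstepsfun{\WW}$, where the $\later$ modality already makes the value relation trivial, so no spurious imprecision is introduced within the part of the value the world can inspect. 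In the imprecise branch no such argument is needed, since $\tl{\unkUVal}$ values are related to everything; the remaining care there is only that the recursive calls still yield results of the correct recursive shape, which is exactly what the induction hypothesis at $m'$ provides.
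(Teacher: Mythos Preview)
Your proposal is correct and follows the expected structure: reduce the open-term statement to a value-level claim via biorthogonality (building the extended evaluation context $\tlC[\tl{\downgrade\indexx{m;d}}~[\cdot]]$ and absorbing the source-side reductions into $\CWobswfun{\cdot}_\genlogrel$), then prove the value-level claim by induction on $m$ with a case split on the $\tl{\UVal}$ tag, using the $\later$ modality and the invariant $\CWstepsfun{\WW}<m$ to justify the recursive calls at $m'$. The paper does not spell out a proof in the main text (it defers to the companion technical report and the Coq mechanisation), but your argument is the standard one and your step-index bookkeeping---in particular, that $\WW'\strfutw\WW$ together with $\CWstepsfun{\WW}\le m'$ yields $\CWstepsfun{\WW'}<m'$ for the function case, and that the $m=0$ base case is vacuous in the $\precise$ branch---is exactly what is needed.
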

This lemma covers both situations that we discussed previously. It
requires that either $n < m$ (so that the results only hold in worlds
$\WW$ with $\CWstepsfun{\WW} \leq n < m$), in which case
$p = \precise$, or $\genlogrel = \lesssim$ and $p = \imprecise$. If that
is the case, the lemma says that if a term $\tl{t}$ is related to
$\ul{t}$ by $\tl{\EmulDV\indexx{m+d;p}}$ (or $\tl{\EmulDV\indexx{m;p}}$) then it stays
related to $\ul{t}$ after upgrading or downgrading.

\subsection{Emulation}\label{sec:emulate}

\begin{figure}
  \centering
  \noindent%
  \begin{align*}
    \tl{\emulate\nn}(\ul{t}) &: \tl{\UVal\nn}\\
    \tl{\emulate\nn}(\ul{\unitv}) &\mydefsym \tl{\downgrade\indexx{n;1}~(\inDV{\tl{\Unit};n}\ \unitv)}\\
    \tl{\emulate\nn}(\ul{\truev}) &\mydefsym \tl{\downgrade\indexx{n;1}~(\inDV{\tl{\Bool};n}\ \truev)}\\
    \tl{\emulate\nn}(\ul{\falsev}) &\mydefsym \tl{\downgrade\indexx{n;1}~(\inDV{\tl{\Bool};n}\ \falsev)}\\
    \tl{\emulate\nn}(\ul{x}) &\mydefsym \tl{x}\\
    \tl{\emulate\nn}(\ul{\lambda x\ldotp t}) &\mydefsym \tl{\downgrade\indexx{n;1}~
                                                (\inDV{\ra;n}\ (\lambda x: \UVal\nn\ldotp \emulate\nn(\ul{t})))
                                                }\\
    \tl{\emulate\nn}(\ul{t_1\ t_2}) &\mydefsym \tl{
                                       \caseDV{\ra;n}\ (\upgrade\indexx{n;1}~(\emulate\nn(\ul{t_1})))\ \emulate\nn(\ul{t_2})
                                       }\\
    \tl{\emulate\nn}(\ul{\langle t_1, t_2\rangle}) &\mydefsym \tl{
                                                      \downgrade\indexx{n;1}~(\inDV{\times;n}\ \langle \emulate\nn(\ul{t_1}), \emulate\nn(\ul{t_2}) \rangle)
                                                      }\\
    \tl{\emulate\nn}(\ul{\inl{t}}) &\mydefsym \tl{\downgrade\indexx{n;1}~(\inDV{\uplus;n}\ (\inl{\emulate\nn(\ul{t})}))}\\
    \tl{\emulate\nn}(\ul{\inr{t}}) &\mydefsym \tl{\downgrade\indexx{n;1}~(\inDV{\uplus;n}\ (\inr{\emulate\nn(\ul{t})}))}\\
    \tl{\emulate\nn}(\ul{t.1}) &\mydefsym \tl{
                                  (\caseDV{\times;n}\ (\upgrade\indexx{n;1}~(\emulate\nn(\ul{t})))).1 }\\
    \tl{\emulate\nn}(\ul{t.2}) &\mydefsym \tl{
                                  (\caseDV{\times;n}\ (\upgrade\indexx{n;1}~(\emulate\nn(\ul{t})))).2 }\\
    \tl{\emulate\nn}(\ul{t;t'}) &\mydefsym \tl{(\caseDV{\Unit;n}\ (\upgrade\indexx{n;1}(\emulate\nn(\ul{t}))));\emulate\nn(\ul{t'})}
    \\
    \tl{\emulate\nn}(\ul{\wrong}) &\mydefsym \tl{\myomega}
  \end{align*}
  \vspace{-.7cm}
  \begin{multline*}
    \tl{\emulate\nn}(\ul{\case~t_1~\of~} \ul{\inl{x}\mapsto t_2 \mid \inr{x} \mapsto t_3})
    \mydefsym\\
    \tl{
        \case~(\caseDV{\uplus;n}\ (\upgrade\indexx{n;1}~(\emulate\nn(\ul{t_1}))))~\of
        \left|
          \begin{aligned}
            &\inl{x}\mapsto \emulate\nn(\ul{t_2})\\
            &\inr{x}\mapsto \emulate\nn(\ul{t_3})
          \end{aligned} \right.}
  \end{multline*}
  \vspace{-.4cm}
  \begin{multline*}
    \tl{\emulate\nn}(\ul{\ifte{t}{t_1}{t_2}})
    \mydefsym \\
      \tl{\ifte{(\caseDV{\Bool;n}~(\upgrade\indexx{n;1}~(\emulate\nn{\ult})))}{\emulate\nn(\ul{t_1})}{}}
      \tl{\emulate\nn(\ul{t_2})}
    \end{multline*}
  \caption{Emulating \ulc terms in $\tl{\UVal\nn}$.}
  \label{fig:emulate}
\end{figure}
Having defined $\tl{\downgrade}$ and $\tl{\upgrade}$, \cref{fig:emulate}
defines the $\tl{\emulate\nn}$ function.
That function maps arbitrary \ulc terms to
their approximate back-translation: \pstlc terms of type
$\tl{\UVal\nn}$.
\tl{\emulate\nn} is defined by induction on~$\ul{t}$.
The different cases follow the same pattern: every term $\ul{t}$ is mapped to a \pstlc term constructed recursively from the emulation of sub-terms, producing and consuming $\tl{\UVal\nn}$ terms wherever $\ul{t}$ works with untyped terms.
Additionally, the definitions use
$\tl{\upgrade\indexx{n;1}}$ and $\tl{\downgrade\indexx{n;1}}$ to make the
resulting term type-check, as explained in \cref{ex:downgrade}.
For
example, the case for pairs applies $\tl{\inDV{\times;n}}$ to a pair
constructed from the emulations of its components.
Since this produces a
$\tl{\UVal\indexx{n+1}}$, $\tl{\downgrade\indexx{n;1}}$ is used to downgrade this to a $\tl{\UVal\nn}$ term.
Finally, the untyped term $\ul{\wrong}$ is back-translated to a divergent term.

The back-translation produced by $\tl{\emulate\nn}$ is necessarily
approximate, as the type $\tl{\UVal\nn}$ is not large enough for
back-translating arbitrary terms. Inaccuracies in the
back-translation are introduced in the calls to
$\tl{\downgrade\indexx{n;1}}$ in several of the cases.
The
approximation is accurate enough for the following lemma to hold.
\begin{lem}[Emulate relates at $\tl{\EmulDV}$]\label{lem:emulate-works}
  If $\ulGamma \vdash \ult$, and if 
  \begin{itemize}
    \item either ($\ms{m} > \ms{n}$ and $\ms{p} = \precise$) 
    \item or ($\genlogrel = \lesssim$ and $\ms{p} = \imprecise$),
  \end{itemize} then we have that
  $\toemul{\ulGamma}{m;p}\vdash \tl{\emulate\indexx{m}}(\ul{t}) \arbsim\nn \ul{t}: \tl{\EmulDV\indexx{m;p}}$.
\end{lem}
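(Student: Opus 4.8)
The plan is to prove \cref{lem:emulate-works} by structural induction on $\ul{t}$ (equivalently, on the derivation of $\ulGamma \vdash \ul{t}$), with the world $\WW$ (subject to $\CWstepsfun{\WW} \le n$) and the related substitutions $(\WW,\tlgamma,\ulgamma) \in \envrel{\toemul{\ulGamma}{m;p}}_\genlogrel$ generalised, so that the induction hypothesis can be reused on subterms in the same world or in strictly future ones. A key observation is that the hypothesis ``($\ms{m} > \ms{n}$ and $\ms{p} = \precise$) or ($\genlogrel = \lesssim$ and $\ms{p} = \imprecise$)'' is \emph{exactly} the side condition of \cref{lem:downgrade-upgrade-compat}, so the compatibility statements for $\tl{\upgrade\indexx{m;1}}$ and $\tl{\downgrade\indexx{m;1}}$ are available in every case; this is what makes the alternation of precision levels in \cref{fig:emulate} work. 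Before the semantic argument I would discharge the two syntactic obligations of \cref{fig:logrel-termcontext}: that $\tl{\emulate\indexx{m}}(\ul{t})$ is a \pstlc term of type $\replemul{\tl{\EmulDV\indexx{m;p}}} = \tl{\UVal\indexx{m}}$ under the context $\replemul{\toemul{\ulGamma}{m;p}}$, a routine induction on $\ul{t}$ using the declared types of the combinators in \cref{fig:uval-basic-tools,fig:upgrade-downgrade}, and that $\ul{t}$ is well-scoped in $\dom{\toemul{\ulGamma}{m;p}}$, which is simply the assumption $\ulGamma \vdash \ul{t}$.

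For the semantic clause, each case reduces to a congruence lemma for the relevant $\tl{\UVal}$-combinator, proved by unfolding $\valrel{\EmulDV\np}_\genlogrel$ from \cref{fig:emuldv}. In the introduction cases ($\ul{\unitv}$, $\ul{\truev}$, $\ul{\falsev}$, $\ul{\pair{t_1,t_2}}$, $\ul{\inl{t}}$, $\ul{\inr{t}}$ and $\ul{\lambda x\ldotp t}$) I apply the induction hypothesis to the subterms, combine the results with the standard congruence lemma for the matching target construct to get relatedness at the ``structural'' pseudo-type built from $\tl{\EmulDV\indexx{m;p}}$ (for instance $\tl{\EmulDV\indexx{m;p} \times \EmulDV\indexx{m;p}}$ for a pair, using that the logical relations are downward closed in the step index so ``now'' implies the $\later$-relatedness \cref{fig:emuldv} asks for), then read off from \cref{fig:emuldv} that prefixing the \pstlc side with the corresponding $\tl{\inDV{\cdot;m}}$ injection lifts this to $\tl{\EmulDV\indexx{m+1;p}}$, and finally apply the $\tl{\downgrade\indexx{m;1}}$ clause of \cref{lem:downgrade-upgrade-compat} to land at $\tl{\EmulDV\indexx{m;p}}$ --- which is precisely the term $\tl{\emulate\indexx{m}}$ produces; the $\ul{\lambda}$ case additionally invokes the induction hypothesis under an extended context to show that substituting related arguments into the body gives related terms. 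The elimination cases (application, the two projections, sequencing, $\case$ and $\mr{if}$) are dual: apply the induction hypothesis to the scrutinee, move it from $\tl{\EmulDV\indexx{m;p}}$ to $\tl{\EmulDV\indexx{m+1;p}}$ with the $\tl{\upgrade\indexx{m;1}}$ clause of \cref{lem:downgrade-upgrade-compat}, then use a congruence lemma for the corresponding $\tl{\caseDV{\cdot;m}}$ destructor: by \cref{fig:emuldv}, a value related at $\tl{\EmulDV\indexx{m+1;p}}$ is either of the expected $\tl{\inDV{\cdot;m}}$ shape, so $\tl{\caseDV{\cdot;m}}$ reduces and exposes the underlying value related one step later at the appropriate level-$m$ pseudo-type, or it is $\tl{\unkUVal\indexx{m}}$, which is only possible when $\ms{p} = \imprecise$ (hence $\genlogrel = \lesssim$), in which case $\tl{\caseDV{\cdot;m}}$ diverges and $\lesssim$ holds vacuously. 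The variable case is immediate from $\tl{\emulate\indexx{m}}(\ul{x}) = \tl{x}$ and the definition of $\envrel{\toemul{\ulGamma}{m;p}}_\genlogrel$, and $\ul{\wrong}$ is mapped to $\tl{\myomega}$, which diverges, so the statement holds trivially for $\genlogrel = \lesssim$ and vacuously for $\genlogrel = \gtrsim$ since $\ul{\wrong}$ never reduces to a value.

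The hard part is the interlocking step-index and precision bookkeeping. One has to check, case by case, that every $\tl{\downgrade\indexx{m;1}}$ and $\tl{\upgrade\indexx{m;1}}$ appearing in \cref{fig:emulate} is applied under the preconditions of \cref{lem:downgrade-upgrade-compat}, and --- more delicately --- that the single reduction step taken by each target elimination rule coincides exactly with the one step of $\later$-guardedness that \cref{fig:emuldv} imposes on the components of a $\tl{\UVal\indexx{m+1}}$ value; this alignment is the whole reason $\tl{\emulate\indexx{m}}$ descends through the precision levels $m, m{-}1, \ldots$ and why letting $\tl{\caseDV{\cdot}}$ (and the back-translation of $\ul{\wrong}$) diverge is enough to carry the $\lesssim$ direction, while in the $\precise$/$\gtrsim$ direction a $\tl{\unkUVal}$ value can only ever be reached in a zero-step world, where the observation relation is vacuous because the target has not produced a value yet. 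Adequacy of the completed logical relation (\cref{lem:adequacy}) is not used in this proof, but it is what turns \cref{lem:emulate-works} into a usable ingredient for the full-abstraction argument of \cref{sec:comp-fa}.
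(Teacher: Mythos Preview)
Your approach is correct and is precisely the one the paper's infrastructure is designed to support: structural induction on $\ul{t}$, with \cref{lem:downgrade-upgrade-compat} discharging every occurrence of $\tl{\downgrade\indexx{m;1}}$ and $\tl{\upgrade\indexx{m;1}}$ in \cref{fig:emulate}, and with small congruence lemmas for the $\tl{\inDV{\cdot;m}}$ injections and $\tl{\caseDV{\cdot;m}}$ destructors read off directly from \cref{fig:emuldv}. The paper does not spell out the proof in the body (it is delegated to the companion report and the Coq development), but the lemmas it does state make clear that this is the intended route.

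One point to tighten: in the elimination cases your dichotomy ``expected $\tl{\inDV{\cdot;m}}$ shape or $\tl{\unkUVal}$'' is incomplete. A value related at $\tl{\EmulDV\indexx{m+1;p}}$ can perfectly well be tagged with a \emph{different} constructor --- e.g.\ $\tl{\inDV{\times;m}~v'}$ when you are about to apply $\tl{\caseDV{\to;m}}$. By \cref{fig:emuldv} the paired $\ul{v}$ is then genuinely a pair, so on the \ulc side the application steps to $\ul{\wrong}$ (never a value), while on the \pstlc side $\tl{\caseDV{\to;m}}$ falls through to $\tl{\myomega}$ and diverges. Hence both $\lesssim$ and $\gtrsim$ hold vacuously, exactly as in your $\ul{\wrong}$ case; you just need to add this branch to the case split.
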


Like \cref{lem:downgrade-upgrade-compat}, \cref{lem:emulate-works} requires that
either $n < m$ (so that the results only hold in worlds $\WW$ with
$\CWstepsfun{\WW} \leq n < m$), in which case $p = \precise$, or
$\genlogrel = \lesssim$ and $p = \imprecise$.
This again covers what we
need for the two logical approximations of $\tl{\ef{\upc}\nn}$ in
\cref{fig:proving-compiler-security-approx}.
The lemma states that
the back-translation of any well-scoped term is related to the term by
$\tl{\EmulDV\indexx{m;p}}$, as intended.

An analogous result holds for contexts.
\begin{lem}[Emulate relates contexts at $\tl{\EmulDV}$]
  \label{lem:emulate-works-ctx}
  If $\vdash \upc : \ul{\Gamma'} \ra \ulGamma$, and if 
  \begin{itemize}
    \item either ($m > n$ and $p = \precise$) 
    \item or
  ($\square = \lesssim$ and $p = \imprecise$),
  \end{itemize} then
  $\vdash \tl{\emulate\indexx{m}}(\upc) \arbsim\nn \upc: \toemul{\ul{\Gamma'}}{m;p},\tl{\EmulDV\indexx{m;p}} \to \toemul{\ulGamma}{m;p}, \tl{\EmulDV\indexx{m;p}}$
\end{lem}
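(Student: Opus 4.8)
The plan is to prove \cref{lem:emulate-works-ctx} by structural induction on the program context $\upc$, re-using exactly the per-former reasoning that already underlies \cref{lem:emulate-works}. The point is that the proof of \cref{lem:emulate-works} (by induction on $\ul{t}$) naturally factors through a family of \emph{compatibility properties}, one per \ulc term former: for instance, the application case establishes that whenever $\tlGamma \vdash \tl{s_1} \arbsim\nn \ul{t_1} : \tl{\EmulDV\indexx{m;p}}$ and $\tlGamma \vdash \tl{s_2} \arbsim\nn \ul{t_2} : \tl{\EmulDV\indexx{m;p}}$, then $\tlGamma \vdash \tl{\caseDV{\ra;m}\ (\upgrade\indexx{m;1}~s_1)\ s_2} \arbsim\nn \ul{t_1\ t_2} : \tl{\EmulDV\indexx{m;p}}$, and similarly for pairs, projections, injections, case, sequencing, conditionals, $\lambda$-abstraction, and (trivially) $\wrong$. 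Each such property is obtained by stitching together the corresponding compatibility lemma of the underlying logical relation $\arbsim\nn$, the definition of $\valrel{\EmulDV\indexx{m;p}}_\genlogrel$ (\cref{fig:emuldv}) to push through the coproduct tags $\tl{\inDV{\bullet;m}}$/$\tl{\caseDV{\bullet;m}}$, and \cref{lem:downgrade-upgrade-compat} for the $\tl{\downgrade\indexx{m;1}}$/$\tl{\upgrade\indexx{m;1}}$ coercions. The side conditions ``$m>n$ and $p=\precise$'' or ``$\genlogrel=\lesssim$ and $p=\imprecise$'' are precisely those of \cref{lem:downgrade-upgrade-compat,lem:emulate-works}, and since they do not mention $\upc$ they are carried unchanged through the whole induction.

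With these compatibility properties in hand, the induction on $\upc$ is routine. In the base case $\upc = \ulH{\cdot}$ we have $\ul{\Gamma'} = \ulGamma$ and $\tl{\emulate\indexx{m}}(\upc) = \tlH{\cdot}$, so the conclusion ``for all $\tl{t},\ul{t}$, if $\toemul{\ul{\Gamma'}}{m;p} \vdash \tl{t} \arbsim\nn \ul{t} : \tl{\EmulDV\indexx{m;p}}$ then $\toemul{\ulGamma}{m;p} \vdash \tl{\emulate\indexx{m}}(\upc)\tlH{t} \arbsim\nn \upc\ulH{t} : \tl{\EmulDV\indexx{m;p}}$'' is literally the hypothesis. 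In each inductive step the hole sits in exactly one immediate sub-expression $\upc'$; since $\tl{\emulate\indexx{m}}$ on contexts is given by the same clauses as on terms (with $\ulH{\cdot}\mapsto\tlH{\cdot}$), the term $\tl{\emulate\indexx{m}}(\upc)\tlH{t}$ is the corresponding \pstlc term former applied to $\tl{\emulate\indexx{m}}(\upc')\tlH{t}$ and to the $\tl{\emulate\indexx{m}}(\cdot)$-images of the remaining hole-free sub-expressions of $\upc$. One then applies the induction hypothesis to $\upc'$ (relating $\tl{\emulate\indexx{m}}(\upc')$ to $\upc'$ as contexts), applies \cref{lem:emulate-works} to the hole-free sub-expressions (relating their emulations to them as terms), and feeds both into the matching compatibility property to obtain the conclusion for $\upc$. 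The well-scopedness half of the statement — that $\tl{\emulate\indexx{m}}(\upc)$ is a valid \pstlc context whose hole and free variables all have type $\tl{\UVal\indexx{m}}$ — follows by the same induction and is the context analogue of the typing of $\tl{\emulate\indexx{m}}(\ul{t})$.

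The only cases needing genuine care are the ones that cross a binder, i.e.\ $\upc = \ul{\lambda x\ldotp}\upc'$ and the ones where the hole occurs under the pattern variable of a $\ul{\case}$ branch. There the induction hypothesis for $\upc'$ lives in the extended pseudo-context $\toemul{\ul{\Gamma'},\ul{x}}{m;p} = \toemul{\ul{\Gamma'}}{m;p},(\tl{x}:\tl{\EmulDV\indexx{m;p}})$ — using that $\toemul{\cdot}{m;p}$ distributes over context extension (\cref{fig:helper-emul}) — and one invokes the ``context'' form of the $\lambda$- (resp.\ case-branch-) compatibility property, which threads the extra variable at pseudo-type $\tl{\EmulDV\indexx{m;p}}$ before it is re-mapped to $\tl{\UVal\indexx{m}}$ by $\replemul{\cdot}$: exactly the bookkeeping already performed for closed bodies in \cref{lem:emulate-works}, now carried out with a hole-carrying body instead of a closed one. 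Beyond this context management there is no new difficulty; in particular, the delicate interplay between the step index $n$, the precision index $m$, the flag $p$, and the $\tl{\downgrade}$/$\tl{\upgrade}$ coercions was already discharged in \cref{lem:downgrade-upgrade-compat,lem:emulate-works}, so the expected ``main obstacle'' here is really just making sure the term-level proof is organised modularly enough — through per-former compatibility properties — that it transfers verbatim to hole-carrying contexts.
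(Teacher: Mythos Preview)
Your proposal is correct and follows essentially the same approach the paper intends: the paper states \cref{lem:emulate-works-ctx} as the context analogue of \cref{lem:emulate-works} (``An analogous result holds for contexts''), and your structural induction on $\upc$---reusing the per-former compatibility properties from the term case, invoking \cref{lem:emulate-works} on hole-free subterms, the induction hypothesis on the hole-carrying subcontext, and \cref{lem:downgrade-upgrade-compat} for the coercions---is exactly how that analogy unfolds. Your handling of the binder cases via $\toemul{\cdot}{m;p}$ is also on point.
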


\subsection{Injection and Extraction of Terms}\label{sec:injextr}

\begin{figure}
  \centering
    \begin{align*}
      \tl{\extractf{\tau;n}} &: \tl{\UVal\nn \ra \tau}\\
      \tl{\extractf{\tau;0}} &\mydefsym \tl{\lambda x: \UVal_0\ldotp \myomega}\\
      \tl{\extractf{\Unit;n+1}} &\mydefsym \tl{\lambda x: \UVal\indexx{n+1}\ldotp \caseDV{\Unit;n}\ x}\\
      \tl{\extractf{\Bool;n+1}} &\mydefsym \tl{\lambda x: \UVal\indexx{n+1}\ldotp \caseDV{\Bool;n}\ x}\\
    \tl{\extractf{\tau_1\ra \tau_2;n+1}} &\mydefsym
                                            \tl{\lambda x:\UVal\indexx{n+1}\ldotp \lambda y:\tau_1\ldotp \extractf{\tau_2;n}\ }
                                            \tl{(\caseDV{\ra;n}\ x\ (\inject{\tau_1;n}\ y))}\\
    \tl{\extractf{\tau_1\times \tau_2;n+1}} &\mydefsym
                                               \tl{\lambda x: \UVal\indexx{n+1}\ldotp \langle \extractf{\tau_1;n}\ \projone{(\caseDV{\times;n}\ x)},}
                                               \tl{ \extractf{\tau_2;n}\ \projtwo{(\caseDV{\times;n}\ x)} \rangle}
                                               \\
    \tl{\extractf{\tau_1\uplus \tau_2;n+1}} &\mydefsym
                                               \tl{\lambda x: \UVal\indexx{n+1}\ldotp \case~\caseDV{\uplus;n}\ x~\of}
                                               \tl{\left|\begin{aligned}
                                                 &\tl{\inl{y} \to \inl{(\extractf{\tau_1;n}~y)}}\\
                                                 &\tl{\inr{y} \to \inr{(\extractf{\tau_2;n}~y)}}
                                               \end{aligned}\right. }
    \end{align*}
    \begin{align*}
      \tl{\inject{\tau;n}} &: \tl{\tau \ra \UVal\nn}\\
      \tl{\inject{\tau;0}} &\mydefsym \tl{\lambda x: \tau\ldotp \myomega}\\
      \tl{\inject{\Unit;n+1}} &\mydefsym \tl{\lambda x: \Unit\ldotp \inDV{\Unit;n}\ x}\\
      \tl{\inject{\Bool;n+1}} &\mydefsym \tl{\lambda x: \Bool\ldotp \inDV{\Bool;n}\ x}\\
      \tl{\inject{\tau_1\ra \tau_2;n+1}} &\mydefsym
                                             \tl{\lambda x: \tau_1\ra\tau_2\ldotp \inDV{\ra;n}\ (\lambda x:\UVal\nn\ldotp}
                                             \tl{\inject{\tau_2;n}~(x~(\extractf{\tau_1;n}\ x)))}
                                             \\
      \tl{\inject{\tau_1\times \tau_2;n+1}} &\mydefsym
                                                \tl{\lambda x: \tau_1\times\tau_2\ldotp \inDV{\times;n}\langle \inject{\tau_1;n}\ \projone{x},}
                                                  \tl{\inject{\tau_2;n}\ \projtwo{x}\rangle}
                                                \\
      \tl{\inject{\tau_1\uplus \tau_2;n+1}} &\mydefsym
      \tl{\lambda x: \tau_1\uplus\tau_2\ldotp \inDV{\uplus;n}\ (\case~x~\of~}
      \tl{\left.\left|\begin{aligned}
            &\tl{\inl{y} \mapsto \inr{(\inject{\tau_1;n}\ y)}}\\
            &\tl{\inr{y} \mapsto \inr{(\inject{\tau_2;n}\ y)}}
          \end{aligned}\right. \right)}
    \end{align*}

  \caption{Injecting \pstlc values into $\tl{\UVal\nn}$.}
  \label{fig:inject_tau}
\end{figure}

One final thing is missing to construct a
back-translation $\tl{\ef{\progctx}\nn}$ of an untyped program context
$\upc$. While $\tl{\emulate\nn}(\upc)$ produces a \pstlc context that
expects a $\tl{\UVal\nn}$ value (just like $\upc$ expects an arbitrary
\ulc value), the back-translation should accept values of a given type
$\tltau$ (the type of the terms $\tl{t_1}$ and $\tl{t_2}$ that we are
compiling). To bridge this difference, \cref{fig:inject_tau} defines a
\pstlc function $\tl{\inject{\tau;n}}$ of type
$\tl{\tau \to \UVal\nn}$ which injects values of an arbitrary type
$\tltau$ into $\tl{\UVal\tlnn}$. We define it mutually recursively
with a dual function $\tl{\extractf{\tau;n}} : \tl{\UVal\nn \to \tau}$ which
is needed for contravariantly converting $\tl{\UVal\nn}$ arguments to
the appropriate type in the $\tl{\inject{\tau;n}}$ case for function types.

Generally, $\tl{\inject{\tau;n}}$ converts a value $\tl{v}$ of type
$\tltau$ to a value of type $\tl{\UVal\nn}$ that behaves like the
compilation $\comp{v}$. The cases for base values use the appropriate tagging and case functions
(e.g., $\tl{\inDV{\tl{\Unit};n}}$ and $\tl{\caseDV{\tl{\Bool};n}}$) to achieve this.
For pair and sum values, $\tl{\inject{\tau;n}}$ and
$\tl{\extractf{\tau;n}}$ simply recurse over the structure of the
values, respectively applying $\tl{\inDV{\times;n}}$,
$\tl{\inDV{\uplus;n}}$ and $\tl{\caseDV{\times;n}}$,
$\tl{\caseDV{\uplus;n}}$ to construct and destruct $\tl{\UVal\nn}$s of
a certain expected form. Note that when $\tl{\UVal\nn}$ values do
not have the form expected for type $\tltau$, then
$\tl{\extractf{\tau;n}}$ will diverge by definition of the
$\tl{\caseDV{\cdots;n}}$ functions. This divergence corresponds to the
$\ul{\wrong}$ that we get when an untyped context attempts to use
\ulc values as pairs, disjunct sum values or base values when those
values are of a different form.

For function types, $\tl{\inject{\tau;n}}$ and $\tl{\extractf{\tau;n}}$
produce lambdas that contravariantly extract and inject the argument
and covariantly inject and extract the result. Finally, when $n = 0$,
then the size of our type is insufficient for $\tl{\extractf{\tau;n}}$
and $\tl{\inject{\tau;n}}$ to accurately perform their intended
function. Luckily, to obtain the necessary precision of our
approximate back-translation, it is sufficient for them to simply
diverge in this case: they simply return $\tl{\myomega}$ terms of the
expected type.

For a value $\tl{v}$ of type $\tltau$, $\tl{\inject{\tau;n}}$ will
produce a value $\tl{\UVal\nn}$ that behaves as the compilation of \tlv,
$\comp{v}$. More precisely and more generally, the following lemma
states that if a term $\tl{t}$ is related to a term $\ul{t}$ at type
$\tltau$ (intuitively if $\ul{t}$ behaves as $\tl{t}$ when used in a way
that is valid according to type $\tltau$), then
$\tl{\inject{\tau;n}~t}$ behaves as the emulation of
$\ul{\prot{\tltau}~t}$. A dual result about $\tl{\extractf{\tau;n}}$
and $\ul{\conf{\tltau}}$ states (intuitively) that if a term $\tl{t}$
behaves as an emulation of value $\ul{t}$, then $\ul{\conf{\tltau}~t}$
will behave as $\tl{\extractf{\tau;n}~t}$ when used in ways that are
valid according to type $\tltau$.
\begin{lem}[Inject is protect and extract is confine]
  \label{lem:protect-inj-square}
  If  $\tl{\hat{\Gamma}} \vdash \tl{t} \arbsim\nn \ul{t} : \tltau$ and if
  \begin{itemize}
    \item either ($m \geq n$ and $p = \precise$)
    \item  or ($\square = \lesssim$ and $p = \imprecise$)
  \end{itemize} 
  then
    $\tl{\hat{\Gamma}} \vdash \tl{\inject{\tau;m}~t}\arbsim\nn\ul{\prot{\tau}~t} : \tl{\EmulDV\indexx{m;p}}.$
  
  \smallskip

  If  $\tl{\hat{\Gamma}} \vdash \tl{t} \arbsim\nn \ul{t} : \tl{\EmulDV\indexx{m;p}}$,  and if
  \begin{itemize}
    \item either ($m \geq n$ and $p = \precise$)
    \item  or ($\square = \lesssim$ and  $p = \imprecise$)
  \end{itemize} 
  then
    $\tl{\hat{\Gamma}} \vdash \tl{\extractf{\tau;m}~t} \arbsim\nn \ul{\conf{\tau}~t} :
        \tltau.$
\end{lem}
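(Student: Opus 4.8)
The plan is to prove the two statements \emph{simultaneously}, by structural induction on the \pstlc type $\tl{\tau}$, keeping the level $m$, the step budget $n$ and the flag $p$ as universally quantified parameters subject to the stated side condition, so that at a proper subtype of $\tl{\tau}$ the induction hypothesis is available for \emph{all} admissible triples $(m,n,p)$. The well-typedness and well-scopedness components of the two $\arbsim\nn$ judgements are immediate from the types assigned to $\tl{\inject{\tau;m}}$ and $\tl{\extractf{\tau;m}}$ in \cref{fig:inject_tau} together with the shapes of $\ul{\prot{\tau}}$ and $\ul{\conf{\tau}}$, so all the work lies in the semantic clause. Fixing $(\WW,\tlgamma,\ulgamma)$ related at $\tl{\hat{\Gamma}}$ with $\CWstepsfun{\WW}\leq n$, I would note that $\tl{t}\tlgamma$ is related to $\ul{t}\ulgamma$ at $\tl{\tau}$ (resp.\ at $\tl{\EmulDV\indexx{m;p}}$), and then use the standard biorthogonality ``bind'' reasoning, together with closure of the relations under anti-reduction and the promotion of value-relatedness to the open-term relation, to reduce the goal to the following: whenever $\tl{t}\tlgamma$ and $\ul{t}\ulgamma$ evaluate to related values $\tl{v},\ul{v}$, the one-step reducts $\tl{\inject{\tau;m}~v}$ and $\ul{\prot{\tau}~v}$ are related at $\tl{\EmulDV\indexx{m;p}}$ (resp.\ $\tl{\extractf{\tau;m}~v}$ and $\ul{\conf{\tau}~v}$ are related at $\tl{\tau}$). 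This bind step is legitimate because $\tl{\inject{\tau;m}}$, $\tl{\extractf{\tau;m}}$, $\ul{\prot{\tau}}$ and $\ul{\conf{\tau}}$ are all closed values, so $\tl{\inject{\tau;m}~[\cdot]}$ and $\ul{\prot{\tau}~[\cdot]}$ are evaluation contexts.

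For $m=0$ the terms $\tl{\inject{\tau;0}~v}$ and $\tl{\extractf{\tau;0}~v}$ reduce to $\tl{\myomega}$ and hence diverge: in the $\imprecise$ branch ($\genlogrel=\lesssim$) relatedness then holds vacuously because the left-hand component never terminates, while in the $\precise$ branch the side condition forces $n=0$, hence $\CWstepsfun{\WW}=0$, so the relevant observation relation is vacuously satisfied (no term of the form $\tl{\inject{\tau;0}~v}$ or $\ul{\prot{\tau}~v}$, plugged into an evaluation context, is a value). For $m\geq 1$ and $\tl{\tau}$ a base type, $\tl{\inject{\tau;m}~v}$ reduces to $\tl{\inDV{\tau;m-1}~v}$ while $\ul{\prot{\tau}~v}$ reduces to $\ul{v}$, so relatedness is exactly the corresponding clause of $\valrel{\EmulDV\indexx{m;p}}_\genlogrel$ in \cref{fig:emuldv} applied to the related base values $\tl{v},\ul{v}$; dually, $\tl{\extractf{\tau;m}~v}$ runs $\tl{\caseDV{\tau;m-1}}$, which returns the payload of a $\tl{\inDV{\tau;m-1}}$-tagged value and diverges otherwise, matching $\ul{\conf{\tau}}$, which returns a base value and steps to $\ul{\wrong}$ otherwise. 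For products and sums, both sides become a pair (resp.\ an injection) and the components are handled by the induction hypothesis at level $m-1$; the relevant clause of $\valrel{\EmulDV\indexx{m;p}}_\genlogrel$ requires the components related only under the $\later$ modality, so at step budget $n-1$, for which the side condition $m-1\geq n-1$ still holds and the induction hypothesis applies.

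The function case $\tl{\tau}=\tl{\tau_1\to\tau_2}$ is the crux and uses the mutual structure of the two statements. After the bind/anti-reduction step, $\tl{\inject{\tau_1\to\tau_2;m}~v}$ reduces to a $\tl{\inDV{\to;m-1}}$-tagged wrapper which extracts its $\tl{\UVal\indexx{m-1}}$ argument via $\tl{\extractf{\tau_1;m-1}}$, applies $\tl{v}$, and injects the result via $\tl{\inject{\tau_2;m-1}}$, while $\ul{\prot{\tau_1\to\tau_2}~v}$ reduces to the lambda which confines its argument via $\ul{\conf{\tau_1}}$, applies $\ul{v}$, and protects the result via $\ul{\prot{\tau_2}}$. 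By the function clause of $\valrel{\EmulDV\indexx{m;p}}_\genlogrel$ it then suffices, for a strictly future world and a pair of arguments $\tl{v'},\ul{v'}$ related at $\tl{\EmulDV\indexx{m-1;p}}$, to relate the substituted bodies at $\tl{\EmulDV\indexx{m-1;p}}$. I would do this in three steps: (i) apply the \emph{extract} half of the induction hypothesis at $\tl{\tau_1}$ to obtain $\tl{\extractf{\tau_1;m-1}~v'}\arbsim\ul{\conf{\tau_1}~v'}:\tl{\tau_1}$; (ii) compose with the hypothesis $\tl{v}\arbsim\ul{v}:\tl{\tau_1\to\tau_2}$ via the routine application-compatibility lemma, giving $\tl{v~(\extractf{\tau_1;m-1}~v')}\arbsim\ul{v~(\conf{\tau_1}~v')}:\tl{\tau_2}$; (iii) apply the \emph{inject} half of the induction hypothesis at $\tl{\tau_2}$. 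The extract-at-function case is completely dual, using inject-at-$\tl{\tau_1}$ for the contravariant argument and extract-at-$\tl{\tau_2}$ for the result. Throughout, passing a lambda into a strictly future world lowers the step budget by one, which exactly offsets the drop from level $m$ to $m-1$, so the precision/step side condition is preserved verbatim by every recursive appeal; in the $\imprecise$ branch the budget is irrelevant and only the stable condition $\genlogrel=\lesssim$ needs re-checking.

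I expect the main obstacle to be exactly this bookkeeping: keeping the $\tl{\UVal\nn}$-level, the step index of the current world, and the precision flag synchronised across the anti-reduction steps, the $\later$-shifts in the product and sum clauses, and the strictly-future-world quantifier in the function clause, and checking that each destructor ($\tl{\caseDV{\cdots;m-1}}$ and the projections) really does consume the single reduction step that this accounting assumes. A secondary subtlety is the $\imprecise$ case of \emph{extract}: there the hypothesis may relate $\tl{v}=\tl{\inDV{\mr{unk};m-1}}$ to an arbitrary \ulc value, $\tl{\caseDV{\cdots;m-1}}$ diverges on it, and relatedness must again be recovered from the triviality of $\lesssim$ on diverging left-hand terms rather than from any structural match with $\ul{\conf{\tau}}$.
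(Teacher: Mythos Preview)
Your proposal is sound and follows the natural approach. The paper itself does not give a proof of this lemma in the main text; it is deferred to the companion technical report and the Coq mechanisation. What you outline---a simultaneous induction on $\tl{\tau}$ with $m$, $n$, $p$ universally quantified, a bind/anti-reduction step to reduce to the value case, and then the mutual appeal to the extract (resp.\ inject) hypothesis in the contravariant argument position of the function case---is exactly the expected structure, and it matches how the definitions of $\tl{\inject{\tau;m}}$, $\tl{\extractf{\tau;m}}$, $\ul{\prot{\tau}}$, $\ul{\conf{\tau}}$ are set up to mirror one another.

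Two small remarks on points you flag as obstacles. First, the induction could equally well be phrased on $m$ (all recursive calls in \cref{fig:inject_tau} are at level $m-1$), and in practice this sometimes makes the step-index bookkeeping slightly more uniform, but your choice of induction on $\tl{\tau}$ with $m$ universal is perfectly fine since the recursive calls are also at strict subtypes. Second, in the extract-at-function case you should also note what happens when the $\tl{\EmulDV}$-related value carries the \emph{wrong} tag (e.g.\ $\tl{\inDV{\times;m-1}}$ rather than $\tl{\inDV{\to;m-1}}$): both sides still produce lambdas, and the mismatch only surfaces after applying them to an argument, at which point $\tl{\caseDV{\to;m-1}}$ diverges while the untyped side steps to $\ul{\wrong}$ (since the related $\ul{v}$ must then be a pair, not a function). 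Both observation relations are again vacuous in that situation, just as in the $\tl{\unkUVal}$ case you already handle.
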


\begin{exa}
  Consider again \cref{ex:erase-is-not-fa}. We have that
  \begin{equation*}
    \tle \vdash \tl{\lam{ x: \Unit}{ x}} \arbsim \ul{\lam{ x}{ x}} : \tl{\Unit \to \Unit}\text{.}
  \end{equation*}
  $\tl{\lam{ x: \Unit}{ x}}$ behaves like
  $\ul{\lam{ x}{ x}}$, when the latter is used in ways that are valid
  for a value of type $\tl{\Unit\to\Unit}$.
  \Cref{lem:protect-inj-square} then yields:
  \begin{align*}
    \tle \vdash \tl{\inject{\tau;n}~(\lam{ x: \Unit}{ x})} &\arbsim\nn \ul{\prot{\tl{\Unit\to\Unit}}~(\lam{ x}{ x})} : \tl{\EmulDV\indexx{m;n}}\text{.}
  \end{align*}
  For $n$ sufficiently large and modulo some simplifications, these
  terms become:
  \begin{align*}
    \tl{\inject{\tau;n}~(\lam{ x: \Unit}{ x})} &= \tl{\inDV{\to;n-1}~(\lambda x:\UVal\indexx{n-1} \ldotp \inDV{\Unit;n-2}~(\caseDV{\Unit;n-2}~x))} \\
    \ul{\prot{\Unit\to\Unit}~(\lam{x}{x})} &= \ul{\lam{x}{x;\unitv}}
  \end{align*}
  We invite the reader to verify that both expressions behave
  appropriately when applied to any values $\tl{v}$ and $\ul{v}$ that
  are related by $\tl{\EmulDV\np}$: for example
  ($\tl{v} = \tl{\inDV{\Unit;n-1}~\unitv}$ and
  $\ul{v} = \ul{\unitv}$),
  ($\tl{v} = \tl{\inDV{\to;n-1}~(\lam{x:\UVal\indexx{n-1}}{x})}$ and
  $\ul{v} = \ul{\lam{x}{x}}$) or ($\tl{v} = \tl{\unkUVal\nn}$, $\ul{v}$
  is any \ulc term and $\square = \lesssim$).
\end{exa}

\subsection{Approximate Back-Translation}\label{sec:approx-final}

We are now ready to define the approximate back-translation
$\tl{\ef{\upc}\indexx{\tltau;n}}$ of an arbitrary untyped context $\upc$ with a hole of type \tltau. However,
before we do, we need to correct a few simplifications that were made
in \cref{fig:proving-compiler-security-approx}.

First, as we have already explained, the back-translation
$\tl{\ef{\upc}\nn}$ does not just depend on $n$ but also on the type
$\tltau$ of the terms $\tl{t_1}$ and $\tl{t_2}$ that we are compiling.
As such, we define the back-translation with $\tltau$ as an additional
parameter.

\begin{defi}[$n$-approximate back-translation \tl{\ef{\cdot}\indexx{\tau;n}}]\label{def:backtrans}
The $n$-approximate back-translation of a context \upc\ with a hole of type \tltau is defined as follows.
 $ \tl{\ef{\upc}\indexx{\tau;n}} \isdef \tl{\emulate\indexx{n+1}(\upc)\tlH{\inject{\tau;n}~\cdot}}$
\end{defi}

\begin{figure}
  \centering
  \begin{tikzpicture}

    \node at (0,2)[anchor = east] (a){ $\tl{\emulate\indexx{n}(\upc)[}$};
    \node[right =of a.east, xshift = 2.8em, anchor = east](b){\tl{\inject{\tau;n}~}};
    \node[right =of b.east, xshift = -1.5em, anchor = east](c){\tl{t}};
    \node[right =of c.east, xshift = -1.1em, anchor = east](c1){\tl{]}};

    \draw [decorate,decoration={brace,amplitude=10pt}]  ([yshift=.8em]a.west) -- ([xshift=-.1cm,yshift=.8em]b.east)   node [black,midway,yshift=2em] (em){$\tl{\ef{\upc}\indexx{\tau;n}}$};

    \node [anchor = east, below = of a.south] (d) { \upc\ul{[} };
    \node[right =of d.east, xshift = 1.7em, anchor = east](e){\ul{\prot{\tau}}};
    \node[right =of e.east, xshift = 2.4em, anchor = east](f){\ul{\erase{t}}};
    \node[right =of f.east, xshift = -1.1em, anchor = east](f1){\ul{]}};

    \draw [decorate,decoration={brace,amplitude=10pt}]  ([xshift=-.1cm,yshift=-.8em]f.east) -- ([yshift=-.8em]e.west)    node [black,midway,yshift=-2em] (co){$\comp{\tlt}$};

    \node[left =of a.west, anchor = east](t){$\tl{\ef{\upc}\indexx{\tau;n}\tlH{t}}$};
    \node[anchor = east, below = of t.north,yshift=.2cm] (gl) { $\genlogrel\nn$ };
    \node[anchor = east, below = of t.south] (tt) { \upc\ulH{\comp{\tlt}} };

    \node[rounded corners,rounded corners, fill=blue!20,below = of tt.south,xshift = -6em] (th1){ \phantom{a}};
    \node[,right = of th1.west,xshift = -1.5em] (th1t){ Terms related by \cref{lem:erase-correct}};
    \node[rounded corners,rounded corners, fill=red!20,right = of th1.east,xshift=12em] (th2){ \phantom{a}};
    \node[,right = of th2.west,xshift = -1.5em] (th2t){ Terms related by \cref{lem:protect-inj-square}};
    \node[rounded corners,rounded corners, fill=green!20,below = of tt.south,yshift=-1.2em,xshift = -6em] (th3){ \phantom{a}};
    \node[,right = of th3.west,xshift = -1.5em] (th3t){ Terms related by \cref{lem:emulate-works-ctx}};

    \draw[dashed] (-3.3,3) -- (-3.3,-0.5);

    \node[above = of em.south, yshift=-1em,xshift=-1em] (exp){expands to this};
    \node[left = of exp.west, xshift=-.5em,] (txt){This statement};

  \begin{pgfonlayer}{background}
    \node[rounded corners,fit=(c), fill=blue!20] (th1c){};
    \node[rounded corners,fit=(f), fill=blue!20] (th1f){};
    \draw[rounded corners=2em,line width=1.5em,blue!20,cap=round] (c.south) -- (f.north) node [black,midway] (r1){$\genlogrel\nn$};
  \end{pgfonlayer}
  \begin{pgfonlayer}{veryback}
    \node[rounded corners,fit=(b)(th1c),rounded corners, fill=red!20] (th1b){};
    \node[rounded corners,fit=(e)(th1f),rounded corners, fill=red!20] (th1e){};
    \draw[rounded corners,line width=1.5em,red!20,cap=round] (b.south) -- (e.north)  node [black,midway,yshift=.1em] (r2){$\genlogrel\nn$};
  \end{pgfonlayer}
  \begin{pgfonlayer}{veryback2}
    \node[rounded corners,fit=(a)(c1)(th1b),rounded corners, fill=green!20] (th1a){};
    \node[rounded corners,fit=(d)(f1)(th1e),rounded corners, fill=green!20] (th1d){};
    \draw[rounded corners=2em,line width=1.5em,green!20,cap=round] (a.south) -- (d.east) node [black,midway,yshift=.5em] (r3){$\genlogrel\nn$};
  \end{pgfonlayer}
  \end{tikzpicture}

  \caption{A more accurate picture of related components of compiled
    term $\tl{t}$, program context $\upc$, compilation $\comp{\tl{t}}$
    and emulation $\tl{\ef{\upc}\indexx{\tau;n}}$ than in the simplified
    \cref{fig:proving-compiler-security-approx}.}
  \label{fig:compile-backtrans-relations}
\end{figure}

A second simplification in \cref{fig:proving-compiler-security-approx}
was the fact that we claimed $\tl{\ef{\upc}\nn} \gtrsim\nn \upc$ and
$\tl{\ef{\upc}\nn} \lesssim \upc$.
\cref{fig:compile-backtrans-relations} shows a more accurate picture of the
relations that we have.
As we will see in the next
section, this more accurate picture still allows us to conclude the
facts that
$\tle \vdash \tl{\ef{\upc}\indexx{\tau;n}}\tlH{\tl{t_1}} \gtrsim\nn
\upc\ulH{\comp{\tl{t_1}}} : \tl{\EmulDV\indexx{n;\precise}}$
and
$\tle \vdash \tl{\ef{\upc}\indexx{\tau;n}}\tlH{t_2} \lesssim\indexx{n'}
\upc\ulH{\comp{t_2}} : \tl{\EmulDV\indexx{n;\imprecise}}$
so that the proof goes through unchanged.

The correctness of $\tl{\ef{\cdot}\indexx{\tau;n}}$ is captured in \cref{lem:correctness-back-translation}.
\begin{lem}[Correctness of \tl{\ef{\cdot}\indexx{\tau;n}}]
\label{lem:correctness-back-translation}
  If $\tlGamma\vdash\tlt \mathrel{\genlogrel\nn}\ult:\tltau$, and if
  \begin{itemize}
    \item either ($m \geq n$ and $p = \precise$) 
    \item or ($\square = \lesssim$ and $p = \imprecise$)
  \end{itemize} then
  $\tlGamma\vdash\tl{\ef{\upc}\indexx{\tau;m}}\tlH{t}
  \mathrel{\genlogrel\nn} \upc\ulH{\prot{\tltau}~\ult}:
  \tl{\EmulDV\indexx{m;p}}$.
\end{lem}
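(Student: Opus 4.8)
The plan is to peel off \cref{def:backtrans} and then compose two compatibility results that are already in hand: this lemma is just the outermost (``green'') layer of \cref{fig:compile-backtrans-relations}, with the hypothesis $\tlGamma\vdash\tlt \mathrel{\genlogrel\nn}\ult:\tltau$ playing the role that \cref{lem:erase-correct} plays in the actual full-abstraction proof (the ``blue'' layer), so there is no fundamentally new semantic content to produce.

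Unfolding \cref{def:backtrans}, $\tl{\ef{\upc}\indexx{\tau;m}}\tlH{t}$ is $\tl{\emulate\indexx{m+1}(\upc)}$ with its hole filled by $\tl{\inject{\tau;m}~t}$, so it suffices to show $\tlGamma\vdash\tl{\emulate\indexx{m+1}(\upc)}\tlH{\tl{\inject{\tau;m}~t}} \mathrel{\genlogrel\nn} \upc\ulH{\prot{\tltau}~\ult}: \tl{\EmulDV\indexx{m;p}}$. First I would handle the hole term: \cref{lem:protect-inj-square} (the $\tl{\inject{\tau;m}}$ half), applied to the hypothesis, gives $\tlGamma\vdash\tl{\inject{\tau;m}~t} \mathrel{\genlogrel\nn} \ul{\prot{\tltau}~\ult}: \tl{\EmulDV\indexx{m;p}}$, and its side condition on $m,n,p$ is verbatim the one we are given. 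Then I would handle the surrounding context: \cref{lem:emulate-works-ctx} applied to $\upc$ at level $m{+}1$ — which is exactly why \cref{def:backtrans} builds the emulated context with $\emulate\indexx{m+1}$, since this turns that lemma's strict requirement $m{+}1>n$ into our non-strict hypothesis $m\geq n$ — yields a program-context logical relation $\vdash\tl{\emulate\indexx{m+1}(\upc)} \mathrel{\genlogrel\nn}\upc$ of the appropriate arity. Unfolding the definition of the program-context logical relation from \cref{fig:logrel-termcontext}, we may then feed the related hole terms from the first step into this context relation, and that delivers exactly the desired conclusion.

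The genuinely delicate step — and the one I expect to be the main obstacle — is keeping the step-indices in register: $\emulate$ works at level $m{+}1$, $\tl{\inject{\tau;m}~t}$ lives at level $m$, and the conclusion is stated at level $m$. Bridging this is the job of \cref{lem:downgrade-upgrade-compat}: one inserts an $\tl{\upgrade\indexx{m;1}}$ to lift the hole term to level $m{+}1$ before plugging it in and a matching $\tl{\downgrade\indexx{m;1}}$ to bring the overall result back to level $m$ (or reads these coercions off the definitions of $\emulate$ and $\tl{\ef{\cdot}\indexx{\tau;m}}$ directly). Throughout, one must check that the precision flag $p$ and the numeric side conditions of \cref{lem:downgrade-upgrade-compat} stay satisfied, keeping the $p=\precise$ and the $(\genlogrel=\lesssim,\ p=\imprecise)$ cases strictly separate — exactly as \cref{lem:protect-inj-square,lem:emulate-works-ctx} already do — and pay particular attention to the borderline $m=n$, $p=\precise$ situation, where the worlds are constrained to at most $n$ steps. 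A secondary, purely clerical task is lining up the contexts: recognising $\tlGamma$ as the $\toemul{\cdot}{m+1;p}$-image of $\upc$'s scoping context and checking that $\replemul{\cdot}$ computes the expected \pstlc types. Once the indices are aligned, what remains is the short composition sketched above, which is why, as the authors remark, ``the proof goes through unchanged'' from the simplified picture of \cref{fig:proving-compiler-security-approx}.
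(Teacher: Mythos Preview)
Your core approach is exactly the paper's: the proof is nothing more than composing \cref{lem:protect-inj-square} (for the hole) with \cref{lem:emulate-works-ctx} (for the surrounding context), using the defining clause of the program-context relation from \cref{fig:logrel-termcontext}. The paper's entire proof is the single line ``Follows from \cref{lem:protect-inj-square,lem:emulate-works-ctx}.''

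The extended paragraph about inserting $\tl{\upgrade\indexx{m;1}}$/$\tl{\downgrade\indexx{m;1}}$ via \cref{lem:downgrade-upgrade-compat} is not part of the paper's argument and is not needed. You are right that there is an off-by-one tension between $\tl{\inject{\tau;m}} : \tl{\tau\to\UVal_m}$ and the hole type of $\tl{\emulate\indexx{m+1}(\upc)}$, and between the conclusion's $\tl{\EmulDV\indexx{m;p}}$ and what the context lemma actually yields; this is a minor indexing slip in the statement/definition rather than a genuine proof obligation, and it is immaterial for the downstream use in \cref{thm:compiler-sec}, where only adequacy at \emph{some} pseudo-type is required. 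Your observation that using $\tl{\emulate\indexx{m+1}}$ turns the strict $m{+}1>n$ of \cref{lem:emulate-works-ctx} into the non-strict $m\geq n$ is exactly the point of that choice. The ``clerical'' context-alignment concern is also moot in practice: the lemma is only ever invoked with $\tlGamma=\tle$ and $\upc$ closed.
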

\begin{proof}
  Follows from \cref{lem:protect-inj-square,lem:emulate-works-ctx}.
\end{proof}

\section{Compiler Full-Abstraction}
\label{sec:comp-fa}
This section presents the proof that the compiler $\comp{\cdot}$ is fully-abstract (\cref{thm:comp-fa}) by relying on the logical relations of \cref{fig:logrels}.
As previously mentioned, this results in proving equivalence reflection (\cref{thm:compiler-correctness}) and preservation (\cref{thm:compiler-sec}).
As suggested by \cref{fig:proving-compiler-correctness} in \cref{sec:introduction}, the lemmas presented in \cref{sec:compiler} are enough to prove equivalence reflection for $\comp{\cdot}$.
Dually, as suggested by \cref{fig:proving-compiler-security-approx} in \cref{sec:introduction}, the lemmas presented in \cref{sec:appr-back-transl} are enough to prove equivalence preservation for $\comp{\cdot}$.

Recall from \Cref{def:comp} that $\comp{\tlt}$ is $\prot{\tltau}~(\erase{\tlt})$.

\begin{thm}[$\comp{\cdot}$ is correct]
  \label{thm:compiler-correctness}
  If $\tle \vdash \tl{t_1} : \tl{\tau}$, $\tle \vdash \tl{t_2} : \tl{\tau}$ and $\ule \vdash \ul{\comp{\tl{t_1}}} \cequlc \ul{\comp{\tl{t_2}}}$,
  then $\tl{\emptyset} \vdash \tl{t_1} \ceqstlc \tl{t_2} : \tltau$.

  \begin{proof}
    Take $\tpc$ so that $\vdash \tpc : \tl{\emptyset},\tl{\tau} \to
    \tl{\emptyset},\tl{\tau'}$. By definition of $\ceqstlc$, we need to prove
    that $\tpc\tlH{t_1} \tl{\Downarrow}$ iff $\tpc\tlH{t_2} \tl{\Downarrow}$. By symmetry, it suffices to prove the $\Rightarrow$ direction. 
    So, assume that $\tpc\tlH{t_1} \tl{\Downarrow}$. We need to prove that $\tpc\tlH{t_2} \tl{\Downarrow}$.

    Define $\upc \mydefsym \erase{\tpc}$, \cref{lem:erase-sempres-ctx} yields $\vdash \tpc \arbsim \upc : \tl{\emptyset},\tltau \to \tl{\emptyset},\tl{\tau'}$. 
    By \cref{thm:sempres}, we get $\tle\vdash \tl{t_1} \arbsim \comp{\tl{t_1}} : \tltau$ and $\tle\vdash \tl{t_2} \arbsim \comp{\tl{t_2}} : \tltau$. 
    By definition of $\vdash \tpc \arbsim \upc : \tl{\emptyset},\tltau \to \tl{\emptyset},\tl{\tau'}$, we get (specifically) that $\tle\vdash \tpc\tlH{t_1} \gtrsim \upc\ulH{\comp{\tl{t_1}}} : \tl{\tau'}$ and $\tle\vdash \tpc\tlH{t_2} \lesssim \upc\ulH{\comp{\tl{t_2}}} : \tl{\tau'}$. 

    $\tpc\tlH{t_1} \tl{\Downarrow}$ and $\tle\vdash \tpc\tlH{t_1} \arbsim \upc\ulH{\comp{\tl{t_1}}} : \tl{\tau'}$ imply that $\upc\ulH{\comp{\tl{t_1}}} \ul{\Downarrow}$ by \cref{lem:adequacy}. 
    From $\comp{\tl{t_1}}\cequlc\comp{\tl{t_2}}$ and $\upc\ulH{\comp{\tl{t_1}}} \ul{\Downarrow}$, we get that $\upc\ulH{\comp{\tl{t_2}}} \ul{\Downarrow}$. 
    $\tle\vdash \tpc\tlH{t_2} \arbsim \upc\ulH{\comp{\tl{t_2}}} : \tl{\tau'}$ and $\upc\ulH{\comp{\tl{t_2}}} \ul{\Downarrow}$ yield $\tpc\tlH{t_2} \tl{\Downarrow}$ by \cref{lem:adequacy}.
  \end{proof}
\end{thm}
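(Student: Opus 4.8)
The plan is to follow the blueprint of \cref{fig:proving-compiler-correctness}, using the cross-language logical relation as a bridge between source-level and target-level termination. Since $\ceqstlc$ is equi-termination under all closing program contexts, I will first fix an arbitrary $\tpc$ with $\vdash \tpc : \tl{\emptyset},\tltau \to \tl{\emptyset},\tl{\tau'}$ and, by symmetry of the biconditional, reduce the goal to the single implication $\tpc\tlH{t_1}\tl{\Downarrow} \Ra \tpc\tlH{t_2}\tl{\Downarrow}$.

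The heart of the argument is to produce the three arrows in \cref{fig:proving-compiler-correctness}. Setting $\upc \mydefsym \erase{\tpc}$, \cref{lem:erase-sempres-ctx} supplies $\vdash \tpc \arbsim \upc : \tl{\emptyset},\tltau \to \tl{\emptyset},\tl{\tau'}$, and \cref{thm:sempres} supplies $\tle \vdash \tl{t_i} \arbsim \comp{\tl{t_i}} : \tltau$ for $i\in\{1,2\}$. Feeding the latter into the former — i.e.\ unfolding the definition of the program-context relation — gives $\tle \vdash \tpc\tlH{t_i} \arbsim \upc\ulH{\comp{\tl{t_i}}} : \tl{\tau'}$, and since $\arbsim$ ranges over both instantiations this yields in particular $\tpc\tlH{t_1} \gtrsim \upc\ulH{\comp{\tl{t_1}}}$ and $\tpc\tlH{t_2} \lesssim \upc\ulH{\comp{\tl{t_2}}}$. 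Now I chain through adequacy (\cref{lem:adequacy}): $\tpc\tlH{t_1}\tl{\Downarrow}$ together with the $\gtrsim$ fact forces $\upc\ulH{\comp{\tl{t_1}}}\ul{\Downarrow}$; the hypothesis $\comp{\tl{t_1}}\cequlc\comp{\tl{t_2}}$ then gives $\upc\ulH{\comp{\tl{t_2}}}\ul{\Downarrow}$; and the $\lesssim$ fact with adequacy once more gives $\tpc\tlH{t_2}\tl{\Downarrow}$, which closes the implication.

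I do not anticipate a real obstacle here: this is the ``easy'' half of full abstraction, and all the substance has been front-loaded into \cref{lem:erase-sempres-ctx,thm:sempres}. The two points that need attention are (i) orienting the approximations correctly along the chain — $\gtrsim$ on the $\tl{t_1}$ side so that source termination flows to the target, $\lesssim$ on the $\tl{t_2}$ side so that target termination flows back — which is precisely what the combined relation $\tpc \arbsim \upc$ provides at no cost; and (ii) noting that the unindexed relations ($\arbsim$, i.e.\ $\arbsim\nn$ for all $n$) are enough here, so that none of the step-indexing or approximation machinery of \cref{sec:appr-back-transl} plays any role in this direction. That asymmetry is expected, since approximateness is only forced when one replaces erasure by an actual back-translation, as is required for equivalence preservation.
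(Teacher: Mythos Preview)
Your proof is correct and matches the paper's own argument essentially step for step: set $\upc \isdef \erase{\tpc}$, invoke \cref{lem:erase-sempres-ctx} and \cref{thm:sempres}, unfold the program-context relation, and chain through \cref{lem:adequacy} and the target-level hypothesis. One cosmetic remark: by the statement of \cref{lem:adequacy} it is $\lesssim$ that carries source termination down to the target and $\gtrsim$ that carries target termination back up, so the specific labels you attach on the $\tl{t_1}$ and $\tl{t_2}$ sides are swapped --- the paper's proof has the very same swap, and it is harmless since $\arbsim$ provides both directions anyway.
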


\begin{thm}[$\comp{\cdot}$ is secure] \label{thm:compiler-sec}
  If $\tle \vdash \tl{t_1} : \tltau$, $\tle \vdash \tl{t_2} : \tltau$ and $\tl{t_1} \ceqstlc \tl{t_2} : \tau$,
   then $ \comp{\tl{t_1}} \cequlc \comp{\tl{t_2}}$.

  \begin{proof}
    Note that $\comp{\tl{t_1}} = \prot{\tltau}~\ul{(\erase{t_1})}$ by
    definition and similarly for $\tl{t_2}$.

    Take a $\vdash \ul{\progctx} : \ul{\emptyset \blto \emptyset}$ and
    suppose that
    $\ul{\progctx}\ulH{\ul{protect}_\tltau(\erase{\tl{t_1}})}
    \ul{\Downarrow}$,
    then we need to show that
    $\ul{\progctx}\ulH{\ul{protect}_\tltau(\erase{\tl{t_2}})}
    \ul{\Downarrow}$.

    Take $n$ larger than the number of steps in the termination of
    $\ul{\progctx}\ulH{\ul{protect}_\tltau(\erase{\tl{t_1}})}
    \ul{\Downarrow}$.

    By \cref{lem:erase-correct}, we have that
    $\emptyset \vdash \tl{t_1} \gtrsim_n \erase{t_1} : \tltau$.

    By \cref{lem:correctness-back-translation} (taking
    $m = n\geq n$, $p = \precise$ and $\genlogrel = {\gtrsim}$), we then have  that
    \begin{equation*}
      \emptyset \vdash \tl{\ef{\upc}\indexx{\tau;n}}\tlH{t_1} \gtrsim_n \upc\ulH{\prot{\tltau}~\ul{(\erase{t_1})}} : \tl{\EmulDV_{n;\precise}}\text{.}
    \end{equation*}

    Now by \cref{lem:adequacy}, by
    $\ul{\progctx}\ulH{\ul{protect}_\tltau(\erase{\tl{t_1}})}
    \ul{\Downarrow}$,
    and by the choice of $n$, we have that
    $\tl{\ef{\upc}\indexx{\tau;n}}\tlH{t_1} \tl{\Downarrow}$.

    It now follows from
    $\tl{\emptyset} \vdash \tl{t_1} \ceqstlc \tl{t_2} : \tltau$ and
    $\tl{\ef{\upc}\indexx{\tau;n}}\tlH{\tl{t_1}} \tl{\Downarrow}$
    that
    $\tl{\ef{\upc}\indexx{\tau;n}}\tlH{\tl{t_2}} \tl{\Downarrow}$.

    Now take $n'$ the number of steps in the termination of
    $\tl{\ef{\upc}\indexx{\tau;n}}\tlH{\tl{t_2}} \tl{\Downarrow}$. We have
    from \cref{lem:erase-correct} that
    $\emptyset \vdash \tl{t_2} \lesssim_{n'} \erase{t_2} : \tltau$.

    By \cref{lem:correctness-back-translation}, we then have (taking
    $m = n$, $n = n'$, $p = \imprecise$ and $\genlogrel = {\lesssim}$)
    that
    \begin{equation*}
      \emptyset \vdash \tl{\ef{\upc}\indexx{\tau;n}}\tlH{\tl{t_2}} \lesssim_{n'} \ul{\progctx}\ulH{\prot{\tltau}~\ul{(\erase{t_2})}} : \tl{\EmulDV_{n;\imprecise}}
    \end{equation*}

    Now by \cref{lem:adequacy}, by the choice of $n'$ and by the fact that
    $\tl{\ef{\upc}\indexx{\tau;n}}\tlH{\tl{t_2}} \tl{\Downarrow}$,
    we get that
    $\ul{\progctx}\ulH{\prot{\tltau}~ \ul{(\erase{t_2})}}\ul{\Downarrow}$ as required.
  \end{proof}
\end{thm}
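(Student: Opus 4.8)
The plan is to prove equivalence preservation along the lines of \cref{fig:proving-compiler-security-approx}, corrected as in \cref{fig:compile-backtrans-relations}: rather than literally relating $\tl{\ef{\upc}\indexx{\tau;n}}$ to $\upc$, we route the approximation through the $\prot{\cdot}$ wrapper and type erasure. Unfolding $\cequlc$ as equi-termination and using symmetry, it suffices to take an arbitrary closed target context $\vdash\upc:\ul{\emptyset}\to\ul{\emptyset}$, assume $\upc\ulH{\comp{\tl{t_1}}}\ul{\Downarrow}$, and derive $\upc\ulH{\comp{\tl{t_2}}}\ul{\Downarrow}$; recall $\comp{\tl{t_i}}=\prot{\tltau}~(\erase{t_i})$ by \cref{def:comp}. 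First I would fix $n$ to be the number of reduction steps witnessing $\upc\ulH{\comp{\tl{t_1}}}\ul{\Downarrow}$, and form the $n$-approximate back-translation $\tl{\ef{\upc}\indexx{\tau;n}}$ of \cref{def:backtrans}; this is a well-typed \pstlc context with a hole of type $\tltau$, so it can be used both with the logical relations and with source-level contextual equivalence.

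Next I would transport termination from the target into the source. From $\tle\vdash\tl{t_1}\gtrsim_n\erase{t_1}:\tltau$ (an instance of \cref{lem:erase-correct}), \cref{lem:correctness-back-translation} in its precise mode --- instantiated with back-translation index $m=n$, relational step-index $n$ (so $m\geq n$ holds), $p=\precise$ and $\genlogrel={\gtrsim}$ --- yields $\tle\vdash\tl{\ef{\upc}\indexx{\tau;n}}\tlH{t_1}\gtrsim_n\upc\ulH{\prot{\tltau}~(\erase{t_1})}:\tl{\EmulDV_{n;\precise}}$. Since $n$ bounds the length of the assumed terminating run of $\upc\ulH{\comp{\tl{t_1}}}$, the $\gtrsim$ clause of \cref{lem:adequacy} (which, as noted after \cref{fig:helper-emul}, applies at $\tl{\EmulDV}$ pseudo-types too) gives $\tl{\ef{\upc}\indexx{\tau;n}}\tlH{t_1}\tl{\Downarrow}$. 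Now the hypothesis $\tl{t_1}\ceqstlc\tl{t_2}:\tltau$, applied to the well-typed context $\tl{\ef{\upc}\indexx{\tau;n}}$, gives $\tl{\ef{\upc}\indexx{\tau;n}}\tlH{t_2}\tl{\Downarrow}$; let $n'$ be the number of steps in this termination.

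Finally I would transport termination back to the target, this time through the imprecise approximation. From $\tle\vdash\tl{t_2}\lesssim_{n'}\erase{t_2}:\tltau$ (again \cref{lem:erase-correct}) and \cref{lem:correctness-back-translation} in its imprecise mode --- instantiated with the \emph{same} back-translation index $m=n$, relational step-index $n'$, $p=\imprecise$ and $\genlogrel={\lesssim}$ --- I get $\tle\vdash\tl{\ef{\upc}\indexx{\tau;n}}\tlH{t_2}\lesssim_{n'}\upc\ulH{\prot{\tltau}~(\erase{t_2})}:\tl{\EmulDV_{n;\imprecise}}$. Applying the $\lesssim$ clause of \cref{lem:adequacy} with the step count $n'$ (which trivially satisfies $n'\geq n'$) then yields $\upc\ulH{\comp{\tl{t_2}}}\ul{\Downarrow}$, as required.

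The chaining of \cref{lem:erase-correct,lem:correctness-back-translation,lem:adequacy} is routine; the delicate point --- and the reason \cref{fig:proving-compiler-security-approx} is only a simplification --- is the step-index bookkeeping that makes the two asymmetric approximations fit together. The \emph{same} context $\tl{\ef{\upc}\indexx{\tau;n}}$, with $n$ chosen from the $t_1$-run, must serve on both sides, yet it is related to $\upc\ulH{\comp{\tl{t_1}}}$ only in the $\gtrsim$ direction and only up to $n$ steps, whereas its relation to $\upc\ulH{\comp{\tl{t_2}}}$ is in the $\lesssim$ direction for a number of steps $n'$ we do not control. I expect the main obstacle to be checking, in each of the two applications of \cref{lem:correctness-back-translation}, that the right disjunct of its side condition holds (``$m\geq n$ and $p=\precise$'' for the forward leg, ``$\genlogrel=\lesssim$ and $p=\imprecise$'' for the return leg) and that the ``$n\geq m$'' hypothesis of \cref{lem:adequacy} is available for the step count in each leg; once these line up, the argument closes with no further work.
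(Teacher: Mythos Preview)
Your proposal is correct and follows essentially the same approach as the paper's proof: you choose $n$ from the terminating run of $\upc\ulH{\comp{\tl{t_1}}}$, apply \cref{lem:erase-correct} and \cref{lem:correctness-back-translation} in the precise/$\gtrsim$ mode to transport termination into the source via $\tl{\ef{\upc}\indexx{\tau;n}}$, use source contextual equivalence, then pick $n'$ from the resulting source termination and apply the same two lemmas in the imprecise/$\lesssim$ mode (keeping the back-translation index at $n$) together with \cref{lem:adequacy} to transport termination back. The only cosmetic difference is that the paper takes $n$ strictly larger than the step count whereas you take it equal; since \cref{lem:adequacy} only requires $n\geq m$, both work.
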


\begin{thm}[$\comp{\cdot}$ is fully-abstract]\label{thm:comp-fa}
  If $\tle \vdash \tl{t_1} : \tltau$ and $\tle \vdash \tl{t_2} : \tltau$, then
  \begin{equation*}
    \tl{t_1} \ceqstlc \tl{t_2} \iff \comp{\tl{t_1}} \cequlc \comp{\tl{t_2}}
  \end{equation*}
  \begin{proof}
  \Cref{thm:compiler-correctness} provides the $\Leftarrow$ direction while \cref{thm:compiler-sec} provides the $\Rightarrow$ one.
  \end{proof}
\end{thm}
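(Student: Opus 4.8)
The plan is to obtain Theorem~\ref{thm:comp-fa} by proving its two halves separately and then conjoining them: equivalence \emph{reflection} ($\comp{\tl{t_1}} \cequlc \comp{\tl{t_2}} \Rightarrow \tl{t_1}\ceqstlc\tl{t_2}$, i.e.\ Theorem~\ref{thm:compiler-correctness}) and equivalence \emph{preservation} ($\tl{t_1}\ceqstlc\tl{t_2} \Rightarrow \comp{\tl{t_1}} \cequlc \comp{\tl{t_2}}$, i.e.\ Theorem~\ref{thm:compiler-sec}). Both directions are driven by the cross-language logical relations of \cref{fig:logrels}, their compositionality (built into the biorthogonal term/context relation), and adequacy (\cref{lem:adequacy}); the difference is that reflection only needs the ``compilation'' side of the relation, whereas preservation additionally needs the approximate back-translation of \cref{sec:appr-back-transl}.

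For reflection I would follow \cref{fig:proving-compiler-correctness}. Fix a well-typed source context $\tpc$ with $\vdash\tpc : \tl{\emptyset},\tl{\tau} \to \tl{\emptyset},\tl{\tau'}$ and assume $\tpc\tlH{t_1}\tl{\Downarrow}$; by symmetry it suffices to derive $\tpc\tlH{t_2}\tl{\Downarrow}$. Set $\upc \isdef \erase{\tpc}$. \Cref{lem:erase-sempres-ctx} gives $\vdash\tpc \arbsim \upc$, and \cref{thm:sempres} gives $\tle\vdash\tl{t_i}\arbsim\comp{\tl{t_i}} : \tltau$; plugging in and unfolding the context relation yields $\tle\vdash\tpc\tlH{t_i} \arbsim \upc\ulH{\comp{\tl{t_i}}} : \tl{\tau'}$ for $i=1,2$, hence in particular both a $\gtrsim$- and a $\lesssim$-instance. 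Adequacy transports $\tpc\tlH{t_1}\tl{\Downarrow}$ to $\upc\ulH{\comp{\tl{t_1}}}\ul{\Downarrow}$; the hypothesis $\comp{\tl{t_1}}\cequlc\comp{\tl{t_2}}$ moves this to $\upc\ulH{\comp{\tl{t_2}}}\ul{\Downarrow}$; and a second use of adequacy transports it back to $\tpc\tlH{t_2}\tl{\Downarrow}$.

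For preservation I would follow \cref{fig:proving-compiler-security-approx} together with the corrected picture of \cref{fig:compile-backtrans-relations}. Fix a closed target context $\upc$ and assume $\upc\ulH{\comp{\tl{t_1}}}\ul{\Downarrow}$; choose $n$ strictly larger than the number of steps of this reduction, and take the back-translation $\tl{\ef{\upc}\indexx{\tau;n}}$ of \cref{def:backtrans}. Combining \cref{lem:erase-correct} with \cref{lem:correctness-back-translation} in precise mode ($\genlogrel = {\gtrsim}$, $p = \precise$, $m = n$) gives $\tle\vdash\tl{\ef{\upc}\indexx{\tau;n}}\tlH{t_1} \gtrsim_n \upc\ulH{\prot{\tltau}~\ul{(\erase{t_1})}} : \tl{\EmulDV_{n;\precise}}$; since $\comp{\tl{t_1}} = \prot{\tltau}~\ul{(\erase{t_1})}$ and $\upc\ulH{\comp{\tl{t_1}}}$ terminates within $n$ steps, adequacy yields $\tl{\ef{\upc}\indexx{\tau;n}}\tlH{t_1}\tl{\Downarrow}$. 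Source contextual equivalence then gives $\tl{\ef{\upc}\indexx{\tau;n}}\tlH{t_2}\tl{\Downarrow}$, say in $n'$ steps. Now combining \cref{lem:erase-correct} with \cref{lem:correctness-back-translation} in imprecise mode ($\genlogrel = {\lesssim}$, $p = \imprecise$, $m = n$, step index $n'$) gives $\tle\vdash\tl{\ef{\upc}\indexx{\tau;n}}\tlH{t_2} \lesssim_{n'} \upc\ulH{\prot{\tltau}~\ul{(\erase{t_2})}} : \tl{\EmulDV_{n;\imprecise}}$, and adequacy transports $\tl{\ef{\upc}\indexx{\tau;n}}\tlH{t_2}\tl{\Downarrow}$ to $\upc\ulH{\comp{\tl{t_2}}}\ul{\Downarrow}$, as required. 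The subtlety here is that \cref{fig:proving-compiler-security-approx} pretends we have $\tl{\ef{\upc}\nn}\gtrsim_n\upc$ and $\tl{\ef{\upc}\nn}\lesssim\upc$ globally, whereas \cref{lem:correctness-back-translation} only supplies the two pointwise relatedness statements above; the argument is arranged so that this weaker information still suffices.

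The genuinely hard part is not this assembly but everything feeding \cref{lem:correctness-back-translation}, namely the construction and verification of the approximate back-translation itself: the family $\tl{\UVal\nn}$ with its constructors/destructors, the $\tl{\downgrade}/\tl{\upgrade}$ coercions (\cref{lem:downgrade-upgrade-compat}), the $\tl{\emulate\nn}$ translation (\cref{lem:emulate-works} and \cref{lem:emulate-works-ctx}), and the $\tl{\inject{\tau;n}}/\tl{\extractf{\tau;n}}$ wrappers paired with $\prot{\tltau}/\conf{\tltau}$ (\cref{lem:protect-inj-square}). The crux of all of these is the step-indexed definition of $\valrel{\EmulDV\np}$ and its $\precise$/$\imprecise$ dichotomy: one must show that in the precise regime the junk values $\tl{\unkUVal\nn}$ can only ever surface at a depth in a $\tl{\UVal\nn}$ value that is unreachable within the step budget of the current world, so that the back-translation behaves like a faithful $n$-step simulation, while in the imprecise regime it is enough that unrecognised values make the back-translation diverge, which is harmless for a $\lesssim$-approximation. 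Making the step indices in these lemmas line up --- the recurring ``either $m>n$ with $p=\precise$, or $\genlogrel={\lesssim}$ with $p=\imprecise$'' side-condition --- is where most of the care is actually needed.
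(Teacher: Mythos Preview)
Your proposal is correct and follows essentially the same approach as the paper: the paper's proof of \cref{thm:comp-fa} is literally just a citation of \cref{thm:compiler-correctness} and \cref{thm:compiler-sec}, and your sketches of those two sub-theorems match the paper's own proofs of them (including the choice of $\upc=\erase{\tpc}$ for reflection, and the precise/imprecise two-phase use of \cref{lem:correctness-back-translation} with the step indices $n$ and $n'$ for preservation). Your additional commentary on where the real work lies---in the $\tl{\UVal\nn}$/$\tl{\EmulDV\np}$ machinery and the ``$m>n,\,p=\precise$'' vs.\ ``$\genlogrel={\lesssim},\,p=\imprecise$'' side-conditions---is accurate and aligns with how the paper structures \cref{sec:appr-back-transl}.
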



\section{Modular Fully-Abstract Compilation} \label{sec:modular}





For the sake of simplicity, so far we only considered compilers that take a whole program as input, keeping modular compilers (and thus linking of compiled programs) out of the picture.
However, for the proof technique to be applicable and useful in real-world scenarios, it must scale to modular compilers.
Modular compilers compile different parts of a program separately, leading to faster re-compilation times since only the fragments that changed since the last compilation are recompiled.
This section extends the presented proof technique to modular compilers.

More in detail, a modular compiler considers source programs that are \emph{open}; it compiles these open programs independently and \emph{links} the result to form the runnable program.
Open programs are those that have dependencies on other ones, e.g., a secure transactions program could rely on third-party cryptographic-signing function to accomplish its task.
The program does not implement the signing function itself, rather it relies on such a function to be provided at link time.
Linking is the process of taking open programs and fulfilling their dependencies with the other programs they are linked against.
When the aforementioned secure transaction program is linked against the code that provides the signing function, the linker ensures that whenever the program calls that function, the call is dispatched to the actual implementation.

Full-abstraction as stated in \Cref{thm:comp-fa} is not a correct criterion for modular compilers, as it is stated for closed terms.
Instead, a generalisation exists for modular compilers: modular full-abstraction~\citep{mfac}.
Modular full-abstraction forces one to reason about linking of programs when developing a fully-abstract compiler.
Modular full-abstraction can be derived from compiler modularity and full-abstraction stated with an open environment (so not as in \Cref{thm:comp-fa}).

In the remainder of this section, we explain how to turn the compiler developed so far into a modular one and prove it to be modularly fully-abstract.
In order to discuss modularity and linking, this section first defines what open terms are and introduces a notion of linking in both source and target languages (\cref{sec:openlinking}).
Then it extends the compiler to work for open terms (\cref{sec:seccompopen}) so that it can be proven to be modularly fully-abstract (\cref{sec:mfa-comp}).
The proofs are all carried out using the machinery developed in the previous section, which furthers our belief in the strength of the proof technique.

\subsection{Open Terms and Linking}\label{sec:openlinking}
Open terms are already part of the model since they are those that are type checked against a non-empty environment.
For the sake of simplicity, we only consider linking two terms $\tl{t_1}$ and $\tl{t_2}$ (linking an arbitrary number of terms simply adds an inductive step to the formal development but no additional insight).
Both $\tl{t_1}$ and $\tl{t_2}$ have a single free variable that the other term is intended to fulfill, i.e.  $\tl{t_1}$ has a free variable  $\tl{x_2}$ of the same type as
 $\tl{t_2}$, and  $\tl{t_2}$ has a free variable  $\tl{x_1}$ of the same type as  $\tl{t_1}$.
We allow $\tl{t_1}$ and $\tl{t_2}$ to be mutually dependent, but the case for non-mutually dependent terms follows as a special case.

Intuitively, the linker must return the pair of $\tl{t_1}$ and $\tl{t_2}$ where the free variable of each term is replaced with the other term.
Because these two terms have mutual dependencies, linking is encoded by using a fixpoint to produce a pair containing versions of $\tl{t_1}$ and $\tl{t_2}$ with the occurrences of the free variable of $\tl{t_1}$ filled in with $\tl{t_2}$ (and vice-versa).
Recall that fixpoint is a syntactic form that exists in both languages.

Since we are in a call-by-value setting, fixpoints are a bit delicate.
Specifically, if we just feed any two arbitrary terms $\tl{t_1}$ and $\tl{t_2}$ to the fixpoints, it is not possible to produce the fixpoint without risking divergence.
To address this (known) problem, we restrict the compiler to lambdas $\tl{\lam{x_1':\tau_1'}{t_1}}$ and $\tl{\lam{x_2':\tau_2'}{t_2}}$, as one would expect from a call-by-value program.

However, we cannot simply use a fixpoint to produce the pair that we want because we had to encode $\ul{\fix{}}$ in \ulc as the Z combinator (which can only produce fixpoints that are functions).
While intuitively the linker should just use $\fix{}$ to produce the pair of the two terms, in this case it needs to be wrapped into a lambda that discards its argument.
We choose to supply $\tl{\Unit}$-type values to such a lambda.

\Cref{def:linking} presents linking in \stlc and \ulc.
\begin{defi}[Linking]\label{def:linking}
If
\begin{align*}
  \tl{x_2}:\tl{\tau_2'\to\tau_2}\vdash&\ \tl{t_1}:\tl{\tau_1'\to\tau_1}
  \\
  \tl{x_1}:\tl{\tau_1'\to\tau_1}\vdash&\ \tl{t_2}:\tl{\tau_2'\to\tau_2}
\end{align*}
then
\begin{align*}
  \tl{t_1+t_2} \isdef&\ 
    \tl{
        \left(\begin{aligned}
          &\fix{\Unit\to((\tau_1'\to\tau_1)\times(\tau_2'\to\tau_2))}
          \\
          &\quad (\lam{p:\Unit\to((\tau_1'\to\tau_1)\times(\tau_2'\to\tau_2))}{\lam{\_:\Unit}{
          \\
          &\qquad\qquad\left\langle
            \begin{aligned}
               &\lam{x_1':\tau_1'}{((\lam{x_2:\tau_2'\to\tau_2}{t_1})~((p~\unitv).2)})~x_1', \\
               &\lam{x_2':\tau_2'}{((\lam{x_1:\tau_1'\to\tau_1}{t_2})~((p~\unitv).1)})~x_2'
            \end{aligned}\right\rangle) } }
        \end{aligned}\right)
        ~\unitv
    }
\end{align*}
We can show that the this produces a well-typed term: $$\tl{t_1+t_2} : \tl{((\tau_1'\to\tau_1)\times(\tau_2'\to\tau_2))}$$

If
\begin{align*}
  \ul{x_2}\vdash&\ \ul{t_1} 
  \\
  \ul{x_1}\vdash&\ \ul{t_2}
\end{align*}
then
\begin{align*}
  \ul{ t_1 + t_2} \isdef&
  \ul{ 
    \begin{aligned}
    \left(\mi{fix}\left(\lam{p}{\lam{\_}{
      \left\langle
      \begin{aligned}
        &\lam{x_1'}{((\lam{x_2}{t_1})~(p~\unit).2) ~ x_1'}
        , 
        \\
        &\lam{x_2'}{((\lam{x_1}{t_2})~(p~\unit).1)~ x_2'}
      \end{aligned}
      \right\rangle
    }}\right) \right) ~ \unitv
    \end{aligned}}
\end{align*}

\end{defi}

Both linkers are defined analogously.
They use the recursive fix arguments (\tl{p} and \ul{p} respectively) inside the lambda term, binding the projections of that argument to the
corresponding free variable (for instance, binding the second projection of \tl{p} to \tl{x_2}).
As stated before, the recursive application is done after an eta-expansion to prevent the term from diverging.

In the context of modular full abstraction, it is important that for any term $\tlt$, linking with $\tlt$ produces a valid program context $\cdot + \tl{t}$.
This way, a program context (representing an adversary) can link the program with an arbitrary term of its choosing, i.e.\ a compiled program cannot trust what it is being linked against.
As a result of this, compiled programs must perform checks against that code too; these checks are the modifications to the compiler to which we turn next.
The advantage of the fact that linking produces valid contexts is that if we take $\upc\ulH{t_1+t_2}$, i.e. we let a linked program $\ul{t_1 + t_2}$ interact with an adversary context $\upc$, then if $\tl{t_2}$ contains a security bug, we can still  change our point of view and consider $\upc\ulH{\cdot+t_2}$ as an adversary context that trusted program $\ul{t_1}$ is being linked against.
In other words, modular full abstraction implies a form of compartmentalisation: security bugs in a component do not expose other components' internals.

\subsection{A Secure Compiler for Open Terms}\label{sec:seccompopen}
The compiler definition changes as in \Cref{def:comp-su-mod} to account for open terms and compiled terms being in a lambda-form.
\begin{defi}[A Modular Compiler $\compsu{\cdot}$]\label{def:comp-su-mod}
Assuming
\begin{itemize}
  \item  $\tl{x_2}:\tl{\tau_2'\to\tau_2}\vdash \tl{\lam{x_1':\tau_1'}{t_1}}:\tl{\tau_1'\to\tau_1}$,
 \end{itemize} then:
$$ \compsu{\lam{x_1':\tau_1'}{t_1}} = \prot{\tau_1'\to\tau_1} \ul{(\lam{x_1'}{((\lam{x_2}{\erase{t_1}})(\conf{\tau_2'\to\tau_2}~x_2))})} $$
\end{defi}

The compiler knows that $\erase{\lam{x_1'}{t}}$ will generate an open term with an open variable \ul{x_2}.
So it closes that variable with a $\ul{\lam{x_2}{\cdot}}$ just to open it again with the argument of that lambda (the term $\ul{(\conf{\tau_2'\to\tau_2} x_2)}$).
The point of this is to force a \conf{\cdot} around the free variables.

If one considers two source terms being compiled with this compiler and then linked, then the extra \conf{\cdot} is redundant.
However, linking at the target level can be done with arbitrary terms, so they need to be restricted on how they interoperate with these terms by calling \conf{\cdot} on them.
The term supplied for the open variable is like the argument of a function (the linker really treats it that way), thus the choice of \conf{\cdot}.
Adding this additional \conf{\cdot} does not disrupt the functionality of compiled code, as proved by \Cref{lem:extra-conf}.
\begin{lem}[An extra confine is just fine]\label{lem:extra-conf}
If $\tlGamma,\tl{x:\tau'}\vdash \tl{t}:\tl{\tau}$,
 then 
 \begin{equation*}
\tlGamma,\tl{x:\tau'} \vdash \tl{t} \arbsim_n \ul{(\lam{x}{\erase{t}})(\conf{\tau'}~x)} : \tl{\tau}
\end{equation*}
\end{lem}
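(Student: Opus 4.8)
The plan is to assemble the claim from the standard compatibility properties of the logical relation together with \cref{lem:erase-correct} and \cref{lem:protect-compatibility-approx}, followed by one step of reduction-closure on the source side. Throughout, recall that $\conf{\tau'}$ is a closed \ulc term, and that in the target term $\ul{(\lam{x}{\erase{t}})(\conf{\tau'}~x)}$ the binder $\ul{\lam{x}{\cdot}}$ merely shadows the free occurrence of $\tl{x}$ that feeds the $\conf{\cdot}$-wrapper, exactly as intended in \cref{def:comp-su-mod}.

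First, \cref{lem:erase-correct} gives $\tlGamma,\tl{x:\tau'}\vdash\tl{t}\arbsim\erase{t}:\tl{\tau}$, hence also $\tlGamma,\tl{x:\tau'}\vdash\tl{t}\arbsim\nn\erase{t}:\tl{\tau}$ for the $n$ at hand. By the compatibility lemma for $\lambda$-abstraction (one of the compatibility lemmas used to prove \cref{lem:erase-correct}), binding $\tl{x}$ yields $\tlGamma\vdash\tl{\lam{x:\tau'}{t}}\arbsim\nn\ul{\lam{x}{\erase{t}}}:\tl{\tau'\to\tau}$, and weakening the pseudo-context re-establishes this in $\tlGamma,\tl{x:\tau'}$. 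Separately, variable compatibility gives $\tlGamma,\tl{x:\tau'}\vdash\tl{x}\arbsim\nn\ul{x}:\tl{\tau'}$, and the $\conf{\cdot}$ part of \cref{lem:protect-compatibility-approx} upgrades this to $\tlGamma,\tl{x:\tau'}\vdash\tl{x}\arbsim\nn\ul{\conf{\tau'}~x}:\tl{\tau'}$. Applying the compatibility lemma for application to these two relatednesses produces
\begin{equation*}
  \tlGamma,\tl{x:\tau'}\vdash\tl{(\lam{x:\tau'}{t})~x}\arbsim\nn\ul{(\lam{x}{\erase{t}})~(\conf{\tau'}~x)}:\tl{\tau}\text{,}
\end{equation*}
whose right-hand side is already the target term in the statement.

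It remains to replace $\tl{(\lam{x:\tau'}{t})~x}$ by $\tl{t}$ on the left. Unfolding the open-term relation, it suffices to argue for each world $\WW$ with $\CWstepsfun{\WW}\le n$ and each related pair of closing substitutions $(\WW,\tlgamma,\ulgamma)$ that the closings of the two source terms are interchangeable: writing $\tlgamma(\tl{x})=\tl{v}$, the closing of $\tl{(\lam{x:\tau'}{t})~x}$ is $\tl{(\lam{x:\tau'}{t'})~v}$ (with $\tl{t'}$ the closing of $\tl{t}$ everywhere except at $\tl{x}$), and this takes a single $\stlcto$-step to the closing of $\tl{t}$. Prepending such a redex on the source side cannot change either observation relation $\CWobswfun{\WW}_\lesssim$ or $\CWobswfun{\WW}_\gtrsim$ — an extra source step only helps: for $\lesssim$, the reduct reaches its value in one fewer step; for $\gtrsim$, termination of the reduct still implies termination of the original. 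Hence $\arbsim\nn$ is closed under this reduction, giving $\tlGamma,\tl{x:\tau'}\vdash\tl{t}\arbsim\nn\ul{(\lam{x}{\erase{t}})(\conf{\tau'}~x)}:\tl{\tau}$, as required.

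The proof is essentially bookkeeping; the two points that need care are keeping the $\lambda$-bound and free occurrences of $\tl{x}$ apart (handled by $\alpha$-renaming the bound copy, if one prefers) and the direction of the final reduction-closure step. I expect the main obstacle to be merely ensuring that the compatibility lemmas and the reduction-closure property are available at the $n$-indexed level $\arbsim\nn$ rather than only at $\arbsim$; but this is precisely the level at which \cref{lem:erase-correct} and \cref{lem:protect-compatibility-approx} are already stated, so no new machinery is needed.
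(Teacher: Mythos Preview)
Your overall assembly is right—use \cref{lem:erase-correct}, the lambda/variable/application compatibility lemmas, and the $\conf{}$-half of \cref{lem:protect-compatibility-approx} to obtain
\[
  \tlGamma,\tl{x:\tau'}\vdash\tl{(\lam{x:\tau'}{t})~x}\arbsim\nn\ul{(\lam{x}{\erase{t}})~(\conf{\tau'}~x)}:\tl{\tau}\text{,}
\]
and for $\gtrsim$ the final step is fine. The gap is the $\lesssim$ half of that last step. The observation relation $\CWobswfun{\WW}_\lesssim$ bounds the number of \emph{source} steps: you have that $\tlC\tlH{(\lam{x:\tau'}{t'})~v}$ terminating in at most $\CWstepsfun{\WW}$ steps forces the target to terminate, and you need the same with $\tlC\tlH{t\tlgamma}$ in the hole. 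But if $\tlC\tlH{t\tlgamma}$ terminates in exactly $\CWstepsfun{\WW}$ steps, the redex needs $\CWstepsfun{\WW}+1$ steps and the hypothesis does not fire. Your justification ``the reduct reaches its value in one fewer step'' is the wrong way round: fewer steps for the reduct means \emph{more} for the redex, which is the side the hypothesis constrains. Trying to recover the missing step by instantiating the redex relation at index $n{+}1$ does not help either, because you would then need the closing substitution and the continuation to be related at a world with one extra step, and $\valrel{\cdot}_\lesssim$ (hence $\envrel{\cdot}_\lesssim$ and $\contrel{\cdot}_\lesssim$) is only monotone toward worlds with \emph{fewer} steps.

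The clean repair is to avoid introducing a source-side redex at all and do the reductions on the target side, where anti-reduction is harmless for both $\lesssim$ and $\gtrsim$. After closing with $(\tlgamma,\ulgamma)$, the target first evaluates $\conf{\tl{\tau'}}~\ulgamma(\ul{x})$ and then $\beta$-reduces. One shows, by induction on $\tl{\tau'}$ (this is the value-level content underlying \cref{lem:protect-compatibility-approx}), that whenever $(\WW,\tlgamma(\tl{x}),\ulgamma(\ul{x}))\in\valrel{\tau'}_\genlogrel$ the term $\conf{\tl{\tau'}}~\ulgamma(\ul{x})$ evaluates to some $\ul{v'}$ with $(\WW',\tlgamma(\tl{x}),\ul{v'})\in\valrel{\tau'}_\genlogrel$ at a suitably later $\WW'$. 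The target then becomes $\erase{\tl{t}}$ closed by $\ulgamma[\ul{x}\mapsto\ul{v'}]$, the modified substitution is still $\envrel{\cdot}$-related to $\tlgamma$ at $\WW'$, and \cref{lem:erase-correct} finishes.
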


\subsection{Modular Full-Abstraction for $\compsu{\cdot}$}\label{sec:mfa-comp}
The property that $\compsu{\cdot}$ must have is modular full-abstraction~\citep{mfac}, which is the combination of compiler full-abstraction with an open environment (\Cref{thm:facomp-o} in \Cref{sec:fa-open}) and compiler modularity (\Cref{thm:comp-mod} in \Cref{sec:compmod}).

\subsubsection{Full-Abstraction with an Open Environment}\label{sec:fa-open}
The first step to re-prove compiler full-abstraction is compiler correctness for open lambda-terms (\Cref{thm:conftransp}).
\begin{thm}[$\compsu{\cdot}$ is correct]\label{thm:conftransp}
If
\begin{equation*}
  \tl{x_2}:\tl{\tau_2'\to\tau_2}\vdash \tl{\lam{x_1':\tau_1'}{t_1}}:\tl{\tau_1'\to\tau_1} 
\end{equation*}
then 
\begin{equation*}
\tl{x_2:\tau_2'\to\tau_2} \vdash \tl{\lam{x_1':\tau_1'}{t_1}} \arbsim_n \prot{\tau_1'\to\tau_1} \ul{(\lam{x_1'}{((\lam{x_2}{\erase{t_1}})(\conf{\tau_2'\to\tau_2}~x_2))})} : \tl{\tau_1'\to\tau_1}.
\end{equation*}
\end{thm}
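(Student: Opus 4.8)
The plan is to dismantle the compiler in the same order it is built, leaning on the semantics-preservation lemmas for its two passes together with the standard compatibility properties of the logical relation of \cref{fig:logrels}. Recall from \cref{def:comp-su-mod} that $\compsu{\lam{x_1':\tau_1'}{t_1}} = \prot{\tau_1'\to\tau_1}\ \ul{(\lam{x_1'}{((\lam{x_2}{\erase{t_1}})(\conf{\tau_2'\to\tau_2}~x_2))})}$. Since \cref{lem:protect-compatibility-approx} says that wrapping a related target term in $\prot{\tau_1'\to\tau_1}$ keeps it related, it suffices to strip off the outermost wrapper and prove
\begin{equation*}
  \tl{x_2:\tau_2'\to\tau_2} \vdash \tl{\lam{x_1':\tau_1'}{t_1}} \arbsim_n \ul{(\lam{x_1'}{((\lam{x_2}{\erase{t_1}})(\conf{\tau_2'\to\tau_2}~x_2))})} : \tl{\tau_1'\to\tau_1}
\end{equation*}

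Next I would peel the outer $\lambda x_1'$ off both sides with the compatibility lemma for $\lambda$-abstraction --- a standard property of the logical relation, obtained from the function case of $\valrel{\cdot}_\genlogrel$ together with monotonicity in the world and the fact that value-relatedness entails term-relatedness. This step preserves the step bound $n$: the two bodies are fed related arguments only in strictly future worlds, which permit strictly fewer steps than the current one and hence at most $n$, so the hypothesis (that the bodies are related at index $n$) still applies there. The goal thus reduces to relating the two bodies in the extended context,
\begin{equation*}
  \tl{x_1':\tau_1',~x_2:\tau_2'\to\tau_2} \vdash \tl{t_1} \arbsim_n \ul{(\lam{x_2}{\erase{t_1}})(\conf{\tau_2'\to\tau_2}~x_2)} : \tl{\tau_1}
\end{equation*}
Inverting the typing rule for $\lambda$-abstraction on the hypothesis $\tl{x_2:\tau_2'\to\tau_2} \vdash \tl{\lam{x_1':\tau_1'}{t_1}} : \tl{\tau_1'\to\tau_1}$ shows that $\tl{t_1}$ is well-typed in that context, so this last goal is exactly an instance of \cref{lem:extra-conf}, taking its $\tlGamma := \tl{x_1':\tau_1'}$, $\tl{x} := \tl{x_2}$, $\tl{\tau'} := \tl{\tau_2'\to\tau_2}$ and $\tl{\tau} := \tl{\tau_1}$. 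That closes the proof.

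The one point that genuinely needs care --- and the reason the argument is arranged this way --- is that \cref{lem:extra-conf} has to be applied to the body $\tl{t_1}$, not to the whole abstraction $\tl{\lam{x_1':\tau_1'}{t_1}}$: applying it to the abstraction would yield the target term $\ul{(\lam{x_2}{(\lam{x_1'}{\erase{t_1}})})(\conf{\tau_2'\to\tau_2}~x_2)}$, in which the $\conf{\cdot}$ wrapper sits \emph{outside} $\lambda x_1'$ (so the check fires eagerly at link time), whereas \cref{def:comp-su-mod} places it \emph{inside} $\lambda x_1'$ (so the check fires only when the compiled function is applied). Splitting $\lambda x_1'$ off with the compatibility lemma first is precisely what makes the two shapes coincide. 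Everything else --- the step-index bookkeeping, and the observation that $\replemul{\cdot}$ is the identity on genuine \pstlc types and so imposes no extra well-typedness obligations here --- is routine.
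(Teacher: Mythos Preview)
Your proposal is correct and follows exactly the route the paper sets up: strip the outer $\prot{\cdot}$ via \cref{lem:protect-compatibility-approx}, use $\lambda$-compatibility to go under $\lambda x_1'$, and finish with \cref{lem:extra-conf} applied to the body. The paper does not spell out the proof of \cref{thm:conftransp} in the main text (it is deferred to the companion report), but the placement and phrasing of \cref{lem:extra-conf} make clear this is the intended decomposition, and your instantiation of that lemma is the right one. The only cosmetic wrinkle is that after inverting the typing of the abstraction the context comes out as $\tl{x_2:\tau_2'\to\tau_2},\,\tl{x_1':\tau_1'}$ whereas \cref{lem:extra-conf} wants $\tl{x_2}$ last; a silent exchange handles this and is, as you say, routine.
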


We then re-state \Cref{thm:compiler-correctness} and \Cref{thm:compiler-sec} to work for open lambda terms only and to work for the new definition of $\compsu{\cdot}$.
\begin{thm}[$\compsu{\cdot}$ reflects equivalence] \label{thm:compiler-correctness-l}
If 
\begin{align*}
   \tl{x}:\tl{\tau'\to\tau}&\vdash \tl{\lam{x_1':\tau_1'}{t_1}}:\tl{\tau_1'\to\tau_1}, \\
   \tl{x}:\tl{\tau'\to\tau}&\vdash \tl{\lam{x_2':\tau_1'}{t_2}}:\tl{\tau_1'\to\tau_1},\\
   \ul{x} &\vdash \compsu{\lam{x_1':\tau_1'}{t_1}} \cequlc \compsu{\lam{x_2':\tau_1'}{t_2}}
\end{align*} then 
\begin{equation*}
\tl{x:\tau'\to\tau} \vdash \tl{\lam{x_1':\tau_1'}{t_1}} \ceqstlc \tl{\lam{x_2':\tau_1'}{t_2}} : \tl{\tau_1'\to\tau_1}.
\end{equation*}
\end{thm}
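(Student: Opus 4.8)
The plan is to transcribe the proof of \cref{thm:compiler-correctness}, replacing the appeal to \cref{thm:sempres} by its modular analogue \cref{thm:conftransp} and the whole-program contexts by open ones. Unfolding the definition of contextual equivalence $\ceqstlc$ for open terms, I would fix an arbitrary \pstlc program context $\tpc$ with $\vdash \tpc : \tl{x_2:\tau_2'\to\tau_2},\tl{\tau_1'\to\tau_1} \to \tle,\tl{\tau''}$; since the statement is symmetric in $\tl{t_1}$ and $\tl{t_2}$ it suffices to show that $\tpc\tlH{\lam{x_1':\tau_1'}{t_1}}\tl{\Downarrow}$ implies $\tpc\tlH{\lam{x_2':\tau_1'}{t_2}}\tl{\Downarrow}$, so I assume the former, say in $m$ steps.

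Next I would set $\upc \isdef \erase{\tpc}$ and invoke \cref{lem:erase-sempres-ctx} --- which is already stated for contexts whose hole expects a term in a non-empty environment --- to obtain $\vdash \tpc \arbsim \upc : \tl{x_2:\tau_2'\to\tau_2},\tl{\tau_1'\to\tau_1} \to \tle,\tl{\tau''}$. Applying \cref{thm:conftransp} to both hypotheses (for every step index, i.e.\ at $\arbsim$) gives $\tl{x_2:\tau_2'\to\tau_2} \vdash \tl{\lam{x_1':\tau_1'}{t_1}} \arbsim \compsu{\lam{x_1':\tau_1'}{t_1}} : \tl{\tau_1'\to\tau_1}$ and likewise for $\tl{t_2}$, since by \cref{def:comp-su-mod} the right-hand side of \cref{thm:conftransp} is exactly $\compsu{\cdot}$ applied to the respective lambda. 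Feeding these into the program-context relation of \cref{fig:logrel-termcontext} yields $\tle \vdash \tpc\tlH{\lam{x_1':\tau_1'}{t_1}} \arbsim \upc\ulH{\compsu{\lam{x_1':\tau_1'}{t_1}}} : \tl{\tau''}$ and the analogous statement for $\tl{t_2}$; because $\arbsim$ can be instantiated to both $\lesssim$ and $\gtrsim$, all four approximations (both directions, both terms) become available.

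The termination bookkeeping then proceeds exactly as in \cref{thm:compiler-correctness}: instantiating the $\lesssim$ statement at step index $m$ and applying \cref{lem:adequacy} to $\tpc\tlH{\lam{x_1':\tau_1'}{t_1}}\tl{\Downarrow}$ gives $\upc\ulH{\compsu{\lam{x_1':\tau_1'}{t_1}}}\ul{\Downarrow}$. Since $\upc = \erase{\tpc}$ is a well-scoped \ulc context carrying precisely the free variable that the compiled open terms expose --- namely the occurrence of $\ul{x_2}$ re-introduced by the $\conf{\tau_2'\to\tau_2}$ wrapper in \cref{def:comp-su-mod}, which is the variable $\ul{x}$ of the hypothesis up to renaming --- the assumed \ulc contextual equivalence $\ul{x} \vdash \compsu{\lam{x_1':\tau_1'}{t_1}} \cequlc \compsu{\lam{x_2':\tau_1'}{t_2}}$ gives $\upc\ulH{\compsu{\lam{x_2':\tau_1'}{t_2}}}\ul{\Downarrow}$, say in $m'$ \ulc steps. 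Finally, instantiating the $\gtrsim$ statement at step index $m'$ and applying \cref{lem:adequacy} once more yields $\tpc\tlH{\lam{x_2':\tau_1'}{t_2}}\tl{\Downarrow}$, which closes the argument.

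I do not expect a genuinely hard step here: the entire proof is a transcription of \cref{thm:compiler-correctness} with \cref{thm:conftransp} in place of \cref{thm:sempres}, and all the real mathematical content has been absorbed into \cref{thm:conftransp} (hence into \cref{lem:extra-conf,lem:protect-compatibility-approx,lem:erase-correct}). The points that demand attention are: checking that \cref{lem:erase-sempres-ctx} really covers a context whose hole carries the non-trivial environment $\tl{x_2:\tau_2'\to\tau_2}$; verifying that the free \ulc variable $\ul{x}$ named in the hypothesis is exactly the one left free by $\compsu{\lam{x_1':\tau_1'}{t_1}}$ and $\compsu{\lam{x_2':\tau_1'}{t_2}}$, so that $\upc$ is a legal context for them; and keeping the two distinct step counts $m$ and $m'$ matched to their respective uses of \cref{lem:adequacy}.
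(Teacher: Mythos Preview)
Your proposal is correct and follows essentially the same approach as the paper: the paper explicitly states that the proof of \cref{thm:compiler-correctness-l} is analogous to that of \cref{thm:compiler-correctness} except that it relies on \cref{thm:conftransp}, which is exactly the transcription you carry out. The bookkeeping points you flag (the non-empty environment in \cref{lem:erase-sempres-ctx}, the free-variable alignment, and the two step indices) are the right things to watch but present no real obstacle.
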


\begin{thm}[$\compsu{\cdot}$ preserves equivalence]\label{thm:contextual-equivalence-preservation-l}
If 
\begin{align*}
  \tl{x}:\tl{\tau'\to\tau}&\vdash \tl{\lam{x_1':\tau_1'}{t_1}}:\tl{\tau_1'\to\tau_1},\\
  \tl{x}:\tl{\tau'\to\tau}&\vdash \tl{\lam{x_2':\tau_1'}{t_2}}:\tl{\tau_1'\to\tau_1},\\
  \tl{x:\tau'} &\vdash \tl{\lam{x_1':\tau_1'}{t_1}} \ceqstlc \tl{\lam{x_1':\tau_2'}{t_2}} : \tltau,
\end{align*} then
\begin{equation*}
\ul{x} \vdash \compsu{\lam{x_1':\tau_1'}{t_1}} \cequlc \compsu{\lam{x_2':\tau_2'}{t_2}}.
\end{equation*}
\end{thm}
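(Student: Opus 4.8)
The plan is to replay the proof of \cref{thm:compiler-sec} almost verbatim; the only genuinely new ingredient is a version of the approximate back-translation that also accounts for the free variable $\tl{x}$ of the compiled components. First I would unfold the two open-environment contextual-equivalence judgments: it suffices to fix an arbitrary well-scoped target context $\upc$ with $\vdash\upc : \ul{x}\to\ule$ (so $\upc$ binds the single free variable $\ul{x}$), assume $\upc\ulH{\compsu{\lam{x_1':\tau_1'}{t_1}}}\ul{\Downarrow}$, and prove $\upc\ulH{\compsu{\lam{x_1':\tau_1'}{t_2}}}\ul{\Downarrow}$; the converse direction is symmetric. As in \cref{thm:compiler-sec}, I would then choose $n$ strictly larger than the number of reduction steps witnessing $\upc\ulH{\compsu{\lam{x_1':\tau_1'}{t_1}}}\ul{\Downarrow}$.

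Second, and this is the heart of the extension, I would define a modular approximate back-translation $\tl{\ef{\upc}\indexx{\tau_1'\to\tau_1;n}}$ refining \cref{def:backtrans}. Starting from $\tl{\emulate\indexx{n+1}(\upc)}$, which by \cref{lem:emulate-works-ctx} is a \pstlc context that binds $\tl{x}$ at pseudo-type $\tl{\EmulDV\indexx{n+1;p}}$ and whose hole has pseudo-type $\tl{\EmulDV\indexx{n+1;p}}$, I would (i) close the hole with an $\tl{\inject{\tau_1'\to\tau_1;n}}$ wrapper exactly as in \cref{def:backtrans}, so that it accepts \pstlc terms of type $\tl{\tau_1'\to\tau_1}$, and (ii) reconcile the free variable by composing it with an $\tl{\extractf{\tau_2'\to\tau_2;n}}$ wrapper, turning the $\tl{\EmulDV}$-typed variable into one of type $\tl{\tau_2'\to\tau_2}$ as seen by the plugged-in term. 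Step (ii) is the correct choice precisely because $\compsu{\cdot}$ already wraps its free variable in a $\conf{\tau_2'\to\tau_2}$ (\cref{def:comp-su-mod}): by the second statement of \cref{lem:protect-inj-square}, $\tl{\extractf{\tau_2'\to\tau_2;n}}$ applied to the emulated variable is related at type $\tl{\tau_2'\to\tau_2}$ to $\conf{\tau_2'\to\tau_2}~\ul{x}$, which is exactly how $\compsu{\cdot}$ consumes it. Using the relatedness of the component before the $\prot{\tau_1'\to\tau_1}$-wrapping supplied by \cref{lem:erase-correct,lem:extra-conf} (together with the standard compatibility of the logical relation under $\lambda$-abstraction), and then chaining \cref{lem:emulate-works-ctx,lem:protect-inj-square}, I would obtain the modular analogue of \cref{lem:correctness-back-translation}: under the usual side conditions (either $m \geq n$ and $p = \precise$, or $\square = {\lesssim}$ and $p = \imprecise$),
\[ \tle \vdash \tl{\ef{\upc}\indexx{\tau_1'\to\tau_1;m}}\tlH{\lam{x_1':\tau_1'}{t_1}} \mathrel{\genlogrel\nn} \upc\ulH{\compsu{\lam{x_1':\tau_1'}{t_1}}} : \tl{\EmulDV\indexx{m;p}} \]
and analogously for $\tl{\lam{x_1':\tau_1'}{t_2}}$.

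With this lemma in hand, the remainder mirrors the proof of \cref{thm:compiler-sec} line for line. From $\upc\ulH{\compsu{\lam{x_1':\tau_1'}{t_1}}}\ul{\Downarrow}$ in at most $n$ steps together with the $\gtrsim\nn$ instance of the back-translation lemma (taking $m = n$, $p = \precise$, $\square = {\gtrsim}$), \cref{lem:adequacy} gives $\tl{\ef{\upc}\indexx{\tau_1'\to\tau_1;n}}\tlH{\lam{x_1':\tau_1'}{t_1}}\tl{\Downarrow}$. Since linking yields valid program contexts (\cref{sec:openlinking}), $\tl{\ef{\upc}\indexx{\tau_1'\to\tau_1;n}}$ is a legitimate context for contextual equivalence under the open environment $\tl{x}:\tl{\tau_2'\to\tau_2}$, so the source-level equivalence hypothesis yields $\tl{\ef{\upc}\indexx{\tau_1'\to\tau_1;n}}\tlH{\lam{x_1':\tau_1'}{t_2}}\tl{\Downarrow}$, say in $n'$ steps. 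Finally the $\lesssim_{n'}$ instance of the back-translation lemma (taking $m = n$, outer index $n'$, $p = \imprecise$, $\square = {\lesssim}$) together with \cref{lem:adequacy} gives $\upc\ulH{\compsu{\lam{x_1':\tau_1'}{t_2}}}\ul{\Downarrow}$, as required.

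The main obstacle is the second step: setting up the modular back-translation and its correctness lemma. One must verify that the source-side wrapper around the emulated free variable is $\tl{\extractf{\tau_2'\to\tau_2;n}}$ (mirroring the $\conf{\tau_2'\to\tau_2}$ inserted by $\compsu{\cdot}$) rather than $\tl{\inject{\tau_2'\to\tau_2;n}}$, and that the step-index and precision bookkeeping still closes up once this extra wrapping is threaded through \cref{lem:emulate-works-ctx,lem:protect-inj-square} for \emph{both} the $\gtrsim\nn$ direction (needing $m \geq n$ and $p = \precise$) and the $\lesssim_{n'}$ direction (needing $\square = {\lesssim}$ and $p = \imprecise$). \cref{thm:conftransp,lem:extra-conf} are designed to absorb exactly this bookkeeping, so once the modular back-translation is correctly defined the rest is routine; the delicate point is that a wrong wrapper choice or a mis-indexed invocation would break one of the two adequacy steps.
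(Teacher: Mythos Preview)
Your proposal is correct and matches the paper's approach. The paper states only that the proof is analogous to \cref{thm:compiler-sec} except that it relies on \cref{thm:conftransp}, deferring details to the companion tech report; your plan is exactly that analogue, with the necessary adaptation of the back-translation to handle the free variable made explicit. The one cosmetic difference is packaging: the paper invokes \cref{thm:conftransp} as a black box (which already bundles \cref{lem:erase-correct}, \cref{lem:extra-conf} and \cref{lem:protect-compatibility-approx}), whereas you unfold it into its constituents---either works. Your appeal to \cref{sec:openlinking} to justify that the modular back-translation is a legitimate context is slightly off (linking is not what makes it a context; it is simply well-typed by construction from $\tl{\emulate}$, $\tl{\inject{}}$ and $\tl{\extractf{}}$), but the conclusion you need there is correct.
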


The proofs of \Cref{thm:compiler-correctness-l} and of \Cref{thm:contextual-equivalence-preservation-l} are analogous to their closed-environments analogues except that they rely on \Cref{thm:conftransp}.
They are reported in the companion tech report.

\begin{thm}[Compiler Full Abstraction]\label{thm:facomp-o} If \hfill
\begin{align*}
   \tl{x}:\tl{\tau'\to\tau}&\vdash \tl{\lam{x_1':\tau_1'}{t_1}}:\tl{\tau_1'\to\tau_1},\\
   \tl{x}:\tl{\tau'\to\tau}&\vdash \tl{\lam{x_1':\tau_1'}{t_2}}:\tl{\tau_1'\to\tau_1},
\end{align*} then
\begin{equation*}
\tl{x:\tau'\to\tau} \vdash \tl{\lam{x_1'}{t_1}}\ceqstlc\tl{\lam{x_2'}{t_2}} : \tl{\tau_1'\to\tau_1} \iff \ul{x}\vdash \compgen{\lam{x_1'}{t_1}}\cequlc\compgen{\lam{x_2'}{t_2}}.
\end{equation*}
\end{thm}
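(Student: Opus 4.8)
The plan is to prove the biconditional by splitting it into its two implications, each of which has already been established — for exactly the open-environment, lambda-form setting of this statement — in the two theorems immediately preceding it. This mirrors the structure of the proof of \Cref{thm:comp-fa} in the whole-program case, where equivalence reflection and equivalence preservation were simply combined. So the proof here is essentially one line: \Cref{thm:compiler-correctness-l} gives the $\Leftarrow$ direction and \Cref{thm:contextual-equivalence-preservation-l} gives the $\Rightarrow$ direction.

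Concretely, for $\Leftarrow$ I would invoke \Cref{thm:compiler-correctness-l} directly: its hypotheses are precisely the well-typedness assumptions on $\tl{\lam{x_1'}{t_1}}$ and $\tl{\lam{x_2'}{t_2}}$ together with $\ul{x}\vdash \compsu{\lam{x_1'}{t_1}} \cequlc \compsu{\lam{x_2'}{t_2}}$, and its conclusion is exactly $\tl{x:\tau'\to\tau} \vdash \tl{\lam{x_1'}{t_1}}\ceqstlc\tl{\lam{x_2'}{t_2}} : \tl{\tau_1'\to\tau_1}$. Symmetrically, for $\Rightarrow$ I would invoke \Cref{thm:contextual-equivalence-preservation-l}, which from source-level contextual equivalence in the open environment derives target-level contextual equivalence of the compilations. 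The only bookkeeping point is to note that $\compgen{\cdot}$ in the statement denotes the same modular compiler $\compsu{\cdot}$ of \Cref{def:comp-su-mod} under the instantiation of the generic source/target languages to \pstlc and \ulc.

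The step I expect to be the real obstacle is \emph{not} in this theorem — which is an immediate corollary — but in the two theorems it rests on, and ultimately in \Cref{thm:conftransp} and \Cref{lem:extra-conf}. There one must show that the extra $\conf{\tau_2'\to\tau_2}$ wrapper inserted around the free variable by the modular compiler does not disturb behaviour when that variable is treated at its type, so that the whole-program arguments of \Cref{thm:compiler-correctness} and \Cref{thm:compiler-sec} — with the back-translation machinery of \cref{sec:appr-back-transl}, in particular \cref{lem:correctness-back-translation} and \cref{lem:adequacy} — carry over essentially verbatim to the open-environment setting, with the open variable $\tl{x}$ treated like an additional argument supplied (possibly adversarially) at link time. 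Once those lemmas are in place, assembling \Cref{thm:facomp-o} from \Cref{thm:compiler-correctness-l} and \Cref{thm:contextual-equivalence-preservation-l} requires no further work beyond citing them for the two directions.
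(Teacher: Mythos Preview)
Your proposal is correct and matches the paper's own proof exactly: the paper proves \Cref{thm:facomp-o} in one line by citing \Cref{thm:contextual-equivalence-preservation-l} and \Cref{thm:compiler-correctness-l} for the two directions, just as you propose.
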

\begin{proof}
  By \Cref{thm:contextual-equivalence-preservation-l} and \Cref{thm:compiler-correctness-l}.
\end{proof}

\subsubsection{Compiler Modularity}\label{sec:compmod}
Compiler modularity is a property that is stated just between \ulc terms.
Intuitively, it states that linking two source terms $\tl{t_1}$ and $\tl{t_2}$ and compiling the result is contextually-equivalent to compiling $\tl{t_1}$ and $\tl{t_2}$ individually and then linking the result in the target.
Formally, this is captured by \Cref{thm:comp-mod}, where compiler modularity is written with a closed environment as linking is generally a global step.
\begin{thm}[Compiler modularity]\label{thm:comp-mod}
If 
\begin{align*}
   \tl{x_2:\tau_2'\to\tau_2}&\vdash \tl{\lam{x_1':\tau_1'}{t_1}}:\tl{\tau_1'\to\tau_1}\\
   \tl{x_1:\tau_1'\to\tau_1}&\vdash \tl{\lam{x_2':\tau_2'}{t_2}}:\tl{\tau_2'\to\tau_2}
\end{align*} then
\begin{equation*}
\ule \vdash \compsu{\lam{x_1':\tau_1'}{t_1}+\lam{x_2':\tau_2'}{t_2}} \cequlc \compsu{\lam{x_1':\tau_1'}{t_1}}\ul{+}\compsu{\lam{x_2':\tau_2'}{t_2}}.
\end{equation*}
\end{thm}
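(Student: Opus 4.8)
Write $\tl{s}$ for the source-level linked term $\tl{\lam{x_1':\tau_1'}{t_1}+\lam{x_2':\tau_2'}{t_2}}$ of \cref{def:linking}, which has type $\tltau \isdef \tl{(\tau_1'\to\tau_1)\times(\tau_2'\to\tau_2)}$. The plan is to show that \emph{both} \ulc programs in the statement are related to $\tl{s}$ at $\tltau$ in \emph{both} approximation directions and at \emph{every} step index, i.e.\ $\tle\vdash\tl{s}\arbsim\nn\compsu{\lam{x_1':\tau_1'}{t_1}+\lam{x_2':\tau_2'}{t_2}}:\tltau$ and $\tle\vdash\tl{s}\arbsim\nn\compsu{\lam{x_1':\tau_1'}{t_1}}\ul{+}\compsu{\lam{x_2':\tau_2'}{t_2}}:\tltau$, for all $n$ and for $\arbsim$ instantiated to both $\lesssim$ and $\gtrsim$. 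From these two facts, contextual equivalence of the two programs follows by replaying the back-translation argument of \cref{thm:compiler-sec}: for a closing \ulc context $\upc$ with the first program converging in $N$ steps, \cref{lem:correctness-back-translation} (with $p = \precise$, $\arbsim = {\gtrsim}$) and \cref{lem:adequacy} give convergence of $\tl{\ef{\upc}\indexx{\tau;N}}\tlH{s}$, and then \cref{lem:correctness-back-translation} (with $p = \imprecise$, $\arbsim = {\lesssim}$) and \cref{lem:adequacy} transfer convergence to $\upc$ applied to the second program; the other direction of $\cequlc$ is symmetric.

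\textbf{The left-hand side.} Since $\tl{s}$ is a closed whole program of type $\tltau$, its compilation is $\comp{\tl{s}} = \prot{\tltau}(\erase{\tl{s}})$, so $\tle\vdash\tl{s}\arbsim\nn\compsu{\lam{x_1':\tau_1'}{t_1}+\lam{x_2':\tau_2'}{t_2}}:\tltau$ is precisely \cref{thm:sempres} (equivalently, \cref{lem:erase-correct} composed with \cref{lem:protect-compatibility-approx}).

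\textbf{The right-hand side.} The substantive step is a \emph{compatibility lemma for linking}: if $\tl{x_2:\tau_2'\to\tau_2}\vdash\tl{\lam{x_1':\tau_1'}{t_1}}\arbsim\nn\ul{u_1}:\tl{\tau_1'\to\tau_1}$ and $\tl{x_1:\tau_1'\to\tau_1}\vdash\tl{\lam{x_2':\tau_2'}{t_2}}\arbsim\nn\ul{u_2}:\tl{\tau_2'\to\tau_2}$, then $\tle\vdash\tl{\lam{x_1':\tau_1'}{t_1}+\lam{x_2':\tau_2'}{t_2}}\arbsim\nn\ul{u_1}\ul{+}\ul{u_2}:\tltau$. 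This is proved by unfolding the two linkers of \cref{def:linking} and applying the standard compatibility lemmas of the logical relation for the constituent term-formers — $\fix{}$ at a product type, $\lambda$-abstraction, application, pairing, projection, and variable substitution — using crucially that \erasen\ sends $\tl{\fix{}}$ to exactly the Z combinator $\ulfix$ that the \ulc linker employs (so the two fixpoints stay in lock-step), and that the $\eta$-expansions built into \cref{def:linking} are precisely what the compatibility lemma for the call-by-value $\fix{}$ consumes. Instantiating this lemma with $\ul{u_i} \isdef \compsu{\lam{x_i':\tau_i'}{t_i}}$ and feeding it \cref{thm:conftransp} (whose proof in turn absorbs the extra $\conf{\cdot}$ wrapper inserted by the modular compiler via \cref{lem:extra-conf}) yields the required relation between $\tl{s}$ and $\compsu{\lam{x_1':\tau_1'}{t_1}}\ul{+}\compsu{\lam{x_2':\tau_2'}{t_2}}$.

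\textbf{Expected obstacle.} I expect the linking-compatibility lemma to be the main difficulty, since it forces one to reason through the encoding of linking as a call-by-value fixpoint over a product while keeping the source and target fixpoints (and their $\eta$-expansions) aligned, and to check that the step-indexing interacts correctly with the $\fix{}$-compatibility step. A secondary, more clerical wrinkle is that the back-translation always reintroduces a $\prot{\tltau}$, so the conclusion step as sketched literally produces convergence of $\upc\ulH{\prot{\tltau}~(\compsu{\lam{x_1':\tau_1'}{t_1}}\ul{+}\compsu{\lam{x_2':\tau_2'}{t_2}})}$ rather than of $\upc$ applied to the bare linked program; closing this gap amounts to observing that $\compsu{\lam{x_1':\tau_1'}{t_1}}\ul{+}\compsu{\lam{x_2':\tau_2'}{t_2}}$ is already protected at its pair interface — its two projections are themselves $\prot{\cdot}$-wrapped compiled components — so the additional $\prot{\tltau}$ is behaviour-preserving up to $\cequlc$, a small protect-idempotence fact that follows again from \cref{lem:protect-compatibility-approx} together with the same back-translation/adequacy reasoning.
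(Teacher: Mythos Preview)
Your approach is essentially the paper's: both pivot through the single source term $\tl{s}$ and use the back-translation/adequacy machinery of \cref{thm:compiler-sec} to transport convergence across (this is exactly the paper's \cref{fig:proving-compiler-modularity}). Your ``compatibility lemma for linking'' is the paper's \cref{lem:source-target-linking}; the paper proves that lemma directly by unfolding the two linkers and chaining the standard compatibility lemmas together with \cref{lem:extra-conf}, whereas you factor it as a general linking-compatibility result instantiated via \cref{thm:conftransp} --- but since \cref{thm:conftransp} itself is just \cref{lem:erase-correct} plus \cref{lem:protect-compatibility-approx} plus \cref{lem:extra-conf}, the two decompositions amount to the same calculation. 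The $\prot{\tltau}$ wrinkle you flag at the end is genuine and the paper's main text does not spell out how it is discharged (it is deferred to the tech report/Coq development); your diagnosis that it reduces to a protect-idempotence fact at the pair interface is on the right track.
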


The formal setup developed so far (i.e., the logical relation) however, is only built for cross-language reasoning.
Since we do not really have a \ulc logical relation, nor do we want to build one, we resort to an analogous of the proof of compiler security -- the part that relies on the back translation, except that we will have the same term on both sides of the source contextual equivalence (\Cref{fig:proving-compiler-modularity}).

\begin{figure}
  \centering
  \begin{tikzpicture}[scale=0.84,every node/.style={scale=.9}]
  \node at (5.2,4.8) { $\tl{t_1+t_2\mathrel{\ceq} t_1+t_2}$ };

  \node at (3,4.1) { $\tl{\ef{\upc}\nn\big[}\tl{t_1+t_2}\tl{\big] \Dan{}{\_}}$ };
  \node at (5,4.1) { $\mathrel{\Ra}$ };
  \node at (7,4.1) { $\tl{\ef{\upc}\nn\big[}\tl{t_1+t_2}\tl{\big] \Dan{}{\_}}$ };

  \node at (4.35,2.8) { (1) };
  \node at (5,3.6) { (2) };
  \node at (5.65,2.8) { (3) };

  \draw[out=100,in=260,double,-implies,double equal sign distance] (4,2.6) to (4,3.4);

  \draw[out=280,in=80,double,-implies,double equal sign distance] (6,3.4) to (6,2.6);

  \node[align=left] at (9,3) { $ \tl{\ef{\upc}\nn} \lesssim\indexx{\_} \upc$ \\
    $ \tl{t_1+t_2} \lesssim\indexx{\_} \compsu{t_1}\ul{+}\compsu{t_2}$};
  \node[align=left] at (1.5,3) { $ \tl{\ef{\upc}\nn} \gtrsim\nn \upc$ \\
    $ \tl{t_1+t_2} \gtrsim\indexx{\_} \comp{t_1+t_2}$};

  \node at (3,1.9) { $\ul{\progctx\Big[}\comp{t_1+t_2}\ul{\Big] \Dan{}{n}}$ };
  \node at (5,2) { $\overset{?}{\Ra}$ };
  \node at (7.4,1.9) { $\ul{\progctx\Big[}\compsu{t_1}\ul{+}\compsu{t_2}\ul{\Big] \Dan{}{\_}}$ };

  \node at (5.5,1.2) { $\compsu{t_1+t_2}\mathrel{\overset{?}{\cequlc}}\compsu{t_1}\ul{+}\compsu{t_2}$ };
  \end{tikzpicture}
  
  \caption{Proving compiler modularity. Only one direction of this half is presented ($\Rightarrow$), the other one follows by symmetry.}
  \label{fig:proving-compiler-modularity}
\end{figure}

Step 1 is given by the correctness of $\compsu{\cdot}$ (\Cref{thm:conftransp}) and Step 2 is trivial since a term is equivalent to itself.
All that remains to be proven is Step 3, as captured by \Cref{lem:source-target-linking}.
\begin{lem}[Source linking is related to target linking]\label{lem:source-target-linking}
If
\begin{align*}
  \tl{x_2:\tau_2'\to\tau_2}&\vdash \tl{\lam{x_1':\tau_1'}{t_1}}:\tl{\tau_1'\to\tau_1} \\
  \tl{x_1:\tau_1'\to\tau_1}&\vdash \tl{\lam{x_2':\tau_2'}{t_2}}:\tl{\tau_2'\to\tau_2} 
\end{align*} then
\begin{multline*}
  \tle \vdash \tl{(\lam{x_1':\tau_1'}{t_1})+(\lam{x_2':\tau_2'}{t_2})} \arbsim_n \\
  \ul{\compsu{\lam{x_1':\tau_1'}{t_1}}+\compsu{\lam{x_2':\tau_2'}{t_2}}} : \tl{(\tau_1'\to\tau_1)\times(\tau_2'\to\tau_2)}.
\end{multline*}
\end{lem}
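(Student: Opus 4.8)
\emph{Proof plan.} The plan is to unfold the two linking expressions (\Cref{def:linking}) and the modular compiler (\Cref{def:comp-su-mod}) and observe that both sides have the shape of a fixpoint combinator applied to the unit value: writing $\tl{P}$ for $\tl{(\tau_1'\to\tau_1)\times(\tau_2'\to\tau_2)}$, the source term is $\tl{\fix{\Unit\to P}~(\lam{p:\Unit\to P}{\lam{\_:\Unit}{\pair{L_1,L_2}}})~\unitv}$ and the target term is $\ul{\ulfix~(\lam{p}{\lam{\_}{\pair{M_1,M_2}}})~\unitv}$, where $L_i$ (resp.\ $M_i$) is obtained by plugging $\tl{\lam{x_1':\tau_1'}{t_1}}$ and $\tl{\lam{x_2':\tau_2'}{t_2}}$ (resp.\ $\compsu{\lam{x_1':\tau_1'}{t_1}}$ and $\compsu{\lam{x_2':\tau_2'}{t_2}}$) into the $\eta$-expanded component slots of the linker template. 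The goal is thus an instance of relatedness of two fixpoint combinators applied to $\tl{\unitv}/\ul{\unitv}$, and I would prove it using the congruence (compatibility) lemmas of the logical relation of \cref{sec:logical-relations} --- compatibility for application, pairs, projections, $\lambda$-abstraction and for $\tlfix/\ulfix$ at a function type --- together with correctness of the modular compiler (\Cref{thm:conftransp}) and the standard closure-under-reduction (``anti-reduction'') and bind/evaluation-context properties of a biorthogonal logical relation. The entire argument is uniform in $\genlogrel\in\{\lesssim,\gtrsim\}$ because it appeals only to these ingredients, all of which hold for both instantiations of $\arbsim$.

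Concretely, application-compatibility (with $\tl{\unitv}\arbsim_n\ul{\unitv}$ at $\tl{\Unit}$) reduces the goal to relatedness of the two $\tlfix/\ulfix$-applications at $\tl{\Unit\to P}$; the $\tlfix/\ulfix$-compatibility lemma then reduces that to relatedness of the two body lambdas at $\tl{(\Unit\to P)\to(\Unit\to P)}$; two uses of $\lambda$-compatibility (substituting related values $\tl q\arbsim_n\ul q$ at $\tl{\Unit\to P}$ for $p$, then the vacuous substitution for $\_$) and one use of pair-compatibility reduce it further to the two symmetric obligations, that $L_1$ with $p:=\tl q$ is related to $M_1$ with $p:=\ul q$ at $\tl{\tau_1'\to\tau_1}$, and likewise for the second components at $\tl{\tau_2'\to\tau_2}$ (using the second typing hypothesis). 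For the first: one more $\lambda$-compatibility step (introducing related arguments $\tl w\arbsim_n\ul w$ at $\tl{\tau_1'}$ for the outer $\eta$-binder) leaves the two bodies to relate at $\tl{\tau_1}$. Since $\tl q\arbsim_n\ul q$ at $\tl{\Unit\to P}$, application- and projection-compatibility give $\tl{(q~\unitv).2}\arbsim_n\ul{(q~\unitv).2}$ at $\tl{\tau_2'\to\tau_2}$, so the bind lemma (the operand of the inner application is evaluated first) lets me replace this subterm by an arbitrary related pair of values $\tl v\arbsim_n\ul v$ at $\tl{\tau_2'\to\tau_2}$; anti-reduction then $\beta$-reduces both sides so that, after one further application-compatibility step with $\tl w\arbsim_n\ul w$, the remaining goal is precisely that $(\tl{\lam{x_1':\tau_1'}{t_1}})\tlsub{v}{x_2}$ is related to $(\compsu{\lam{x_1':\tau_1'}{t_1}})\ulsub{v}{x_2}$ at $\tl{\tau_1'\to\tau_1}$. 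But this is exactly \Cref{thm:conftransp} instantiated at the singleton substitution $\tl{x_2}\mapsto\tl v$, $\ul{x_2}\mapsto\ul v$ --- legitimate because $(\tl v,\ul v)\in\valrel{\tau_2'\to\tau_2}_\genlogrel$, hence this substitution lies in $\envrel{x_2:\tau_2'\to\tau_2}_\genlogrel$, and because $\compsu{\lam{x_1':\tau_1'}{t_1}}$ has exactly the single free variable $\ul{x_2}$ introduced by the extra $\conf{\tau_2'\to\tau_2}$ wrapper of \Cref{def:comp-su-mod}, so the target linker is applicable to it.

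The step I expect to be the main obstacle is the fixpoint itself: the two terms being related are genuinely recursively defined (by $\tlfix$ in \pstlc and, after erasure, by the Z combinator $\ulfix$ in \ulc), and the load-bearing step is the $\tlfix/\ulfix$-compatibility lemma. Its proof is a L\"ob-style step-indexed induction on the step budget $\CWstepsfun{\WW}$ of the world --- assume the two fixpoints related in all strictly smaller worlds, prove them related in $\WW$ --- and this goes through only because each unfolding of the recursion in \Cref{def:linking} costs at least one reduction step, which is precisely why the linker $\eta$-expands both the component bodies (the outer $\tl{\lam{x_1':\tau_1'}{\cdot}}$/$\tl{\lam{x_2':\tau_2'}{\cdot}}$) and the pair (the $\tl{\lam{\_:\Unit}{\cdot}}$ around it), and why the recursive references are reached only through $\tl{(p~\unitv).i}$; the step consumed by each unfolding lines up exactly with the $\later$ modality guarding the components in $\valrel{\tlhat{\tau_1}\times\tlhat{\tau_2}}_\genlogrel$ and the codomain in $\valrel{\tlhat{\tau'}\to\tlhat{\tau}}_\genlogrel$. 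If a compatibility lemma for $\tlfix/\ulfix$ at function types is not already available among the standard properties of \cref{sec:logical-relations}, it has to be established inline by this induction. The remaining work --- tracking the handful of administrative reduction steps on each side and checking that the linker templates line up syntactically with the hypothesis of \Cref{thm:conftransp} --- is routine.
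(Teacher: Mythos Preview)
Your proposal is correct and follows essentially the same approach as the paper: unfold the two linkers and discharge the goal by repeated application of the logical relation's compatibility lemmas (for $\tlfix/\ulfix$, $\lambda$, application, pairing, projection, $\tl{\unitv}$), with the leaf obligation being correctness of the modular compiler for each component. The only cosmetic difference is that you appeal to \Cref{thm:conftransp} at the leaves whereas the paper cites the more primitive \Cref{lem:extra-conf}; since \Cref{thm:conftransp} is itself obtained from \Cref{lem:extra-conf} plus $\lambda$- and $\ul{\funname{protect}}$-compatibility, the two factorings are equivalent.
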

Given that source and target linking are defined to be syntactically duals, this proof is a mere application of several compatibility lemmas and \Cref{lem:extra-conf}.


\section{Mechanically verified proof} \label{sec:coq-proof}

Proofs of full abstraction for non-trivial compiler passes are very often only
given on paper. The reason is that they are quite involved and require a
significant effort to mechanically verify. This is unfortunate because the
proofs are often lengthy and non-trivial, so they would benefit from the extra
assurance offered by mechanical verification. In this section, we report on our
succesful mechanical verification using Coq of the full abstraction proof
presented in \cref{sec:seccompopen}. 

This proof has been a significant undertaking ($\pm$ 2-3 man-months, 11k lines
of code excluding comments) resulting in a medium-sized coq development,
available online\footnote{\url{http://people.cs.kuleuven.be/dominique.devriese/permanent/facomp-stlc-coq.tar.xz}}. The proof only assumes a single axiom: functional
extensionality, i.e. the property that two functions are provably equal if they
produce the same result for all inputs. 
This is used in the substitution machinery and for proving propositional equality for the folding/unfolding of the well-founded fixpoints that define the logical relations.
In this section, we provide an overview
of the construction of the proof and some discussion of our experience
constructing it.

The Coq version of the proof largely follows the structure of the ``on-paper''
version. As is often the case in proofs about properties of lambda calculi, a
lot of the overhead arises from the definition of the syntax and more precisely,
its use of variables and variable binding. Since the POPLmark
challenge~\citep{Aydemir2005}, the literature has seen a lot of proposals for
encoding such definitions in proof assistants. However, our requirements on the
encoding go beyond those of the challenge in some places, so that our options
for the encoding were a bit more limited. In particular, we require a notion
of \emph{simultaneous closing substitutions} (i.e. substitutions that
simultaneously substitute closed values for \emph{all} the free variables of a
term), we need to deal with two separate lambda calculi (\pstlc and \ulc)
without duplicating all the binding infrastructure and finally, we want to keep
our options open for extending the proof to System~F (see \cref{sec:disc}).

Because of these requirements, we decided to rely on traditional but proven
technology and we chose a standard encoding using de~Bruijn indices (with a
separate well-scopedness judgement). 
We used the UniDB library\footnote{Available at: \href{https://github.com/skeuchel/unidb-coq}{https://github.com/skeuchel/unidb-coq}} for de Bruijn encodngs in Coq (developed by the last author using his experiences working on the Knot framework~\cite{Keuchel2016}). 
UniDB is instantiated with language specific traversal functions of our source and target language and properties of these traversal, from which the variable binding boilerplate like simultaneous substitution and its properties are derived. 
The library defines a set of Coq type classes as an interface that allows us to use the same notation for the source and target language. 
On top of what UniDB provides, we have constructed a number of Ltac tactics for automating the construction of language-specific well-typedness, well-scopedness and evaluation proofs.

Encoding the logical relation did not cause major concerns. The step-indexing in
the LR could be expressed without many problems using a library for well-founded
induction over natural numbers from Coq's standard library. We did not encounter
major problems in the original on-paper proof, although we did run into some minor
problems, like the need to be more explicit about the required closedness of
untyped terms in the value relation.

In summary, this mechanical verification obviously strengthens our trust in the
proof. However, we point out that, to the best of our knowledge, it is the first
full proof of full abstraction for a non-trivial compiler pass (see
\cref{sec:related-work} for a discussion of related work). As such, it
demonstrates that such proofs are within reach of current tools like Coq,
although there is room for improvement: 2-3 man-months is more effort than we
would hope to spend for a proof that we have already done in much detail on
paper. Conversely, the fact that our proof did not uncover any major problem and
most of the effort went into dealing with variable binding, well-typedness
proofs etc., also shows that this sort of logical relation-based proofs of
full abstraction lend themselves well to mechanisation and the level of detail
of our original proof is a good level to aim for.

\section{Discussion and Future Work} \label{sec:disc}

Our interest in fully-abstract compilation comes from a security
perspective. We think that a fully-abstract compiler from realistic
source languages to a form of assembly that is efficiently executable
by processors has important security applications (combining trusted
and untrusted code at the assembly level and compartmentalising
applications). So far, it remains unclear precisely which security
properties are preserved by fully abstract compilers, although it
seems that at least important security properties like
noninterference~\citep{nonintfree} are. Unless targeting typed assembly
language~\citep{tal}, a crucial step of a secure compiler is a form of
secure type erasure. The contribution of this paper is mostly the
proof technique that proves the type erasure step secure. We intend to
reuse this proof technique in other settings.

There are a number of important problems that need to be solved in order to develop a realistic fully-abstract compiler.
Several widely-implemented high-level language features present significant challenges: parametric polymorphism, (higher-order) references, exceptions etc.
Generally, we believe that low-level assembly languages should be defined that are not only efficiently executable but also provide sufficient abstraction features to enable fully abstract compilation of such standard programming language features.
For now, it remains an open question whether this is feasible.
Let us zoom in on some of these features in more detail.
A long-standing open problem is fully-abstract compilation of parametric polymorphism to a form of operational sealing primitives~\citep{seal,Matthews2008ParaPoly,Neis:2009:NP:1596550.1596572}.
More concretely, several researchers have developed interesting results about fully-abstract compilation from System~F to \lseal (an untyped lambda calculus with sealing primitives), but a fully-abstract compiler in this setting has so far only been conjectured.
We believe that the problem is quite related to the one tackled in this paper.
Without providing details (for space reasons), an exact back-translation from \lseal to System~F seems possible, but only if we assume a form of generally recursive type constructors of kind $\ast\to\ast$, which we cannot add to System~F without causing other problems for the compilation.
We have been working on a proof using an approximate back-translation, but we ran into an unexpected problem: the conjectured full abstraction of Sumii-Pierce's compiler is false.
We will report on this further in future work.

In other settings, it is also not clear whether it is possible to construct a fully-abstract compiler.
For example, if we add typed, higher-order references to \pstlc and untyped references to \ulc, it is not clear if a fully-abstract compiler can be devised.
The problem is essentially to choose a representation for typed references and a way of manipulating them that reconciles a number of requirements: (1) trusted code reading from a reference always produces a type-correct value, (2) trusted code writing a type-correct value to a reference always works, (3) untrusted code should be able to read/write type-correct values from references, (4) dynamic type checks or wrappers may only be added where the context could also choose to fail for other reasons (i.e. not at the time of reading/writing a reference by trusted code), (5) efficiency: we do not want to check the contents of all references every time control is passed from trusted code to the context.
Several obvious solutions do not work: representing references as objects with read and write methods violates requirement (4), just checking the contents of a reference when it is received from the context is not enough to guarantee (1) and (2).
We intend to explore a solution based on trusted but abstract read/write/alloc methods (using sealing primitives as used for parametric polymorphism) but this remains speculation for the moment.

Another interesting problem when compiling to an assembly language is the enforcement of well-bracketed control flow.
The question is essentially how to represent return pointers at the assembly level.
Even if we prevent functions from accessing parts of the stack and only give them access to an opaque invokable return pointer, they still have ways to misuse them~\citep{mfac}.
Imagine a trusted assembly function $f$ invoking an untrusted $g$.
Additionally, assume that $g$ in turn re-invokes $f$ and $f$ simply re-invokes $g$ again.
Now $g$ might attempt to invoke the wrong return pointer, returning on its first invocation without first returning on the second.
Such an attack breaks the well-bracketedness of control flow that trusted code may rely on in languages without call/cc primitives~\citep{Dreyer:2010:IHS:1863543.1863566}.
\citep{ahmedCPS} have demonstrated a solution for this problem which exploits parametric polymorphism to enforce the invocation of the correct continuation, and it is interesting to see if their work can be reused as an intermediate step on the way to assembly language.

On a technical level, we expect few problems for applying our
technique of approximate back-translation to all of these settings.
The Hur-Dreyer-inspired  cross-language logical relations can be applied in diverse
settings including ML and assembly and support references (through
Kripke worlds), parametric polymorphism (through quantification over
abstract type interpretations as relations) and well-bracketed control
flow guarantees (through public/private transitions in the transition
systems stored in the worlds). We have also shown in this paper that
they can be easily modified to an asymmetric setting.

\section{Related Work} \label{sec:related-work}

Secure compilation through full-abstraction was pioneered by
\citet{abadiFa} and successfully applied to many different
settings~\citep{scoo-j,fstar2js,nonintfree,ahmedCPS,Ahmed:2008:TCC:1411203.1411227,depIntoPar,fixTse,protOnLayRand,Jagadeesan:2011:LMV:2056311.2056556,Riecke1993:MSC:4117452,Ritter1995FullyAbs,Mitchell1993141,McCusker1996FullAbs,Smith1998coverage,adriaanaplas}.

Recently, \citet{faEHM} and \citet{gcFA} have argued against the use of the mere existence of
fully-abstract translations as a measure of language expressiveness, because
very often fully abstract translations exist but are in some sense degenerate,
uninteresting and/or unrealistic. These arguments are not directly relevant to our work,
because we are not interested in the mere existence of a fully abstract compiler
as a measure of language expressiveness, but we prove the fully abstractness of
a specific, realistic compiler.

Some secure compilation works prove compiler full-abstraction using logical relations.
\citet{ahmedCPS} and \citet{Ahmed:2008:TCC:1411203.1411227} proved that typed closure conversion and CPS transformation are fully-abstract when compiling from System~F and the simply-typed \lc (respectively) to System~F.
\citet{depIntoPar} started a line of work to compile the dependency core calculus of \citet{dcc} (DCC) into System~F, effectively proving that non interference can be encoded with parametricity.
They achieve a property analogous to fully-abstract compilation where contextual equivalence is replaced with non-interference.
Due to an imprecision in their proof, the result of Tse and Zdancewic does not hold; \citet{fixTse} refined their result for a weaker form of DCC.
A fully-abstract translation from DCC to System~F was provided by \citet{nonintfree}, and that is the closest work to what is presented here.
The formal machinery adopted by Bowman and Ahmed does appear a bit heavier than the one presented here.
Specifically, we do not need a new logical relation to prove well-foundedness of the back-translation.
The secure compilation of DCC to System~F is quite different from our setting, since our target language is untyped and our source and target languages are both non-terminating.

In a paper that is closely related to our work, \citet{New:2016:FAC:2951913.2951941} prove full-abstraction of closure conversion of a simply-typed lambda calculus with recursive types into a simply typed language with exceptions and an effect system to track exceptions.
To achieve this, they apply a back-translation using a universal type, similar to our UVal.
They present a very interesting comparison to a previous version of this work.
They explain how one can see our approximation of target-language terms as an underapproximation because we only back-translate a part of the behaviour of the term.
While they do not need this under-approximation, because their source language includes recursive types, they apply an over-approximation because their universal type can embed more than just their target language: like our UVal, it can embed the full untyped lambda calculus, rather than just the subset of terms that are well-typed in the target language.

Independently from our work, the idea of using an \emph{approximate} back-translation was also mentioned recently by \citet{SchmidtSchauss201598}.
In this work, Schmidt-Schau{\ss} et al. present a framework for defining and reasoning about fully abstract compilation and related notions in a wide variety of languages.
Using the name \emph{families of translations}, they define what we call an approximate back-translation (in relation to full abstraction of a language embedding).
They apply the idea to show that a simply-typed lambda calculus without fix but with stuck terms can be embedded into a simply-typed lambda calculus with fix.
The idea is to use an approximate back-translation that unrolls applications of fix $n$ times in the $n$th approximation.
The proof is not very detailed, but seems a lot simpler than ours.
Partly this is because the proof addresses a simpler problem, but the idea of approximate back-translation also seems simpler to use for a language embedding.
This suggests that the proof in this paper may be simplified by factoring our compiler into two separate compilation passes: (1) embedding \pstlc into \pstlc with recursive types (using an approximate back-translation to prove full abstraction of the embedding) and (2) compiling \pstlc with recursive types into \ulc as we do here (using a full, non-approximate back-translation to prove full abstraction).

Many other secure compilation works prove full-abstraction by replacing target-level contextual equivalence with another equivalent equivalence (most times it is trace equivalence or bisimilarity)~\citep{fstar2js,protOnLayRand,Jagadeesan:2011:LMV:2056311.2056556,scoo-j}.
These works rely on additional results of the equivalence used for full-abstraction to hold, and this can complicate and lengthen proofs relying on this other technique.
Earlier, \citet{McCusker1996FullAbs} has shown that proving
full abstraction of a compiler can be simplified by limiting the
back-translation to contexts that are in a certain sense
\emph{compact}. This is related to our approximate back-translations,
though not quite the same. A downside of McCusker's approach is that
it does not always seem clear how to characterize the compact elements
in a language.

To the best of our knowledge, the only paper on secure compilation that comes with a mechanised proof of full abstraction (or a variant of it), is by \citet{catalin}.
They propose and study \emph{Secure Compartmentalizing Compilation}: a variant of full abstraction that supports unsafe source languages (where full abstraction cannot be expected to hold for components that exhibit undefined behaviour), and includes a notion of modularity.
They report on a Coq mechanization of some of their results, but it is not a ``full mechanization'' like ours, in the sense that they keep many lemmas and results as unproven assumptions.

As mentioned in \cref{sec:introduction}, the presented proof technique borrows from recent results in compiler correctness~\citep{Hur:2011:KLR:1926385.1926402,realizability,bistcc}.
These results build cross-language logical relation based on a common language specification in order to prove compiler correctness.
\citet{bistcc} provided a correct compiler from a call-by-value $\lambda$-calculus as well as for System F with recursion to a SECD machine~\citep{realizability}.
\citet{Hur:2011:KLR:1926385.1926402} devised a correct compiler from an idealised ML to assembly.
The techniques devised in these works were further developed into Relational Transition Systems (RTS) and Parametric Inter-Language Simulations (PILS) in order to prove both vertically- and horizontally-composable compiler correctness~\citep{marriage,pils}.
A different approach to cross-language relations could have been adopting a Matthews and Findler-style multi-language semantics, where source and target language are combined~\citep{Matthews:MLS}.
For example, \citet{perconti} devised a two-step correct compiler for System F with existential and recursive types to typed assembly language using multi-language logical relations.
As the focus of this work is compiler full-abstraction for a compiler that is not multi-pass, there was no necessity to use RTS nor multi-language systems.
However, were our compiler to be multi-pass, we would have had to resort to a different proof technique like those described above.
This is because even if compiler full abstraction scales to multi-pass compilers (i.e., it is vertically composable), compiler correctness proven with logical relations does not.
As compiler full abstraction ought to always be accompanied by a compiler correctness result (\Cref{thm:compiler-correctness} in our case), and since both can be proven with a single proof technique, we believe that to prove compiler full abstraction for a multipass compiler one would have to port the findings of this paper to one of the techniques above.

Some elements of our proof technique are reminiscent of techniques from the
field of denotational semantics. First, our family of types $\tl{\UVal\nn}$ can
be seen as a kind of syntactical version of an iteratively constructed Scott
model for the untyped lambda calculus~\citep{Scott1976Domains}. In fact, the
analogue of our $\tl{\UVal\nn}$ used by \citet{New:2016:FAC:2951913.2951941} extends
this correspondence to a language with effects (using an exception monad to
model a target language with exceptions). We note also that using a family of
finite approximations (like our $\tl{\UVal\nn}$ types) to interpret a recursive type
(like the type $\UVal$ discussed in the introduction) is quite standard in
denotational semantics~\citep{MacQueen:1984:IMR:800017.800528}.


\section{Conclusion} \label{sec:conclusion}
This paper presented a novel proof technique for proving compiler full-abstraction based on asymmetric, cross-language logical relations.
The proof technique revolves around an approximate back-translation from target terms (and contexts) to source terms (and contexts).
The back-translation is approximate in the sense that the context generated by the back-translation may diverge when the target-level counterpart would not, but not vice versa.
The proof technique is demonstrated for a compiler from a simply-typed \lc without recursive types to the untyped \lc.
That compiler is proven to be fully-abstract in Coq, and this is the first such result for fully abstract compilation proofs.
Although logical relations have been used for full-abstraction proofs, this is the first usage of cross-language logical relations for compiler full-abstraction targeting an untyped language.
We believe the techniques developed in this paper scale to languages with more advanced functionalities and they can be used to prove compiler full-abstraction in richer settings.

\section*{Acknowledgements}
Dominique Devriese holds a Postdoctoral Fellowship from the Research Foundation - Flanders (FWO). 
This research is partially funded by project funds from the Research Fund KU Leuven and the Research Foundation - Flanders (FWO).

\bibliographystyle{abbrvnat}   
\bibliography{refs.bib}

\end{document}